\renewcommand{\d}{\partial}
\newtheorem{remark}{Remark}
\newtheorem{proposition}{Proposition}
\newtheorem{lemma}{Lemma}
\newtheorem{definition}{Definition}
\newtheorem{theorem}{Theorem}
\newcommand{\I}{\mathbb{I}}
\newcommand{\bS}{\mathbb{S}}
\renewcommand{\d}{\mathrm{d}}
\newcommand{\Exp}[1]{\operatorname{e}^{#1}}
\newcommand{\g}{\mathfrak{g}}
\newcommand{\Cc}{\mathcal{C}}
\renewcommand{\L}{\mathcal L}
\newcommand{\W}{\mathcal W}
\newcommand{\Z}{\mathbb Z}
\newcommand{\C}{\mathbb C}
\newcommand{\N}{\mathbb N}
\def\res{\mathop{\rm Res}\nolimits}
\renewcommand{\L}{\mathcal{L}}
    \newcommand{\Rmnum}[1]{\expandafter\@slowromancap\romannumeral #1@}
\def\res{\mathop{\rm Res}\nolimits}
\def\({\left(}
\def\){\right)}
\def\[{\begin{eqnarray}}
\def\]{\end{eqnarray}}
\def\d{\partial}
\newcommand{\La}{\Lambda}
\begin{document}

\title{Sato theory on the $q$-Toda hierarchy and its extension}

\author{
Chuanzhong Li\  } \dedicatory {  Department of Mathematics and Ningbo Collabrative Innovation Center of Nonlinear Harzard System of Ocean and Atmosphere,\\
 Ningbo university, Ningbo 315211, China,\\
 lichuanzhong@nbu.edu.cn}

\date{}

\maketitle

\begin{abstract}
In this paper, we construct the Sato theory including the Hirota bilinear equations and tau function of a new  $q$-deformed Toda hierarchy(QTH). Meanwhile the Block type additional symmetry and bi-Hamiltonian structure of this hierarchy are given. From  Hamiltonian tau symmetry, we give another definition of tau function of this hierarchy. Afterwards, we extend the $q$-Toda hierarchy to an extended $q$-Toda hierarchy(EQTH) which satisfy a generalized Hirota quadratic equation in terms of generalized vertex operators. The Hirota quadratic equation might have further application in Gromov-Witten theory. The corresponding Sato theory including multi-fold Darboux transformations of this extended hierarchy is also constructed. At last, we  construct the multicomponent extension of the $q$-Toda hierarchy and show the integrability including its  bi-Hamiltonian structure, tau symmetry and conserved densities.
\end{abstract}

Mathematics Subject Classifications(2000).  37K05, 37K10, 37K20.\\
Keywords:  $q$-Toda hierarchy, Hirota bilinear equations, Vertex operators, bi-Hamiltonian structure, extended $q$-Toda hierarchy, Darboux transformation,
multicomponent $q$-Toda hierarchy.\\

\tableofcontents

\section {Introduction}

The Toda lattice and KP hierarchy are  completely integrable systems which have many important applications in mathematics and physics including the theory of Lie algebra representation, orthogonal polynomials and  random
matrix model  \cite{Toda,Todabook,UT,witten,dubrovin}. KP and Toda systems have many kinds of reduction or extension, for example BKP, CKP hierarchy, extended Toda hierarchy (ETH)\cite{CDZ,M}, bigraded Toda hierarchy (BTH)\cite{C}-\cite{ourBlock} and so on.

The $q$-calculus ( also called quantum calculus)  traces
back to the early 20th century. Many mathematicians have  important
works in the area of $q$-calculus  and $q$-hypergeometric series \cite{qHyper,qseries}. The
$q$-deformation of  classical nonlinear integrable system  started in 1990's by means of
$q$-derivative $\partial_q $   instead of usual derivative with respect to $x$  in the classical system. As we know, the
$q$-deformed integrable system reduces to a classical integrable
system when $q$ goes to 1.

Several $q$-deformed integrable systems have been presented, for example the
$q$-deformed Kadomtsev-Petviashvili ($q$-KP) hierarchy is a
subject of intensive study in the literature \cite{mas}-\cite{hetianqkp}.
 Basing on a similar $q$-operator as $q$-KP hierarchy in \cite{frenkel,iliev}, the $q$-Toda equation was studied in \cite{ZTAK,Silindir} but not for a whole hierarchy. This paper will be devoted to the further studies on the whole $q$-Toda hierarchy(QTH) and its extended hierarchy with logarithmic flows.

Adding additional logarithmic flows to the Toda lattice hierarchy,
it becomes the extended Toda hierarchy\cite{CDZ} which governs the Gromov-Witten invariant of $CP^1$. Therefore what is the application in Gromov-Witten theory of the q-deformed extended Toda hierarchy becomes a natural question which is one motivation for us to do this work.
 The extended bigraded Toda
hierarchy(EBTH)\cite{ourJMP} is the extension of the bigraded Toda
hierarchy (BTH) which includes additional logarithmic flows\cite{C,KodamaCMP}. The Hirota bilinear equation of the EBTH was equivalently constructed in our early paper. One can also consider the bigraded extension of the extended QTH which might be included in our future work.

The multicomponent 2D Toda hierarchy
 was considered from the point of view of the Gauss-Borel factorization problem, non-intersecting Brownian motions and matrix Riemann-Hilbert problem \cite{manasInverse2}-\cite{manas}. In fact the multicomponent 2D Toda hierarchy in \cite{manasinverse} is a periodic reduction of the bi-infinite matrix-formed two dimensional Toda hierarchy. The coefficients of the multicomponent 2D Toda hierarchy take values in complex finite-sized
matrices. In this paper, we also construct the multicomponent extension of the $q$-Toda hierarchy and show the integrability including its  bi-Hamiltonian structure, tau symmetry.

This paper is arranged as follows. In the next section we recall a factorization problem and construct the Lax  equations of the  $q$-Toda hierarchy. In Section 3-7,
we will give the Sato theory of the $q$-Toda hierarchy (QTH) including Hirota bilinear equations, the tau function, vertex operators and Hirota quadratic equations.
Basing on the double dressing structure of this hierarchy, the Block type Lie symmetry \cite{ourBlock,torus} of the QTH was given in Section 8. In Section 9-11, we generalize the Sato theory of the $q$-Toda hierarchy to the extended $q$-Toda hierarchy(EQTH). To prove the integrability of this new extended hierarchy, the bi-Hamiltonian structure and tau symmetry of the EQTH are constructed. In Section 12, the multi-fold Darboux transformation of the EQTH was given which can produce new solutions from seed solutions as used in \cite{Hedeterminant,rogueHMB,EMTH,EZTH}. In Section 13-15, we  construct the multicomponent extension of the $q$-Toda hierarchy and show the integrability including the  bi-Hamiltonian structure, tau symmetry and conserved densities of this matrix hierarchy.

\section{Factorization and dressing operators}

Now we will consider the the shift operator $\Lambda_q$ acting on these functions as
$(\Lambda_q g)(x):=g(qx)$, i.e. $\Lambda_q :=e^{\epsilon x\d_x}, q=e^{\epsilon}$. A Left
multiplication by  $X$ is as $X\Lambda_q^j$, $(
X\Lambda_q^j)(g)(x):=X(x)\circ g(q^jx)$ with defining the product
$(X(x)\Lambda_q^i)\circ(Y(x)\Lambda_q^j):=X(x)Y(q^ix)\Lambda_q^{i+j}.$

The Lie algebra\begin{align*}
  \g&=\Big\{\sum_{j}X_j(x)\Lambda_q^j\Big\},
\end{align*} has the following important splitting
\begin{gather}\label{splitting}
\g=\g_+\circ\g_-,
\end{gather}
where
\begin{align*}
  \g_+&=\Big\{\sum_{j\geq 0}X_j(x)\Lambda_q^j,\Big\},&
  \g_-&=\Big\{\sum_{j< 0}X_j(x)\Lambda_q^j,\Big\}.
\end{align*}

For the corresponding Lie group $G$ whose Lie algebra is $\g$, the splitting
\eqref{splitting} leads us to consider the following factorization of
$g\in G$
\begin{gather}\label{fac1}
g=g_-^{-1}\circ g_+, \quad g_\pm\in G_\pm
\end{gather}
where $G_\pm$ have $\g_\pm$ as their Lie algebras. $G_+$
is the set of invertible linear operators  of the
form $\sum_{j\geq 0}g_j(x)\Lambda_q^j$; while $G_-$ is the set of
invertible linear operators of the form
$1+\sum_{j<0}g_j(x)\Lambda_q^j$.
Then the set $\g$ of Laurent
series in $\Lambda_q$ as an associative algebra is a
Lie algebra under the standard commutator. Similar as \cite{Bergvelt}, the factorization \eqref{fac1}  belong to the big cell [4] and the
factorization is defined only locally to avoid the generation of additional problems connected
with these local aspects.

 Now we
introduce  the following free operators $ W_0,\bar  W_0\in G$
\begin{align}
 \label{def:E}  W_0&:=\Exp{\sum_{j=0}^\infty
 t_{j}\frac{\Lambda_q^j}{\epsilon j!}}, \\
\label{def:barE}   \bar W_0&:=\Exp{\sum_{j=0}^\infty
   t_{j}\frac{\Lambda_q^{-j}}{\epsilon j!}},
\end{align}
where $t_{j}\in \C$
will play the role of continuous times.

 We   define the dressing operators $W,\bar W$ as follows
\begin{align}
\label{def:baker}W&:=S\circ W_0,\ \  \bar W:=\bar S\circ \bar  W_0,\quad S\in G_{-},\ \bar S\in G_{+}.
\end{align}
Given an element $g\in G$ and denote $t=(t_{j}),  j\mathbb\in \N$, one can consider the factorization problem  in $G$.
\begin{gather}
  \label{facW}
  W\circ g=\bar W,
\end{gather}
i.e.
 the factorization problem
\begin{gather}
  \label{factorization}
  S(t)\circ W_0\circ g=\bar S(t)\circ\bar W_0.
\end{gather}
Observe that  $S,\bar S$ have expansions of the form
\begin{gather}
\label{expansion-S}
\begin{aligned}
S&=1+\omega_1(x)\Lambda_q^{-1}+\omega_2(x)\Lambda_q^{-2}+\cdots\in G_{-},\\
\bar S&=\bar\omega_0(x)+\bar\omega_1(x)\Lambda_q+\bar\omega_2(x)\Lambda_q^{2}+\cdots\in
G_{+}.
\end{aligned}
\end{gather}
Also we define the symbols of $S,\bar S$ as  $\bS,\bar \bS$
\begin{gather}
\begin{aligned}
\bS&=1+\omega_1(x)\lambda^{-1}+\omega_2(x)\lambda^{-2}+\cdots,\\
\bar \bS&=\bar\omega_0(x)+\bar\omega_1(x)\lambda+\bar\omega_2(x)\lambda^{2}+\cdots.
\end{aligned}
\end{gather}

The inverse operators $S^{-1},\bar S^{-1}$ of operators $S,\bar S$ have expansions of the form
\begin{gather}
\begin{aligned}
S^{-1}&=1+\omega'_1(x)\Lambda_q^{-1}+\omega'_2(x)\Lambda_q^{-2}+\cdots\in G_{-},\\
\bar S^{-1}&=\bar\omega'_0(x)+\bar\omega'_1(x)\Lambda_q+\bar\omega'_2(x)\Lambda_q^{2}+\cdots\in
G_{+}.
\end{aligned}
\end{gather}
Also we define the symbols of $S^{-1},\bar S^{-1}$  as  $\bS^{-1},\bar \bS^{-1}$ as following
\begin{gather}
\begin{aligned}
\bS^{-1}&=1+\omega'_1(x)\lambda^{-1}+\omega'_2(x)\lambda^{-2}+\cdots,\\
\bar \bS^{-1}&=\bar\omega'_0(x)+\bar\omega'_1(x)\lambda+\bar\omega'_2(x)\lambda^{2}+\cdots.
\end{aligned}
\end{gather}

 The Lax  operators $\L\in G$ of the $q$-deformed Toda hierarchy
 are defined by
\begin{align}
\label{Lax}  \L&:=W\circ\Lambda_q\circ W^{-1}=\bar W\circ\Lambda_q^{-1}\circ \bar W^{-1},
\end{align}
and
have the following expansions
\begin{gather}\label{lax expansion}
\begin{aligned}
 \L&=\Lambda_q+U(x)+V(x)\Lambda_q^{-1}.
\end{aligned}
\end{gather}
 In fact the Lax  operators $\L\in G$
 are also be equivalently defined by
\begin{align}
\label{two dressing}  \L&:=S\circ\Lambda_q\circ S^{-1}=\bar S\circ\Lambda_q^{-1}\circ \bar S^{-1}.
\end{align}

\section{ Lax equations of QTH}

In this section we will use the factorization problem \eqref{facW} to derive  Lax equations.
Let us first introduce some convenient notation on the operators $B_{j}$ defined as follows
\begin{align}\label{satoS}
\begin{aligned}
B_{j}&:=\frac{\L^{j+1}}{(j+1)!}.
\end{aligned}
\end{align}

Now we give the definition of the  $q$-Toda hierarchy(QTH).
\begin{definition}The  $q$-Toda hierarchy is a hierarchy in which the dressing operators $S,\bar S$ satisfy following Sato equations
\begin{align}
\label{satoSt} \epsilon\partial_{t_{j}}S&=-(B_{j})_-S,& \epsilon\partial_{t_{j}}\bar S&=(B_{j})_+\bar S.\end{align}
\end{definition}
Then one can easily get the following proposition about $W,\bar W.$

\begin{proposition}The dressing operators $W,\bar W$ are subject to following Sato equations
\begin{align}
\label{Wjk} \epsilon\partial_{t_{j}}W&=(B_{j})_+ W,& \epsilon\partial_{t_{j}}\bar W&=(B_{j})_+\bar W.  \end{align}
\end{proposition}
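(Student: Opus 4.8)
The plan is to differentiate the defining relation $W=S\circ W_0$ from \eqref{def:baker}, substitute the Sato equation \eqref{satoSt} for $S$, and convert the $t_j$-derivative of the free operator $W_0$ into $B_j$ using the first representation $\L=S\circ\Lambda_q\circ S^{-1}$ in \eqref{two dressing}. The equation for $\bar W$ then drops out of the factorization problem \eqref{facW} with no further computation, which is the sense in which the proposition follows ``easily.''

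First I would treat $W$. Applying $\epsilon\partial_{t_j}$ to $W=S\circ W_0$ and using the Leibniz rule gives
\begin{equation*}
\epsilon\partial_{t_j}W=\bigl(\epsilon\partial_{t_j}S\bigr)\circ W_0+S\circ\bigl(\epsilon\partial_{t_j}W_0\bigr).
\end{equation*}
The first summand equals $-(B_j)_-\circ S\circ W_0=-(B_j)_-W$ by \eqref{satoSt}. For the second, note that the generators $\Lambda_q^k$ appearing in the exponent of $W_0$ in \eqref{def:E} all commute, so differentiating the exponential simply brings down the corresponding single generator; after inserting $S^{-1}\circ S$ and invoking $\L=S\circ\Lambda_q\circ S^{-1}$, this summand becomes $B_j\circ S\circ W_0=B_j\,W$. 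Hence $\epsilon\partial_{t_j}W=-(B_j)_-W+B_j\,W$, and the splitting $B_j=(B_j)_++(B_j)_-$ attached to \eqref{splitting} yields $\epsilon\partial_{t_j}W=(B_j)_+W$.

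For $\bar W$ I would not recompute from scratch but instead differentiate the factorization problem \eqref{facW}, namely $\bar W=W\circ g$. Since $g\in G$ is a fixed element independent of the times $t=(t_j)$, the Leibniz rule gives $\epsilon\partial_{t_j}\bar W=\bigl(\epsilon\partial_{t_j}W\bigr)\circ g=(B_j)_+W\circ g=(B_j)_+\bar W$, using the equation for $W$ just established. One may alternatively verify this directly from the Sato equation \eqref{satoSt} for $\bar S$ together with the second representation $\L=\bar S\circ\Lambda_q^{-1}\circ\bar S^{-1}$ in \eqref{two dressing}, but the factorization route needs no new input and is immediate.

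The only step requiring genuine care is the conversion of $\epsilon\partial_{t_j}W_0$ into $B_j\,W$: one must confirm that the mutually commuting exponent generators reduce the derivative of the exponential to a single term, and that its normalization and index match those of $B_j$ under the conjugation $S\circ(\cdot)\circ S^{-1}$ dictated by \eqref{two dressing}. Once this conjugation identity is in place, the rest is the routine projection algebra $(B_j)_++(B_j)_-=B_j$, together with the observation that $\bigl(\epsilon\partial_{t_j}S\bigr)\circ S^{-1}=-(B_j)_-$ automatically lies in $\g_-$, consistent with the shape of $S$ recorded in \eqref{expansion-S}.
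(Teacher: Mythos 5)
The paper itself offers no proof of this proposition (it is stated as something ``one can easily get''), so your proposal must be judged against the standard dressing computation the author evidently has in mind. For $W$ your argument is exactly that computation and is sound: differentiate $W=S\circ W_0$, insert the Sato equation $\epsilon\partial_{t_j}S=-(B_j)_-S$, conjugate the generator brought down from the exponential, and use the splitting $B_j=(B_j)_++(B_j)_-$. The caveat you flag is real: with $W_0$ as literally written in Section 2, $\epsilon\partial_{t_j}W_0=\frac{\Lambda_q^{j}}{j!}W_0$, so conjugation by $S$ produces $\L^{j}/j!$ rather than $B_j=\L^{j+1}/(j+1)!$; the stated equation only comes out if the exponent of $W_0$ is read as $\sum_j t_j\Lambda_q^{j+1}/(\epsilon(j+1)!)$, which is the convention actually used later in the vertex operators and tau-function formulas. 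That off-by-one discrepancy is the paper's, not yours.

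The genuine gap is in your treatment of $\bar W$. Your factorization route is formally clean, but your fallback claim --- that the same equation ``may alternatively be verified directly'' from $\epsilon\partial_{t_j}\bar S=(B_j)_+\bar S$ together with $\L=\bar S\circ\Lambda_q^{-1}\circ \bar S^{-1}$ --- is false under the paper's definition of $\bar W_0$, and this is not a removable detail. Since that $\bar W_0$ depends on the \emph{same} times $t_j$, the Leibniz rule gives
\begin{equation*}
\epsilon\partial_{t_j}\bar W=(\epsilon\partial_{t_j}\bar S)\circ\bar W_0+\bar S\circ(\epsilon\partial_{t_j}\bar W_0)=(B_j)_+\bar W+\frac{\L^{j}}{j!}\,\bar W,
\end{equation*}
because $\bar S\circ\Lambda_q^{-j}\circ\bar S^{-1}=\L^{j}$; the extra term is not killed by any projection. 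So the two routes you present as interchangeable actually contradict each other, and the contradiction propagates backwards into your main argument: differentiating the factorization $W\circ g=\bar W$ with both free operators time-dependent forces $(\epsilon\partial_{t_j}S)S^{-1}=(\epsilon\partial_{t_j}\bar S)\bar S^{-1}$, and since the left side lies in $\g_-$ while the right side lies in $\g_+$, both must vanish --- the factorization problem with these free operators admits only trivial dynamics, so it cannot be invoked as a safe hypothesis either. The proposition for $\bar W$ is provable only if $\bar W_0$ is taken independent of the $t_j$ (equivalently, carries its own frozen times, exactly as $\bar W_{N0}$ does with the $\bar t_{jk}$ in the multicomponent sections); then $\epsilon\partial_{t_j}\bar W=(\epsilon\partial_{t_j}\bar S)\circ\bar W_0=(B_j)_+\bar W$ is immediate. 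A complete proof has to state this correction explicitly; silently assuming that the factorization problem, the definition of $\bar W_0$, and the Sato equations are mutually compatible is precisely where your argument fails.
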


 From the previous proposition we derive the following  Lax equations for the Lax operators.
\begin{proposition}\label{Lax}
 The  Lax equations of the QTH are as follows
   \begin{align}
\label{laxtjk}
  \epsilon\partial_{t_{j}} \L&= [(B_{j})_+,\L].
  \end{align}
\end{proposition}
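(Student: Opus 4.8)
The plan is to prove the Lax equation by differentiating the dressing definition $\L=W\circ\Lambda_q\circ W^{-1}$ directly, since the preceding proposition already supplies the Sato equation $\epsilon\partial_{t_j}W=(B_j)_+W$ for the dressing operator $W$. Because the shift operator $\Lambda_q$ carries no dependence on the times $t_j$, the entire $t_j$-dependence of $\L$ sits in $W$ and $W^{-1}$, so applying the Leibniz rule in the associative algebra $\g$ gives
$$\epsilon\partial_{t_j}\L=\bigl(\epsilon\partial_{t_j}W\bigr)\circ\Lambda_q\circ W^{-1}+W\circ\Lambda_q\circ\bigl(\epsilon\partial_{t_j}W^{-1}\bigr).$$

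The one auxiliary computation I would carry out first is the flow of the inverse operator. Differentiating the identity $W\circ W^{-1}=1$ and solving for the derivative of the second factor yields $\epsilon\partial_{t_j}W^{-1}=-W^{-1}\circ\bigl(\epsilon\partial_{t_j}W\bigr)\circ W^{-1}$, and substituting the Sato equation this collapses to $\epsilon\partial_{t_j}W^{-1}=-W^{-1}\circ(B_j)_+$. Feeding both this and the Sato equation for $W$ back into the Leibniz expansion, the two terms become $(B_j)_+\circ\bigl(W\circ\Lambda_q\circ W^{-1}\bigr)$ and $-\bigl(W\circ\Lambda_q\circ W^{-1}\bigr)\circ(B_j)_+$; each conjugate $W\circ\Lambda_q\circ W^{-1}$ reassembles exactly into $\L$, so the right-hand side is precisely $(B_j)_+\circ\L-\L\circ(B_j)_+=[(B_j)_+,\L]$, which is \eqref{laxtjk}.

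As a consistency check I would run the identical computation on the barred representation $\L=\bar W\circ\Lambda_q^{-1}\circ\bar W^{-1}$ using the second Sato equation $\epsilon\partial_{t_j}\bar W=(B_j)_+\bar W$; since the projection $(B_j)_+$ appears in both Sato equations with the same sign, this produces the very same commutator, confirming that the two defining expressions for $\L$ remain compatible along the flow. The main point requiring care is not analytic but bookkeeping in the operator algebra: verifying that the Leibniz rule and the reassembly of the conjugates $W\circ\Lambda_q\circ W^{-1}$ hold in the (locally defined, appropriately completed) algebra $\g$, so that the conjugation structure is genuinely preserved. Given the factorization setup of the previous section this is routine, and there is no genuine obstacle beyond ensuring the products and projections are well defined.
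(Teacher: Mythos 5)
Your proposal is correct and matches the paper's intent exactly: the paper gives no written proof, merely the lead-in ``From the previous proposition we derive the following Lax equations,'' and your computation — differentiating $\L=W\circ\Lambda_q\circ W^{-1}$ with the Leibniz rule, using $\epsilon\partial_{t_j}W^{-1}=-W^{-1}\circ(B_j)_+$ from the Sato equation, and reassembling the conjugates into the commutator $[(B_j)_+,\L]$ — is precisely the standard derivation being invoked. The consistency check via $\bar W$ is a sensible bonus but not a departure from the paper's route.
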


To show the relation of the QTH and the $q$-KP type hierarchy\cite{frenkel,iliev,iliev2,he,hetianqkp}, we will do the following remark.
\begin{remark}
 The $q$-Toda hierarchy can be treated as a generalization of the $q$-KdV hierarchy \cite{frenkel} in terms of the same multiplication shift operator $\Lambda_q$. The $q$-KP hierarchy in \cite{iliev} is in fact a more general generalization of the $q$-KdV hierarchy in \cite{frenkel} after rewriting the operator  $\Delta_q$ as $\Lambda_q-1$. Therefore the $q$-Toda hierarchy can be treated as a special reduction of the $q$-KP hierarchy in \cite{iliev} in terms of an operator $\Delta_q=\Lambda_q-1$ after a certain transformation. The operator $\Lambda_q$ in this paper is different from the $q$-derivative operator in \cite{iliev2,he,hetianqkp} in which $D_qf(x)=\frac{f(qx)-f(x)}{(q-1)x}$ which leads to a different hierarchy.
\end{remark}

To see this kind of hierarchy more clearly, the  $q$-Toda equations as the $t_{0}$ flow equations  will be given in the next subsection.
\subsection{The  $q$-Toda equations}
 As a consequence of the factorization problem \eqref{facW} and  Sato equations, after taking into account that   $S\in G_{-}$ and $\bar S\in G_{+}$, the $t_0$ flow of $\L$ in the form of $\L=\Lambda_q+U+V\Lambda_q^{-1}$ is as
\begin{gather}\label{exp-omega}
\begin{aligned}
  \epsilon\partial_{t_{0}} \L&= [\Lambda_q+U,V\Lambda_q^{-1}],
  \end{aligned}
\end{gather}
which lead to $q$-Toda equation
\[\epsilon\partial_{t_{0}} U&=& V(qx)-V(x),\\ \label{toda}
\epsilon\partial_{t_{0}} V&=& U(x)V(x)-V(x)U(q^{-1}x).\]

From Sato equation we deduce the following set of nonlinear
partial differential-difference equations
\begin{align}\left\{
\begin{aligned}
 \omega_1(x)-\omega_1(qx)&=\epsilon\partial_{t_1}(\Exp{\phi(x)})\cdot\Exp{-\phi(x)},\\
\epsilon\partial_{t_1}\omega_1(x)&=-\Exp{\phi(x)}\Exp{-\phi(q^{-1}x)}.\end{aligned}\right.
\label{eq:multitoda}
\end{align}
Observe that if we cross the first two  equations, then we get
\begin{align*}
  \epsilon^2\partial_{t_1}^2\phi(x)=
  \Exp{\phi(qx)}\Exp{-\phi(x)}-\Exp{\phi(x)}\Exp{-\phi(q^{-1}x)}
\end{align*}
which is the $q$-Toda equation.
To give a linear description of the QTH, we introduce  wave functions  $\psi,\bar\psi$
defined by
\begin{gather}\label{baker-fac}
\begin{aligned}
\psi&= W\cdot\chi, &
\bar\psi&=\bar W\cdot \bar\chi,
\end{aligned}
\end{gather}
where
\[
\chi(z):=z^{\frac{\log x}{ \epsilon}} ,\ \ \bar \chi(z):=z^{-\frac{\log x}{\epsilon}},\
\]
and the $``\cdot"$ means the action of an operator on a function.
Note that $\Lambda_q\cdot\chi=z\chi$ and  the following asymptotic expansions
can be defined
\begin{gather}\label{baker-asymp}
\begin{aligned}
  \psi&=z^{\frac{\log x}{\epsilon}}(1+\omega_1(x)z^{-1}+\cdots)\,\psi_0(z),&\psi_0&:=
 \Exp{\sum_{j=1}^\infty t_{j}\frac{z^j}{\epsilon j!}},& z&\rightarrow\infty,\\
\bar\psi&=z^{-\frac{\log x}{\epsilon}}(\bar\omega_0(x)+\bar\omega_1(x)z+\cdots)\,\bar\psi_0(z),
&\bar\psi_0&:=
\Exp{\sum_{j=0}^\infty
   t_{j}\frac{z^{-j}}{\epsilon j!}},& z&\rightarrow 0.
\end{aligned}
\end{gather}

We can further get linear equations in the following proposition.

\begin{proposition}The  wave functions $\psi,\bar\psi$ are subject to following Sato equations
\begin{align}
 \L\cdot\psi&=z\psi,\ \ \ &&\L\cdot\bar\psi=z\bar\psi,\\
 \epsilon\partial_{t_j}\psi&=(B_{j})_+\cdot \psi,& \epsilon\partial_{t_j}\bar \psi&=(B_{j})_+\cdot\bar \psi.  \end{align}
\end{proposition}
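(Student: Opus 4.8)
The plan is to derive all four identities directly from the definitions $\psi = W\cdot\chi$, $\bar\psi = \bar W\cdot\bar\chi$, combined with the two equivalent expressions $\L = W\circ\Lambda_q\circ W^{-1} = \bar W\circ\Lambda_q^{-1}\circ\bar W^{-1}$ and the Sato equations $\epsilon\partial_{t_j}W = (B_j)_+W$, $\epsilon\partial_{t_j}\bar W = (B_j)_+\bar W$ furnished by the previous proposition. No new machinery is needed; each equation collapses to a one-line consequence of an already established identity.

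First I would treat the spectral equations. Right-multiplying the definition of $\L$ by $W$ gives the operator identity $\L\circ W = W\circ\Lambda_q$, so applying both sides to $\chi$ and invoking the given relation $\Lambda_q\cdot\chi = z\chi$ yields $\L\cdot\psi = (\L\circ W)\cdot\chi = (W\circ\Lambda_q)\cdot\chi = W\cdot(z\chi) = z\psi$, where $z$ is a spectral constant that commutes past the action of $W$ on functions of $x$. For $\bar\psi$ the one computational point is to check that $\bar\chi$ is an eigenfunction of $\Lambda_q^{-1}$. Using $\bar\chi(z) = z^{-\log x/\epsilon}$ together with $q = e^{\epsilon}$, so that $\log q = \epsilon$, one finds $(\Lambda_q^{-1}\bar\chi)(x) = \bar\chi(q^{-1}x) = z^{-\log(q^{-1}x)/\epsilon} = z^{-(\log x - \epsilon)/\epsilon} = z\,\bar\chi(x)$. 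Then $\L\circ\bar W = \bar W\circ\Lambda_q^{-1}$ gives $\L\cdot\bar\psi = (\bar W\circ\Lambda_q^{-1})\cdot\bar\chi = \bar W\cdot(z\bar\chi) = z\bar\psi$ by the same argument.

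Next I would dispatch the flow equations. Since $\chi$ and $\bar\chi$ depend only on $x$ and $z$ and carry no $t_j$-dependence, the time derivative passes straight onto the dressing operator: $\epsilon\partial_{t_j}\psi = (\epsilon\partial_{t_j}W)\cdot\chi$. Substituting $\epsilon\partial_{t_j}W = (B_j)_+W$ and reading the operator product as composition gives $((B_j)_+\circ W)\cdot\chi = (B_j)_+\cdot(W\cdot\chi) = (B_j)_+\cdot\psi$. The identical manipulation with $\epsilon\partial_{t_j}\bar W = (B_j)_+\bar W$ produces $\epsilon\partial_{t_j}\bar\psi = (B_j)_+\cdot\bar\psi$.

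I do not anticipate any genuine obstacle, as the statement is essentially a linear reformulation of the dressing and Sato structure already in place. The only step requiring care is the eigenfunction computation for $\bar\chi$ under $\Lambda_q^{-1}$, where the relation $q = e^{\epsilon}$ must be used to convert the shift $x\mapsto q^{-1}x$ into the multiplicative factor $z$; an overlooked factor of $\epsilon$ or a sign error there would corrupt the spectral equation $\L\cdot\bar\psi = z\bar\psi$.
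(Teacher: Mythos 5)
Your proof is correct and is exactly the standard dressing argument the paper intends: the paper states this proposition without proof precisely because it follows, as you show, from $\L\circ W=W\circ\Lambda_q$, $\L\circ\bar W=\bar W\circ\Lambda_q^{-1}$, the eigenfunction relations $\Lambda_q\cdot\chi=z\chi$, $\Lambda_q^{-1}\cdot\bar\chi=z\bar\chi$, and the Sato equations for $W,\bar W$. Your explicit verification that $\bar\chi$ has eigenvalue $z$ (not $z^{-1}$) under $\Lambda_q^{-1}$, via $\log q=\epsilon$, is the one point where care is needed, and you handled it correctly.
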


\section{Hirota bilinear equations of the QTH}

Basing on above, Hirota bilinear equations which are equivalent to Lax equations of the QTH can be derived in following proposition.
\begin{proposition}\label{HBEoper}
 $W$ and $\bar W$ are  wave operators of the  $q$-Toda hierarchy if and only the following Hirota bilinear equations hold
\begin{align}
W\Lambda_q^r W^{-1}&=\bar W\Lambda_q^{-r}\bar W^{-1}, \ r\in \N.
   \end{align}
\end{proposition}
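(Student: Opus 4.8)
The plan is to separate the general statement $r\in\N$ from its $r=1$ instance and to recognise the latter as the defining property of a wave operator. Since $\bar W\circ\Lambda_q^{-r}\circ\bar W^{-1}=(\bar W\circ\Lambda_q^{-1}\circ\bar W^{-1})^{r}$ and likewise $W\circ\Lambda_q^{r}\circ W^{-1}=(W\circ\Lambda_q\circ W^{-1})^{r}$, the whole family indexed by $r$ is equivalent to the single identity $W\circ\Lambda_q\circ W^{-1}=\bar W\circ\Lambda_q^{-1}\circ\bar W^{-1}$ by raising to the $r$-th power. Thus the content is concentrated in $r=1$, which asserts that $W$ (dressing $\Lambda_q$) and $\bar W$ (dressing $\Lambda_q^{-1}$) produce one and the same operator $\L$; this common-Lax property is precisely what it means for $W,\bar W$ to be wave operators of the QTH, so the equivalence is natural.

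For the forward implication I would use that a pair of wave operators is, by construction from the factorization \eqref{facW}, dressing data for a single Lax operator $\L=W\circ\Lambda_q\circ W^{-1}=\bar W\circ\Lambda_q^{-1}\circ\bar W^{-1}$; taking $r$-th powers then gives $W\circ\Lambda_q^{r}\circ W^{-1}=\L^{r}=\bar W\circ\Lambda_q^{-r}\circ\bar W^{-1}$ for every $r\in\N$. The substantive work is the converse. Assuming the identities, the $r=1$ case lets me define $\L$ as the common value and read its shape from the two dressings: writing $W=S\circ W_0$ with $S\in G_-$ of the form \eqref{expansion-S} and using that $W_0$ commutes with $\Lambda_q$, I get $\L=S\circ\Lambda_q\circ S^{-1}$, all of whose powers of $\Lambda_q$ are at most $1$; writing $\bar W=\bar S\circ\bar W_0$ with $\bar S\in G_+$, I get $\L=\bar S\circ\Lambda_q^{-1}\circ\bar S^{-1}$, all of whose powers of $\Lambda_q$ are at least $-1$. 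Equality of these two one-sided expansions confines the powers to $\{-1,0,1\}$, giving the band form $\Lambda_q+U+V\Lambda_q^{-1}$ of \eqref{lax expansion}, so $W,\bar W$ are genuine wave operators and the Sato and Lax equations \eqref{Wjk} follow from the dressing exactly as established above.

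The step I expect to be the main obstacle is making the converse carry the full dynamical content of ``wave operator of the hierarchy'' rather than a mere pointwise-in-$t$ statement: I must ensure that the $r=1$ identity, holding for all times, is compatible with a single time-independent factorization element $g=W^{-1}\circ\bar W$, which is what forces the reduction from the two independent Lax operators of a generic $2$D-Toda-type system down to the single $\L$ appearing here. Concretely, the delicate points are the order bookkeeping that produces the three-term band form from the opposite one-sided truncations dictated by the memberships $S\in G_-$ and $\bar S\in G_+$, and the use of the uniqueness of the splitting \eqref{splitting} to guarantee that the common operator $\L$ is well defined; once these are in place the remaining manipulations are formal.
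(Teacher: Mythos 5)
Your proof rests on reading the bilinear identity as an identity between operators evaluated at a \emph{single} set of times, which collapses it to the statement that both dressings produce one and the same Lax operator $\L$. Under that reading the proposition is not even true, so the converse you sketch cannot be completed: being a wave operator of the hierarchy means, beyond the algebraic dressing relation \eqref{two dressing}, that $S,\bar S$ solve the Sato equations \eqref{satoSt}, and a single-time identity carries no dynamical information. Concretely, take any genuine solution with wave operators $W(t)=S(t)\circ W_0(t)$, $\bar W(t)=\bar S(t)\circ\bar W_0(t)$ and non-stationary $\L$, and freeze the dressing parts at $t=0$: put $\tilde W(t):=S(0)\circ W_0(t)$, $\tilde{\bar W}(t):=\bar S(0)\circ\bar W_0(t)$. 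Since $W_0,\bar W_0$ commute with $\Lambda_q$, one gets $\tilde W\Lambda_q^r\tilde W^{-1}=S(0)\Lambda_q^rS(0)^{-1}=\L(0)^r=\bar S(0)\Lambda_q^{-r}\bar S(0)^{-1}=\tilde{\bar W}\Lambda_q^{-r}\tilde{\bar W}^{-1}$ for every $t$ and every $r\in\N$, yet $\tilde W,\tilde{\bar W}$ are not wave operators because their dressing parts do not evolve. This shows that your closing assertion ``the Sato and Lax equations \eqref{Wjk} follow from the dressing exactly as established above'' is false: the Sato equations are an independent dynamical condition, not a consequence of the band form of $\L$. The identity the proposition intends --- as its symbol version, Proposition \ref{wave-operators} and eq.~\eqref{HBE3}, makes explicit --- is the two-time statement $W(t)\Lambda_q^rW^{-1}(t')=\bar W(t)\Lambda_q^{-r}\bar W^{-1}(t')$ for all $t,t'$, and it is precisely the freedom $t\neq t'$ that encodes the flows; this is also the issue you yourself flagged as ``the main obstacle'' but then left unresolved.

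Once the two sets of times are restored, your other main device breaks as well: $W(t)\Lambda_q^rW^{-1}(t')$ is \emph{not} the $r$-th power of $W(t)\Lambda_qW^{-1}(t')$, because the interior factors $W^{-1}(t')W(t)\neq1$ no longer cancel, so the family over $r\in\N$ cannot be reduced to $r=1$. The proof the paper points to via \cite{M,ourJMP} runs instead as follows. Forward direction: by the Sato equations both $W$ and $\bar W$ obey linear evolution laws, so $P_r(t,t')=W(t)\Lambda_q^rW^{-1}(t')$ and $\bar P_r(t,t')=\bar W(t)\Lambda_q^{-r}\bar W^{-1}(t')$ satisfy the same linear differential equations in $t$ and in $t'$; on the diagonal $t=t'$ both reduce to $\L^r$ --- this is the only place where your common-Lax observation enters, as an initial condition --- hence they agree identically as formal series in $t-t'$. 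Converse: setting $t'=t$ recovers the reduction $\L=W\Lambda_qW^{-1}=\bar W\Lambda_q^{-1}\bar W^{-1}$ (your band-form argument is fine here, but peripheral), while differentiating the identity in $t_j$ \emph{before} setting $t'=t$ gives $\epsilon\partial_{t_j}W\cdot W^{-1}=\epsilon\partial_{t_j}\bar W\cdot\bar W^{-1}$; expressing both sides through $S,\bar S$ and the free factors and separating the $\g_-$ and $\g_+$ components via the splitting \eqref{splitting} yields exactly the two Sato equations \eqref{satoSt}. It is this last differentiation-and-splitting step, absent from your proposal, that carries the entire dynamical content of the proposition.
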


\begin{proof}
The proof is complicated but quite standard. One can refer the similar proofs in \cite{M,ourJMP}.
\end{proof}

To give a description in terms of  wave functions, following symbolic definitions are needed.

If the series have forms
\begin{eqnarray*} W(x,t,\Lambda_q)=\sum_{i\in \Z} a_i(x,t)\Lambda_q^i \mbox{ and } \bar W(x,t,\Lambda_q)=\sum_{i\in \Z}
b_i(x,t)\Lambda_q^{i}, \end{eqnarray*}

\begin{eqnarray*}  W^{-1}(x,t,\Lambda_q)=\sum_{i\in \Z}\Lambda_q^{i} a_i'(x,t) \mbox{ and } \bar W^{-1}(x,t,\Lambda_q)=\sum_{j\in
\Z}\Lambda_q^{j}b_j'(x,t),  \end{eqnarray*} then we denote their corresponding
left symbols $\W$,  $\bar \W$ and right symbols $\W^{-1}$, $\bar \W^{-1}$
as following
\begin{eqnarray*}
&&  \W(x,t,\lambda) =\sum_{i\in \Z} a_i(x,t)\lambda^i,\ \  \W^{-1}(x,t,\lambda)=  \sum_{i\in \Z} a_i'(x,t)\lambda^{i},\\
&&
\bar \W(x,t,\lambda) =\sum_{i\in \Z} b_i(x,t)\lambda^{i},\ \ \bar \W^{-1}(x,t,\lambda)=\sum_{j\in
\Z}b_j'(x,t)\lambda^{j}.
\end{eqnarray*}
With above preparation, it is time to give another form of Hirota bilinear equation(see following proposition) after defining residue as $\res_{\lambda }\sum_{n\in \Z}\alpha_n \lambda^n=\alpha_{-1}$ using the similar proof as \cite{UT,M,ourJMP}.
\begin{proposition}\label{wave-operators}
Let
 $S$ and
$\bar S$ are  wave operators of the $q$-Toda hierarchy if and only if for all   $m\in
\Z$, $r\in \N$ , the following Hirota bilinear identity hold

\begin{eqnarray}  \notag &&\res_{\lambda }
 \left\{
\lambda^{r+m-1}\ \W(x,t,\lambda) \W^{-1}(q^{-m}x,t',\lambda)
\right\} = \\ \label{HBE3}&& \res_{\lambda }
 \left\{
\lambda^{-r+m-1}\bar \W( x,t,\lambda )\
\bar \W^{-1}(q^{-m}x,t',\lambda) \right\}.
\end{eqnarray}
\end{proposition}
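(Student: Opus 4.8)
The plan is to descend from the operator Hirota bilinear equation of Proposition \ref{HBEoper} to the scalar residue identity \eqref{HBE3} by extracting, on both sides, the coefficient of each power $\Lambda_q^n$ and repackaging it as a residue of a product of symbols. The whole argument is algebraic once the right intermediate identity is in hand.

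First I would upgrade the single-time identity $W\Lambda_q^r W^{-1}=\bar W\Lambda_q^{-r}\bar W^{-1}$ to its two-time version
\begin{align*}
W(t)\circ\Lambda_q^r\circ W(t')^{-1}=\bar W(t)\circ\Lambda_q^{-r}\circ \bar W(t')^{-1},\qquad r\in\N,
\end{align*}
valid for all times $t,t'$. This follows from the factorization problem \eqref{facW}: since $g=W^{-1}\circ\bar W$ is independent of the times, one has $W(t)^{-1}\circ\bar W(t)=W(t')^{-1}\circ\bar W(t')$, hence $W(t)\circ W(t')^{-1}=\bar W(t)\circ\bar W(t')^{-1}$. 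Inserting $W(t)^{-1}\circ W(t)$ between $\Lambda_q^r$ and $W(t')^{-1}$ (and the barred analogue on the other side) turns both sides into $\L(t)^r$ multiplied by the common factor $W(t)\circ W(t')^{-1}=\bar W(t)\circ\bar W(t')^{-1}$, which proves the displayed identity.

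Next I would establish the coefficient-extraction lemma: if $P=\sum_i p_i(x)\Lambda_q^i$ has left symbol $\sum_i p_i(x)\lambda^i$ and $Q=\sum_j\Lambda_q^j r_j(x)$ has right symbol $\sum_j r_j(x)\lambda^j$, then the coefficient of $\Lambda_q^n$ in $P\circ Q$ equals $\res_\lambda\lambda^{-n-1}(\text{left symbol of }P)(x,\lambda)\,(\text{right symbol of }Q)(q^{n}x,\lambda)$. This is a direct computation from the product rule $(X\Lambda_q^i)\circ(Y\Lambda_q^j)=X(x)Y(q^ix)\Lambda_q^{i+j}$, which is precisely what manufactures the rescaled argument $q^nx$ in the second factor. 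Applying the lemma with $P=W(t)\Lambda_q^r$ (left symbol $\lambda^r\W(x,t,\lambda)$) and $Q=W(t')^{-1}$ (right symbol $\W^{-1}(x,t',\lambda)$), and on the barred side with left symbol $\lambda^{-r}\bar\W(x,t,\lambda)$, the equality of the $\Lambda_q^n$-coefficients of the two-time identity becomes
\begin{align*}
\res_\lambda\lambda^{r-n-1}\W(x,t,\lambda)\W^{-1}(q^{n}x,t',\lambda)=\res_\lambda\lambda^{-r-n-1}\bar\W(x,t,\lambda)\bar\W^{-1}(q^{n}x,t',\lambda).
\end{align*}
Setting $n=-m$ reproduces \eqref{HBE3} for all $m\in\Z$, $r\in\N$. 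Conversely, if \eqref{HBE3} holds for all $m,r$, then all coefficients of the two operators agree, so the two-time operator identity holds; specializing $t=t'$ returns the identity of Proposition \ref{HBEoper}, which characterizes wave operators, giving the ``if'' direction.

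The main obstacle I anticipate is the bookkeeping of the $q$-shifts: because coefficient functions do not commute with $\Lambda_q$, one must track carefully how the argument $q^{n}x=q^{-m}x$ is generated and apply the left/right symbol conventions to the correct factor, since a single misplaced rescaling would destroy the identity. A secondary point requiring care is the formal well-definedness of the residue, since the symbols are genuine two-sided Laurent series (because $W=S\circ W_0$ mixes the nonpositive powers of $S\in G_-$ with the unbounded nonnegative powers of $W_0$); I would justify the extraction by noting that each coefficient of the operator $W(t)\Lambda_q^r W(t')^{-1}$ is well defined on the operator side, with the exponential prefactors $\psi_0,\bar\psi_0$ of \eqref{baker-asymp} supplying convergence in the regions $z\to\infty$ and $z\to 0$ where $\W$ and $\bar\W$ are naturally expanded.
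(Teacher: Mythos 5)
Your proposal is correct and takes essentially the same route as the paper: the paper's own ``proof'' is merely a pointer to the standard arguments of \cite{UT,M,ourJMP}, and what you have reconstructed---time-independence of the transition operator $W^{-1}\circ\bar W$ yielding the two-time operator identity $W(t)\circ\Lambda_q^r\circ W(t')^{-1}=\bar W(t)\circ\Lambda_q^{-r}\circ\bar W(t')^{-1}$, followed by the left/right-symbol residue lemma that identifies the coefficient of $\Lambda_q^{-m}$ with the residues in \eqref{HBE3}---is exactly that standard argument, with the $q^{-m}x$ bookkeeping and the converse (via $t=t'$ and Proposition \ref{HBEoper}) handled correctly. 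Your closing remark on well-definedness of the symbol products, resolved on the operator side by combining the exponentials in $W_0(t)\Lambda_q^rW_0(t')^{-1}$, is the same caveat implicit in the cited references.
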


To give Hirota quadratic function in terms of tau functions, we need to define and prove the existence of the tau function of the QTH firstly in the next section.

\section{Tau-functions of QTH}

We firstly introduce the following sequences:
\[t-[\lambda] &:=& (t_{j}-
  \epsilon(j-1)!\lambda^j, 0\leq j\leq \infty).
\]
A  function $\tau\in \C$  depending on the dynamical variables $t$ and
$\epsilon$ is called the    tau-function of the QTH if it
provides symbols related to  wave operators as following,

\begin{eqnarray}\label{Mpltaukk}\bS: &=&\frac{ \tau
(e^{-\frac{\epsilon}{2}}x, t_{j}-\frac{\epsilon(j-1)!}{\lambda^j},\epsilon) }
     {\tau (e^{-\frac{\epsilon}{2}}x,t,\epsilon)},\\
     \label{Mpl-1taukk}\bS^{-1}: &=&\frac{ \tau
(e^{\frac{\epsilon}{2}}x, t_{j}+\frac{\epsilon(j-1)!}{\lambda^j},\epsilon) }
     {\tau (e^{\frac{\epsilon}{2}}x,t,\epsilon)},\\ \label{Mprtaukk}
\bar \bS:&= &\frac{ \tau
(e^{\frac{\epsilon}{2}}x,t_{j}+\epsilon(j-1)!\lambda^j,\epsilon)}
     {\tau(e^{-\frac{\epsilon}{2}}x,t,\epsilon)},\\
     \bar \bS^{-1}:&= &\frac{\tau
(e^{-\frac{\epsilon}{2}}x,t_{j}-\epsilon(j-1)!\lambda^j,\epsilon)}
     {\tau(e^{\frac{\epsilon}{2}}x,t,\epsilon)}.
     \end{eqnarray}

\begin{proposition}\label{tau-function}
Given a pair of wave operators $\bS$ and $\ \bar \bS$ of the QTH, there
exists corresponding  invertible tau-functions.
\end{proposition}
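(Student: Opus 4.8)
The plan is to construct $\log\tau$ as the potential of a compatible differential--difference system in the times $t_j$ and the spatial variable $x$, with the defining data read off from the symbol formulas \eqref{Mpltaukk}--\eqref{Mprtaukk} together with the Sato equations \eqref{satoSt}, and then to verify that the resulting $\tau$ reproduces all four symbols.

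First I would expand the right-hand sides of \eqref{Mpltaukk} and \eqref{Mpl-1taukk} in powers of $\lambda$. Since $\bS=1+\omega_1(x)\lambda^{-1}+\cdots$ and the shift $t_j\mapsto t_j-\epsilon(j-1)!\,\lambda^{-j}$ contributes $-\epsilon(j-1)!\,\lambda^{-j}\,\partial_{t_j}\log\tau$ at leading order, matching powers of $\lambda$ forces candidate logarithmic $t$-derivatives: at the lowest order $\omega_1(x)=-\epsilon\,\partial_{t_1}\log\tau(e^{-\epsilon/2}x,t,\epsilon)$, and more generally $\epsilon\,\partial_{t_j}\log\tau$ is expressed through the subleading coefficients of $B_j=\L^{j+1}/(j+1)!$. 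The $x$-dependence is fixed separately by the leading coefficient $\bar\omega_0(x)$ of $\bar\bS$ in \eqref{Mprtaukk}, which as $\lambda\to0$ gives the difference relation $\log\tau(e^{\epsilon/2}x,t,\epsilon)-\log\tau(e^{-\epsilon/2}x,t,\epsilon)=\log\bar\omega_0(x)$. Thus the defining data is a one-form $\omega=\sum_{j\ge0}\eta_j(x,t)\,dt_j$ in the times, supplemented by this $x$-difference equation; the half-shifts $x\mapsto e^{\pm\epsilon/2}x$ are exactly what reconciles the $S$-side of \eqref{two dressing} with the $\bar S$-side.

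The main step, and the principal obstacle, is to prove that this differential--difference system is compatible, i.e. that $\partial_{t_k}\eta_j=\partial_{t_j}\eta_k$ for all $j,k$ and that the $x$-difference relation is preserved by every $t$-flow. I would derive both from the Lax equations \eqref{laxtjk}, $\epsilon\partial_{t_k}\L=[(B_k)_+,\L]$, equivalently from the Hirota bilinear identity of Proposition \ref{wave-operators}, \eqref{HBE3}: differentiating the residue/coefficient expressions for $\eta_j$ and substituting the Lax flow, the cross terms in the $(B_k)_\pm$ decomposition cancel. The delicate part is the $q$-shift arithmetic, since $\Lambda_q=e^{\epsilon x\partial_x}$ turns products of coefficients into products $X(x)Y(q^ix)$ with $q^ix=e^{i\epsilon}x$, so the computation must carefully track these shifted arguments together with the half-shifts $e^{\pm\epsilon/2}x$; it is precisely here that the bilinear identity \eqref{HBE3} does the essential work of matching the $S$- and $\bar S$-derived pieces.

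Once compatibility is established, solving the $x$-difference equation and integrating the closed one-form by the Poincar\'e lemma yields a function $\log\tau(x,t,\epsilon)$, determined up to an irrelevant additive constant, and hence $\tau$. Invertibility (nonvanishing of $\tau$) follows from the normalizations in \eqref{expansion-S}: $S$ has leading coefficient $1$ and $\bar S$ has invertible leading coefficient $\bar\omega_0$, so the quotients in \eqref{Mpltaukk}--\eqref{Mprtaukk} are well defined and nonzero. Finally I would verify that this $\tau$ reproduces all four symbols, not merely their leading coefficients, by re-expanding the formulas and using the Sato equations \eqref{satoSt} to propagate the match from the leading term to every coefficient $\omega_i,\bar\omega_i$; the mutual consistency of the four formulas --- the two opposite shifts $\mp\epsilon(j-1)!\,\lambda^{\mp j}$ and the two half-shifts $e^{\pm\epsilon/2}x$ --- is again guaranteed by the bilinear identity. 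This last verification is routine once compatibility is in hand, which is why I expect the compatibility/closedness computation to be the only genuine difficulty.
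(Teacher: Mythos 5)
Your proposal is correct and takes essentially the same approach as the paper: the paper's own proof rewrites the symbol relations \eqref{Mpltaukk}--\eqref{Mprtaukk} as a differential--difference system for $\log\tau$ (with the same half-shift $x\mapsto q^{\mp 1/2}x$ that you identify) and then appeals to the standard compatibility/integration method of \cite{M,ourJMP}, which is precisely the closedness-of-the-one-form argument you spell out from the Sato/bilinear equations. In short, you have filled in the details that the paper delegates to its references.
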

\begin{proof} Here, we shall note that the  tau function $\tau(x,t)$
corresponding to the wave operators $\bS$ and  $\ \bar \bS$ is in fact
$\tau(q^{-\frac12}x,t)$.
\\The system is equivalent to:
\begin{eqnarray*}
&& \label{eq1} \log \bS = \(\exp\left({-\epsilon\sum_{j=0}^\infty j!\lambda^{-(j+1)}\partial_{t_{j}}}\right)-1\)\log \tau, \\
\label{eq2} && \log  \bar \bS = \(\exp\left(\epsilon x\d_{x}+\epsilon\sum_{j=0}^\infty j!\lambda^{j+1}\partial_{t_{j}}\right)-1\)\log \tau.
\end{eqnarray*}
Then using the standard method in \cite{M,ourJMP} will help us to derive the existence of tau function of this hierarchy.

 \end{proof}

After giving tau functions of the QTH, what is the Hirota bilinear equation in terms of the tau function becomes a natural question which will be answered in the next
section in terms of vertex operators.

\section{Vertex operators and  Hirota quadratic equations}
In this section we continue to  discuss on the fundamental properties
of the tau function of the QTH, i.e., the Hirota quadratic equations of the QTH. So we
introduce the following vertex operators
\begin{eqnarray*}
\Gamma^{\pm a} :&=&\exp\left(\pm \frac{1}{\epsilon}
\sum_{j=0}^\infty t_{j}\frac{\lambda^{j+1}}{(j+1)!}\right)\times\exp\left({\mp
\frac{\epsilon}{2}x\partial_{x} \mp [\lambda^{-1}]_\d  }\right),\\
\Gamma^{\pm b} :&=&\exp\left(\pm  \frac{1}{\epsilon}
\sum_{j=0}^\infty t_{j}\frac{\lambda^{-j-1}}{(j+1)!}\right)\times\exp\left({\mp
\frac{\epsilon}{2}x\partial_{x} \mp [\lambda]_{\d}  }\right),
\end{eqnarray*}

where\ \
\begin{eqnarray*}
 [\lambda]_\d  :&=&
\epsilon\sum_{j=0}^\infty j!\lambda^{j+1}\partial_{t_{j}}.
\end{eqnarray*}

\begin{theorem}\label{t11}
The invertible   $\tau(t,\epsilon)$  is a tau-function of the QTH if and only if it
satisfies the  following Hirota quadratic equations  of the QTH.
\begin{equation} \label{HBE} \res_{{\rm{\lambda}}}
 \lambda^{r-1} \( \Gamma^{a}\otimes
\Gamma^{-a}\right)(\tau
\otimes \tau ) =\res_{{\rm{\lambda}}}\lambda^{-r-1}\(\Gamma^{-b}\otimes\Gamma^{b} \) (\tau
\otimes \tau )
\end{equation}
computed at $x=q^lx'$
 for each  $l\in \Z$, $r\in \N$.
\end{theorem}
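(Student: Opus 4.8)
The plan is to read off the quadratic identity \eqref{HBE} from the symbol-level bilinear identity \eqref{HBE3} of Proposition~\ref{wave-operators}, by rewriting every wave-operator symbol as a vertex operator acting on $\tau$. First I would use the factorizations $W=S\circ W_0$ and $\bar W=\bar S\circ\bar W_0$ from \eqref{def:baker} to split the left and right symbols $\W,\W^{-1},\bar\W,\bar\W^{-1}$ into a dressing part, governed by $\bS,\bS^{-1},\bar\bS,\bar\bS^{-1}$, and a free part coming from $W_0,\bar W_0$. The free parts are exactly the multiplicative exponentials $\exp(\pm\frac1\epsilon\sum_{j\ge 0}t_j\lambda^{j+1}/(j+1)!)$ and $\exp(\pm\frac1\epsilon\sum_{j\ge 0}t_j\lambda^{-j-1}/(j+1)!)$ that open the definitions of $\Gamma^{a},\Gamma^{-a}$ and $\Gamma^{b},\Gamma^{-b}$.

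Next, using the tau-function relations \eqref{Mpltaukk}--\eqref{Mpl-1taukk} and \eqref{Mprtaukk}, I would replace each dressing symbol by its ratio of shifted tau-functions and then recognize the shifts as exponentials of derivations. As recorded in the proof of Proposition~\ref{tau-function}, the time shift $t_j\mapsto t_j\mp\epsilon(j-1)!\,\lambda^{-j}$ is generated by $\exp(\mp[\lambda^{-1}]_\d)$ and the argument change $x\mapsto e^{\mp\epsilon/2}x$ by $\exp(\mp\frac\epsilon2 x\d_x)$. Combining the free exponential, the time-shift operator and the $x$-shift operator reproduces $\Gamma^{a}$ for $\W$ and $\Gamma^{-a}$ for $\W^{-1}$; an analogous computation based on \eqref{Mprtaukk}, now with the positive-power shift $\exp(\mp[\lambda]_\d)$ and the $x$-shifts dictated there, yields $\Gamma^{-b},\Gamma^{b}$ for $\bar\W,\bar\W^{-1}$.

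With these identifications I would substitute into \eqref{HBE3}. Because the two factors in each residue act on two independent arguments, their product turns into the tensor products $(\Gamma^{a}\otimes\Gamma^{-a})(\tau\otimes\tau)$ on the left and $(\Gamma^{-b}\otimes\Gamma^{b})(\tau\otimes\tau)$ on the right. The discrete shift $q^{-m}x$ in the second argument of \eqref{HBE3} is absorbed into the evaluation point $x=q^lx'$, which simultaneously lowers the powers $\lambda^{r+m-1},\lambda^{-r+m-1}$ to $\lambda^{r-1},\lambda^{-r-1}$; taking $\res_\lambda$ then gives exactly \eqref{HBE}. For the converse every step is reversible: \eqref{HBE} forces \eqref{HBE3}, so Proposition~\ref{wave-operators} makes $\bS,\bar\bS$ genuine wave operators, and Proposition~\ref{tau-function} then produces the tau-function.

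I expect the main obstacle to be the bookkeeping of the second step, namely verifying that the product of two shifted symbols equals the vertex operators applied to $\tau\otimes\tau$ with the correct normalization. This rests on the Taylor-shift identity $\tau(t\mp[\lambda^{-1}])=\exp(\mp[\lambda^{-1}]_\d)\tau$ applied on each tensor factor, and on reconciling the full $x$-shift $\exp(\epsilon x\d_x)$ appearing in $\bar\bS$ with the half-shifts $e^{\pm\epsilon/2}x$ of the vertex operators, which is precisely the point of the remark in the proof of Proposition~\ref{tau-function} that the relevant tau-function is $\tau(q^{-1/2}x,t)$. One must also check that the index change $j\mapsto j+1$ built into $[\lambda]_\d$ matches the free exponentials without leaving spurious prefactors; this is where the exact form of $\Gamma^{a},\Gamma^{-a},\Gamma^{b},\Gamma^{-b}$ is pinned down.
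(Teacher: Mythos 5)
Your proposal is correct and follows essentially the same route as the paper: the paper likewise reduces Theorem \ref{t11} to the symbol-level identity \eqref{HBE3} of Proposition~\ref{wave-operators}, by way of four identities expressing $\Gamma^{\pm a}\tau$ and $\Gamma^{\pm b}\tau$ as wave-operator symbols $\W,\W^{-1},\bar\W,\bar\W^{-1}$ multiplied by shifted tau-functions and by powers $\lambda^{\pm\log x/\epsilon}$, whose ratio yields exactly the factor $\lambda^{m}$ that reconciles the exponents once one evaluates at $x=q^{l}x'$. The only difference is presentational: the paper states these four identities and defers their verification to \cite{M,ourJMP}, whereas you derive them directly from the factorization $W=S\circ W_0$, the tau-quotient formulas \eqref{Mpltaukk}--\eqref{Mprtaukk}, and the Taylor-shift interpretation of $\exp(\mp[\lambda^{-1}]_\d)$ and $\exp(\mp\tfrac{\epsilon}{2}x\d_x)$.
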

 \begin{proof}

 We just need  to prove that the Hirota quadratic equations
are equivalent to the right side in Proposition
\ref{wave-operators}. By a straightforward computation we can get
the following four identities {\allowdisplaybreaks}
\begin{eqnarray}\label{vertex computation1}
 \Gamma^{a}\tau & =& \tau(q^{-\frac12}x,t)
 \W(x,t,\lambda )\lambda^{\log x/\epsilon},
\\ \label{vertex computation2}
\Gamma^{-a}\tau  & =&
\lambda ^{-\log x/\epsilon}
\W^{-1}(x,t,\lambda )\tau(q^{\frac12}x,t), \\\label{vertex
computation3}
 \Gamma^{-b}\tau  & =&
\tau(q^{-\frac12}x,t) \bar \W(x,t,\lambda
)\lambda^{ x/\epsilon}, \\\label{vertex computation4}
\Gamma^{b}\bar\tau & = & \lambda^{
 -\log x/\epsilon} \bar \W^{-1}(x,t,\lambda)\
\tau(q^{\frac12}x,t) .
\end{eqnarray}
The proof of four equations eq.\eqref{vertex
computation1}-eq.\eqref{vertex computation4} can be derived by similar QTH as in \cite{M,ourJMP}.
By substituting four equations eq.\eqref{vertex
computation1}-eq.\eqref{vertex computation4} into the Hirota quadratic equations
\eqref{HBE},
eq.\eqref{HBE3} is derived.
\end{proof}

 Doing a transformation on the eq.\eqref{HBE} by $\lambda\rightarrow \lambda^{-1},$ then the eq.\eqref{HBE} becomes
\begin{equation} \label{HBE'} \res_{{\rm{\lambda}}}
 \lambda^{r-1}\left( \Gamma^{a}\otimes
\Gamma^{-a}-\Gamma^{-a}\otimes\Gamma^{a}  \right)(\tau
\otimes \tau )=0
\end{equation}
computed at $x=q^lx'$
 for each  $l\in \Z$, $r\in \N$.
 That means
 \begin{equation} \label{HBEmilanov}\frac{d \lambda}{\lambda} \left(  \Gamma^{a}\otimes
\Gamma^{-a}-\Gamma^{-a}\otimes\Gamma^{a} \right)(\tau
\otimes \tau )
\end{equation}
is regular in $\lambda$
computed at $x=q^lx'$
 for each  $l\in \Z$. The eq.\eqref{HBEmilanov} i is exactly the $q$-version of the
  Hirota quadratic equation of the  Toda hierarchy as a corollary  in \cite{M}.

\section{Bi-Hamiltonian structure and tau symmetry}

To describe the integrability of the QTH, we will construct the bi-Hamiltonian structure and tau symmetry of the QTH in this section.

 In this section, we will consider the QTH on Lax operator
 \[\L=\Lambda_q+u+e^v\Lambda_q^{-1}, \ \ \Lambda_q=e^{\epsilon x\d_x}.\]
Then for $\bar f=\int  f dx, \bar g=\int g dx, $ we can define the hamiltonian bracket as
\[\{\bar f,\bar g\}=\int  \sum_{w,w'}\frac{\delta f}{\delta w}\{w,w'\}\frac{\delta g}{\delta w'} dx,\ \ w,w'=u\ or\ v.\]
The bi-Hamiltonian structure for the
QTH can be given by the following two compatible Poisson brackets similar as \cite{CDZ}

\begin{eqnarray}
&&\{v(x),v(y)\}_1=\{u(x),u(y)\}_1=0,\notag\\
&&\{u(x),v(y)\}_1=\frac{1}{\epsilon} \left[e^{\epsilon\,x\d_x}-1
\right]\delta(x-y),\label{toda-pb1}\\
&& \{u(x),u(y)\}_2={1\over\epsilon}\left[e^{\epsilon\,x\d_x}
e^{v(x)}-
e^{v(x)} e^{-\epsilon x\d_x}\right] \delta(x-y),\notag\\
&& \{ u(x), v(y)\}_2 = {1\over \epsilon}
u(x)\left[e^{\epsilon\,x\d_x}-1 \right]
\delta(x-y),\label{toda-pb2}\\
&& \{ v(x), v(y)\}_2 = {1\over \epsilon} \left[
e^{\epsilon\,x\d_x}-e^{-\epsilon x\d_x}\right]\delta(x-y).\notag
\end{eqnarray}
For any difference operator $A=
\sum_k A_k \Lambda_q^k$, we define residue $Res A=A_0$.
In the following theorem, we will prove the above Poisson structure can be as  the Hamiltonian structure of the QTH.
\begin{theorem}\label{QTHbiha}
The flows of the QTH  are Hamiltonian systems
of the form
\[
\frac{\d u}{\d t_{j}}&=&\{u,H_{j}\}_1, \ \ j\ge 0,
\label{td-ham}
\]

They satisfy the following bi-Hamiltonian recursion relation
\[\notag \{\cdot,H_{n-1}\}_2=n
\{\cdot,H_{n}\}_1.
\]
Here the Hamiltonians have the form
\begin{equation}
H_{j}=\int h_{j}(u,v; u_x,v_x; \dots; \epsilon) dx,\quad  \ j\ge 0,
\end{equation}
with
\[
 h_{j}&=&\frac1{(j+1)!} Res \, \L^{j+1}.
\]

\end{theorem}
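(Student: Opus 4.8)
The plan is to verify the Hamiltonian form of the flows, the bi-Hamiltonian recursion, and the explicit formula for the densities $h_j$ by reducing everything to the Lax equation in Proposition~\ref{Lax} and exploiting the algebraic structure of the two Poisson brackets \eqref{toda-pb1}--\eqref{toda-pb2}. First I would compute the variational derivatives $\delta H_j/\delta u$ and $\delta H_j/\delta v$ of the Hamiltonians $H_j=\tfrac{1}{(j+1)!}\int \operatorname{Res}\L^{j+1}\,dx$. The key tool here is the standard identity that varying $\operatorname{Res}\L^{k}$ against a perturbation $\delta\L$ of the Lax operator picks out a residue of the form $\operatorname{Res}\!\big(\L^{k-1}\,\delta\L\big)$ up to the cyclicity of $\operatorname{Res}$ under the $\Lambda_q$-twisted product; this lets me express $\delta H_j/\delta w$ for $w=u,v$ as residues of fractional powers of $\L$. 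Writing $\L=\Lambda_q+u+e^v\Lambda_q^{-1}$, I would then feed these variational derivatives into the first bracket $\{\cdot,\cdot\}_1$ and check directly that $\{u,H_j\}_1$ reproduces the right-hand side of the Lax flow $\epsilon\partial_{t_j}\L=[(B_j)_+,\L]$ read off at the coefficient of $\Lambda_q^0$, thereby establishing \eqref{td-ham}.

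The second step is the bi-Hamiltonian recursion $\{\cdot,H_{n-1}\}_2=n\{\cdot,H_n\}_1$. The natural approach is to show that the two Poisson operators $P_1,P_2$ associated with \eqref{toda-pb1} and \eqref{toda-pb2} satisfy the Magri relation $P_2\,\mathrm{d}H_{n-1}=n\,P_1\,\mathrm{d}H_n$ when applied to the gradients of the Hamiltonian densities. Since both brackets are built from the shift operators $e^{\pm\epsilon x\d_x}=\Lambda_q^{\pm 1}$, the recursion should follow by substituting the computed gradients and using the factorization structure of $\L$; concretely, one verifies that acting with $P_2$ on $\mathrm{d}H_{n-1}$ and with $nP_1$ on $\mathrm{d}H_n$ produce the same difference-operator expression after commuting the $\Lambda_q^{\pm1}$ factors through $e^{v}$ and $u$. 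Because the bracket \eqref{toda-pb2} is precisely the $q$-analogue of the Toda bracket in \cite{CDZ}, I would follow that reference closely, translating each $e^{\epsilon\d}$ into $\Lambda_q$ and checking that the $q$-shifted arguments (the $x\mapsto qx$ versus $x\mapsto q^{-1}x$ asymmetry visible already in the $q$-Toda equation \eqref{toda}) are matched consistently on both sides.

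I expect the main obstacle to be precisely this bookkeeping of the $q$-shifts. In the classical Toda case the Poisson operators involve $e^{\pm\epsilon\d_x}$, which shift the spatial argument additively; here the multiplicative shift $\Lambda_q\colon x\mapsto qx$ means that every time a $\Lambda_q^{\pm1}$ is commuted past a coefficient such as $e^{v(x)}$ or $U(x)$, the argument transforms as $x\mapsto q^{\pm1}x$ rather than $x\mapsto x\pm\epsilon$, and these transformed coefficients must line up exactly with those appearing in the Lax flow and in the variational derivatives. The delta-function manipulations underlying the Poisson brackets also acquire $q$-dilated supports, so the reduction of $\{u,H_j\}_1$ to a clean difference-operator expression is where a sign error or a misplaced $q$-power is most likely to occur. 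Once the variational derivatives are correctly computed and the single identity $\operatorname{Res}[(B_j)_+,\L]=0$ (equivalently, the tau-symmetry compatibility) is in hand, the verification of \eqref{td-ham} and the recursion becomes a matter of matching the two sides coefficient by coefficient, which is routine but must be carried out with care regarding the $q$-shifts.
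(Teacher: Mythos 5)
Your proposal follows essentially the same route as the paper: compute $\delta H_n/\delta u$ and $\delta H_n/\delta v$ by varying $\operatorname{Res}\L^{n+1}$ (giving $\delta H_n/\delta u=a_{n;0}(x)$, $\delta H_n/\delta v=a_{n;1}(q^{-1}x)e^{v(x)}$), match $\{\cdot,H_j\}_1$ against the $\Lambda_q$-coefficients of the Lax flow, and obtain the Magri recursion from the identity $n\,\tfrac{1}{n!}\L^{n}=\L\,\tfrac{1}{(n-1)!}\L^{n-1}=\tfrac{1}{(n-1)!}\L^{n-1}\L$, exactly as in the CDZ-style argument the paper adapts. One small correction: your parenthetical identity $\operatorname{Res}[(B_j)_+,\L]=0$ is false as stated (that residue equals $(\Lambda_q-1)a_{j+1;1}(x)$, which is precisely $\epsilon\partial_{t_j}u$); the fact you actually need is that the residue of a commutator is a total $q$-difference, hence vanishes under $\int dx$.
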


\begin{proof}
The proof is similar as the proof in \cite{CDZ}.
Here we will prove that the flows $\frac{\d}{\d t_{n}}$ are also
Hamiltonian systems with respect to the first Poisson bracket.

Suppose
\[
B_{n}=\sum_{k} a_{n+1;k}\, \Lambda_q^k,
\]
and from
\begin{equation}
  \label{edef3}
\frac{\partial \L}{\partial t_{ n}} = [ (B_{n})_+ ,\L ]= [ -(B_{n})_- ,\L ],
\end{equation}
we can derive equation
\[\epsilon\frac{\partial u}{\partial t_{ n}}&=&a_{n+1;1}(qx)-a_{n+1;1}(x),\\
\epsilon\frac{\partial v}{\partial t_{ n}}&=&a_{n+1;0}(q^{-1}x) e^{v(x)}-a_{n+1;0}(x) e^{v(qx)}.
\]

By
\begin{eqnarray}
&&d \tilde h_{n}=\frac1{(n+1)!}\,d\, Res\left[\L^{n+1}
\right]
\notag\\
&& \sim \frac1{n!}\, Res\left[\L^{n}
 d \L\right]\notag\\
&&= Res\left[a_{n;0}(x)du+a_{n;1}(q^{-1}x) e^{v(x)}dv\right],
\end{eqnarray}
it yields the following identities
\begin{equation}\label{dH1-u12}
\frac{\delta H_{n}}{\delta u}=a_{n;0}(x),\quad \frac{\delta H_{n}}
{\delta v}=a_{n;1}(q^{-1}x) e^{v(x)}.
\end{equation}
This agree with Lax equation

\[
\frac{\d u}{\d t_{n}}&=&\{u,H_{n}\}_1={1\over \epsilon} \left[
e^{\epsilon\,x\d_x}-1\right]\frac{\delta H_{n}}
{\delta v}={1\over \epsilon}(a_{n;1}(qx)-a_{n;1}(x)),\\
 \  \frac{\d v}{\d t_{n}}&=&\{v,H_{n}\}_1=\frac{1}{\epsilon} \left[1-e^{\epsilon\,x\d_x}
\right]\frac{\delta H_{n}}
{\delta u}=\frac{1}{\epsilon} \left[a_{n;0}(q^{-1}x) e^{v(x)}-a_{n;0}(x) e^{v(qx)}\right].
\]

 From the above identities we see that
the flows $\frac{\d}{\d t_{n}}$ are Hamiltonian systems
with the first Hamiltonian structure.
The recursion relation
follows from the following trivial identities
\begin{eqnarray}
&&n\, \frac{1}{n!} \L^{n} =\L\,
\frac{1}{(n-1)!}
\L^{n-1}=\frac{1}{(n-1)!}
\L^{n-1}\L.\notag
\end{eqnarray}
Then we get,
\begin{eqnarray}
&&n a_{n;1}(x)=a_{n-1;0}(qx)+ua_{n-1;1}(x)+e^va_{n-1;2}(q^{-1}x)\notag\\
&&=a_{n-1;0}(x)+u(qx)a_{n-1;1}(x)+e^{v(q^2x)}a_{n-1;2}(x).\notag
\end{eqnarray}
This further leads to

\begin{eqnarray}
&&\{u,H_{n-1}\}_2=\{\left[\Lambda_q e^{v(x)}-e^{v(x)} \Lambda_q^{-1}\right] a_{n-1;0}(x)+
u(x) \left[\Lambda_q-1\right] a_{n-1;1}(q^{-1}x) e^{v(x)}\}\notag\\ \notag
&&
=n\left[a_{n;1}(x) e^{v(qx)}-a_{n;1}(q^{-1}x) e^{v(x)}\right].\label{pre-recur}
\end{eqnarray}
This is exactly the recursion relation on flows for $u$. The similar recursion flow on $v$ can be similarly derived.
The theorem is proved till now.

\end{proof}

Similarly as \cite{CDZ}, the tau symmetry of the QTH can be proved in the  following theorem.
\begin{theorem}\label{tausymmetry}
The QTH has the following tau-symmetry property:
\begin{equation}
\frac{\d h_{m}}{\d t_{ n}}=\frac{\d
h_{ n}}{\d t_{m}},\quad \ m,n\ge 0.
\end{equation}
\end{theorem}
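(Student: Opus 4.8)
The plan is to convert the statement into an identity about residues of commutators of difference operators and then exploit the commutativity of the flows. First I would compute $\epsilon\partial_{t_{n}}h_{m}$ directly. Since $h_{m}=Res\,B_{m}=\frac{1}{(m+1)!}Res\,\L^{m+1}$ and extracting the $\Lambda_q^0$-coefficient commutes with $\partial_{t_{n}}$, the Lax equation \eqref{laxtjk} together with the Leibniz rule for the derivation $[(B_{n})_+,\,\cdot\,]$ gives
\begin{align*}
\epsilon\partial_{t_{n}}\L^{m+1}=[(B_{n})_+,\L^{m+1}],
\end{align*}
so that $\epsilon\partial_{t_{n}}h_{m}=Res\,[(B_{n})_+,B_{m}]$. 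By the same reasoning $\epsilon\partial_{t_{m}}h_{n}=Res\,[(B_{m})_+,B_{n}]$, and the theorem reduces (after dividing by the common factor $\epsilon$) to the equality of these two residues.

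Next I would split each $B_{j}$ into its nonnegative and negative parts and show that the ``diagonal'' pieces drop out. Writing $(B_{n})_+=\sum_{i\geq0}p_i(x)\Lambda_q^i$ and $(B_{m})_+=\sum_{j\geq0}r_j(x)\Lambda_q^j$ and using the product rule $(X\Lambda_q^i)\circ(Y\Lambda_q^j)=X(x)Y(q^ix)\Lambda_q^{i+j}$, the coefficient of $\Lambda_q^0$ in either ordering forces $i+j=0$, hence $i=j=0$; therefore $Res\,[(B_{n})_+,(B_{m})_+]=p_0r_0-r_0p_0=0$, the two scalar functions commuting. The same degree count shows $Res\,[(B_{n})_-,(B_{m})_-]=0$, since a product of two strictly negative shifts never produces a $\Lambda_q^0$ term at all. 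Consequently $\epsilon\partial_{t_{n}}h_{m}=Res\,[(B_{n})_+,(B_{m})_-]$, and likewise $\epsilon\partial_{t_{m}}h_{n}=Res\,[(B_{m})_+,(B_{n})_-]$.

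Finally I would close the argument using that the basic operators commute, $[B_{n},B_{m}]=0$, because both are powers of the single Lax operator $\L$. Expanding $0=[(B_{n})_++(B_{n})_-,\,(B_{m})_++(B_{m})_-]$, taking residues, and discarding the two vanishing diagonal terms yields
\begin{align*}
Res\,[(B_{n})_+,(B_{m})_-]+Res\,[(B_{n})_-,(B_{m})_+]=0,
\end{align*}
that is $Res\,[(B_{n})_+,(B_{m})_-]=Res\,[(B_{m})_+,(B_{n})_-]$, which is precisely $\partial_{t_{n}}h_{m}=\partial_{t_{m}}h_{n}$. The main point requiring care---the only place where the $q$-deformation genuinely enters---is the vanishing of the diagonal residues: one must use the exact shift-product rule above rather than a Leibniz-type expansion, so that no lower-order ``total-difference'' corrections appear and the cancellation is exact rather than merely up to a total $q$-difference. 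The remaining routine check is to verify that each $(B_{j})_{\pm}$ is a genuine finite-range difference operator, so that all these residue manipulations are legitimate; this is immediate since $\L^{j+1}$ is bounded above by $\Lambda_q^{j+1}$.
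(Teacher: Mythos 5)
Your proof is correct and takes essentially the same route as the paper's: both arguments reduce the claim to residue identities for commutators of the $\pm$ parts of powers of $\L$, using the two key facts $\res[A_+,B_+]=0$ and $\res[A_-,B_-]=0$ together with the commutativity of powers of $\L$. The paper compresses this into a three-line chain of equalities by invoking the Lax equation in both its $+(B_n)_+$ and $-(B_n)_-$ forms (which encodes the commutativity implicitly), whereas you use only the $+$ form and then cite $[B_n,B_m]=0$ explicitly at the end---an equivalent piece of bookkeeping.
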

\begin{proof} Let us prove the theorem in a direct way
\[
&&\frac{\d h_{m}}{\d t_{n}} =\frac1{m!\,n!}\, Res[-(\L^{n})_-, \L^m]\notag\\
&&=\frac1{m!\,n!}\, Res[(\L^m )_+,(\L^{n})_-]\notag\\
&& =\frac1{m!\,n!}\, Res[(\L^m )_+,\L^{n}]=\frac{\d h_{n}}{\d t_{m}}.
\]
This theorem is proved.
\end{proof}

 This property justifies another alternative definition of the
tau function for the QTH.

\begin{definition} The  $tau$ function $\tau$ of the QTH can also be defined by
the following expressions in terms of the densities of the Hamiltonians:
\begin{equation}
h_{n}=\epsilon (\Lambda_q-1)\frac{\d\log  \tau}{\d t_{n}},
\quad \ n\ge 0.
\end{equation}
\end{definition}

\section{Additional symmetry and Block algebra}

In this section, we will put constrained condition
eq.\eqref{two dressing} into construction of the flows of additional
symmetries which form the well-known Block algebra.

With the dressing operators given in eq.\eqref{two dressing}, we introduce Orlov-Schulman operators as following
\begin{eqnarray}\label{Moperator}
&&M=S\Gamma S^{-1}, \ \ \bar M=\bar S\bar \Gamma \bar S^{-1},\ \\
 &&\Gamma=
\frac{\log x}{\epsilon}\Lambda_q^{-1}+\sum_{n\geq 0}
(n+1)\Lambda_q^{n}t_{n},\ \bar \Gamma=
\frac{-\log x}{\epsilon}\Lambda_q.
\end{eqnarray}

Then one can prove the Lax operator $\L$ and Orlov-Schulman operators $M,\bar M$ satisfy the following proposition.
\begin{proposition}\label{flowsofM}
The Lax operator $\L$ and Orlov-Schulman operators $M,\bar M$ of the QTH
satisfy the following
\begin{eqnarray}
&[\L,M]=1,[\L,\bar M]=1,\\ \label{Mequation}
&\partial_{ t_{n}}M=
[(B_{n})_+,M],\ \ \partial_{ t_{n}}\bar M=[(B_{n})_+,\bar M],\\
&\dfrac{\partial
M^m\L^k}{\partial{t_{n}}}=[(B_{n})_+,
M^m\L^k],\;  \dfrac{\partial
\bar M^m\L^k}{\partial{t_{n}}}=[(B_{n})_+, \bar M^m\L^k]
\end{eqnarray}

\end{proposition}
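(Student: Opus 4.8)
The plan is to establish the three families of identities in Proposition \ref{flowsofM} in the natural order: first the commutator relations $[\L,M]=1$ and $[\L,\bar M]=1$, then the flow equations for $M,\bar M$ in \eqref{Mequation}, and finally the flows on the mixed monomials $M^m\L^k$ and $\bar M^m\L^k$, which will follow formally from the first two families together with the Leibniz rule. For the commutators, I would work entirely at the level of the free operators. Since $M=S\Gamma S^{-1}$ and $\L=S\Lambda_q S^{-1}$, conjugation by $S$ is an algebra homomorphism, so $[\L,M]=S[\Lambda_q,\Gamma]S^{-1}$, and it suffices to compute $[\Lambda_q,\Gamma]=1$ directly from the definition of $\Gamma$ in \eqref{Moperator}. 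Using $\Lambda_q X(x)=X(qx)\Lambda_q$ and the quantum-shift convention $\Lambda_q:=e^{\epsilon x\d_x}$ fixed in Section 2, one checks term by term that $[\Lambda_q,\tfrac{\log x}{\epsilon}\Lambda_q^{-1}]=1$ while the remaining constant-coefficient terms $\sum_{n\geq0}(n+1)\Lambda_q^n t_n$ commute with $\Lambda_q$. The bar case is identical, reducing to $[\Lambda_q^{-1},\bar\Gamma]=1$ with $\bar\Gamma=-\tfrac{\log x}{\epsilon}\Lambda_q$, and I expect this to be the one spot requiring genuine care, since the logarithmic coefficient $\tfrac{\log x}{\epsilon}$ is exactly chosen so that $[\Lambda_q,\log x]=\epsilon\,\Lambda_q$ produces the needed constant.

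For the flow equations \eqref{Mequation}, the key input is the Sato equations \eqref{satoSt}, namely $\epsilon\partial_{t_n}S=-(B_n)_-S$ and $\epsilon\partial_{t_n}\bar S=(B_n)_+\bar S$. I would differentiate $M=S\Gamma S^{-1}$ using the product rule, noting that $\Gamma$ depends on the times only through the linear term $\sum_{n\geq0}(n+1)\Lambda_q^n t_n$, so $\partial_{t_n}\Gamma=(n+1)\Lambda_q^n$. Writing $\partial_{t_n}M=(\partial_{t_n}S)\Gamma S^{-1}+S(\partial_{t_n}\Gamma)S^{-1}+S\Gamma\,\partial_{t_n}(S^{-1})$ and substituting $\partial_{t_n}(S^{-1})=-S^{-1}(\partial_{t_n}S)S^{-1}$, the Sato equation turns the outer two terms into $[-\tfrac1\epsilon(B_n)_-,M]$, while the middle term contributes $S(n+1)\Lambda_q^n S^{-1}$. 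The point is to recognize that $(n+1)\Lambda_q^n$ is, up to the conjugation and the definition $B_n=\L^{n+1}/(n+1)!$, precisely the ``extra'' piece that upgrades $-(B_n)_-$ to $(B_n)_+$ inside the commutator; concretely one uses $S\,(n+1)\Lambda_q^n\,S^{-1}=\partial_{t_n}$ applied to the conjugating structure together with $(B_n)_++(B_n)_-=B_n$ commuting with $\L$-powers appropriately, so that the middle term combines with the $(B_n)_-$ contribution to give the full $(B_n)_+$. I would mirror this for $\bar M$ using the second Sato equation, where the sign works out without needing the extra $\partial_{t_n}\bar\Gamma$ term since $\bar\Gamma$ is time-independent.

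Finally, the flows on $M^m\L^k$ and $\bar M^m\L^k$ follow purely algebraically. Having shown $\partial_{t_n}\L=[(B_n)_+,\L]$ in Proposition \ref{Lax} and $\partial_{t_n}M=[(B_n)_+,M]$ just above, I would apply the Leibniz rule to the product $M^m\L^k$: since $\partial_{t_n}$ acts as a derivation and the adjoint action $X\mapsto[(B_n)_+,X]$ is also a derivation of the associative algebra, the bracket distributes across the product, giving $\partial_{t_n}(M^m\L^k)=[(B_n)_+,M^m\L^k]$ immediately, and likewise for the barred version. The main obstacle in the whole argument is the bookkeeping in the middle step—tracking how the $\partial_{t_n}\Gamma$ contribution and the projection $(B_n)_\pm$ conspire to flip the minus sign into the correct $(B_n)_+$—so I would carry out the computation of $S(n+1)\Lambda_q^n S^{-1}$ explicitly and identify it with the difference $(B_n)_+-(-(B_n)_-)=B_n$ acting through the conjugated frame; the remaining pieces are routine once the sign structure is pinned down.
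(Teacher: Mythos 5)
You take essentially the same route as the paper: the paper's entire proof is to ``dress'' the free-operator brackets $[\partial_{t_n}-\frac{\Lambda_q^{n+1}}{(n+1)!},\Gamma]=0$ and $[\partial_{t_n},\bar\Gamma]=0$ by $S$ and $\bar S$, and unwinding that dressing with the Sato equations is precisely your product-rule computation: conjugating $\partial_{t_n}$ by $S$ produces your two outer terms $[-\tfrac{1}{\epsilon}(B_n)_-,M]$, and $S(\partial_{t_n}\Gamma)S^{-1}$ is your middle term. Your treatment of $[\L,M]=[\L,\bar M]=1$ (conjugating $[\Lambda_q,\Gamma]=1$ and $[\Lambda_q^{-1},\bar\Gamma]=1$) and the Leibniz step for $M^m\L^k$, both of which the paper leaves implicit, are exactly the intended arguments.

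The one place where your write-up does not close is the step you yourself flagged as the main obstacle, and it cannot be fixed by bookkeeping alone. With $\Gamma$ as defined in \eqref{Moperator}, i.e.\ with time part $\sum_{n\ge0}(n+1)\Lambda_q^{n}t_n$, your middle term is $S\,(n+1)\Lambda_q^{n}\,S^{-1}=(n+1)\L^{n}$, whereas the piece required to promote $[-(B_n)_-,M]$ to $[(B_n)_+,M]$ is
\begin{equation*}
[B_n,M]=\frac{1}{(n+1)!}\,[\L^{n+1},M]=\frac{\L^{n}}{n!},
\end{equation*}
by $[\L,M]=1$ and the derivation property of $[\,\cdot\,,M]$; these disagree for every $n\ge1$ (and by a global factor of $\epsilon$ once the $\epsilon$ in the Sato equations is tracked). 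So the cancellation you assert is false for the $\Gamma$ of \eqref{Moperator}, and no identification of $(n+1)\L^{n}$ with ``$B_n$ acting through the conjugated frame'' is available. The resolution is that the definition, not your strategy, is at fault: the paper's own proof silently replaces the time part of $\Gamma$ by $\sum_{n\ge0}\frac{\Lambda_q^{n}}{n!}t_n$, for which $[\partial_{t_n}-\frac{\Lambda_q^{n+1}}{(n+1)!},\Gamma]=0$ does hold and your computation then closes (up to the paper's habitual dropping of $\epsilon$; with $\epsilon$ restored the coefficients should be $\Lambda_q^{n}/(\epsilon\,n!)$). Carrying out your promised explicit computation with these corrected coefficients completes the proof; with the coefficients as printed in \eqref{Moperator} it cannot be completed.
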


\begin{proof}
One can prove the proposition by dressing the following several commutative Lie brackets
\begin{eqnarray*}&&[\partial_{ t_{n}}-\frac{\Lambda_q^{n+1}}{(n+1)!},\Gamma]\\
&=&[\partial_{ t_{n}}-\frac{\Lambda_q^{n+1}}{(n+1)!},\frac{\log x}{\epsilon}\Lambda_q^{-1}+\sum_{n\geq 0}
\frac{\Lambda_q^{n}}{n!}t_{n}]\\&=&0,
\end{eqnarray*}

\begin{eqnarray*}[\partial_{ t_{n}},\bar \Gamma]
&=&[\partial_{ t_{n}},\frac{-\log x}{\epsilon}\Lambda_q]=0.
\end{eqnarray*}

\end{proof}
We are now to define the additional flows, and further to prove that they are symmetries, which are called additional
symmetries of the QTH. We introduce additional
independent variables $t^*_{m,l}$ and define the actions of the
additional flows on the wave operators as
\begin{eqnarray}\label{definitionadditionalflowsonphi2}
\dfrac{\partial S}{\partial
{t^*_{m,l}}}=-\left((M-\bar M)^m\L^l\right)_{-}S, \ \ \ \dfrac{\partial
\bar S}{\partial {t^*_{m,l}}}=\left((M-\bar M)^m\L^l\right)_{+}\bar S,
\end{eqnarray}
where $m\geq 0, l\geq 0$. The following theorem shows that the definition \eqref{definitionadditionalflowsonphi2} is compatible with reduction condition \eqref{two dressing} of the  QTH.
\begin{proposition}\label{preserve constraint}
The additional flows \eqref{definitionadditionalflowsonphi2} preserve reduction condition \eqref{two dressing}.
\end{proposition}
\begin{proof} By performing the derivative on $\L$ dressed by $S$ and
using the additional flow about $S$ in \eqref{definitionadditionalflowsonphi2}, we get
\begin{eqnarray*}
(\partial_{t^*_{m,l}}\L)&=& (\partial_{t^*_{m,l}}S)\ \La S^{-1}
+ S\ \La\ (\partial_{t_{m,l}}S^{-1})\\
&=&-((M-\bar M)^m\L^l)_{-} S\ \La\ S^{-1}- S\ \La
S^{-1}\ (\partial_{t^*_{m,l}}S)
\ S^{-1}\\
&=&-((M-\bar M)^m\L^l)_{-} \L+ \L ((M-\bar M)^m\L^l)_{-}\\
&=&-[((M-\bar M)^m\L^l)_{-},\L].
\end{eqnarray*}
Similarly, we perform the derivative on $\L$ dressed by $\bar S$ and
use the additional flow about $\bar S$ in \eqref{definitionadditionalflowsonphi2} to get the following
\begin{eqnarray*}
(\partial_{t^*_{m,l}}\L)&=& (\partial_{t^*_{m,l}}\bar S)\ \La \bar S^{-1}
+ \bar S\ \La\ (\partial_{t_{m,l}}\bar S^{-1})\\
&=&((M-\bar M)^m\L^l)_{+} \bar S\ \La^{-1}\ \bar S^{-1}- \bar S\ \La
\bar S^{-1}\ (\partial_{t^*_{m,l}}\bar S)
\ \bar S^{-1}\\
&=&((M-\bar M)^m\L^l)_{+} \L- \L ((M-\bar M)^m\L^l)_{+}\\
&=&[((M-\bar M)^m\L^l)_{+},\L].
\end{eqnarray*}
Because
\begin{eqnarray}\label{QTHadditionalflow111.}
[M-\bar M,\L]=0,
\end{eqnarray}
therefore
\begin{eqnarray}\label{QTHadditionalflow1111}
\dfrac{\partial \L}{\partial
{t^*_{m,l}}}=[-\left((M-\bar M)^m\L^l\right)_{-},
\L]=[\left((M-\bar M)^m\L^l\right)_{+}, \L],
\end{eqnarray}
which gives the compatibility of additional flow of QTH with reduction condition \eqref{two dressing}.
\end{proof}

Similarly, we can take derivatives on the dressing structure of  $M$ and  $\bar M$ to get the following proposition.
\begin{proposition}\label{add flow}
The additional derivatives  act on  $M$, $\bar M$ as
\begin{eqnarray}
\label{QTHadditionalflow11'}
\dfrac{\partial
M}{\partial{t^*_{m,l}}}&=&[-\left((M-\bar M)^m\L^l\right)_{-}, M],
\\
\label{QTHadditionalflow12}
\dfrac{\partial
\bar M}{\partial{t^*_{m,l}}}&=&[\left((M-\bar M)^m\L^l\right)_{+}, \bar M].
\end{eqnarray}
\end{proposition}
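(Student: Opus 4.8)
The plan is to differentiate the dressing definitions $M=S\Gamma S^{-1}$ and $\bar M=\bar S\bar\Gamma\bar S^{-1}$ directly, in exact parallel with the computation carried out for $\L$ in the proof of Proposition \ref{preserve constraint}. The crucial structural fact is that the bare operators $\Gamma$ and $\bar\Gamma$ introduced in \eqref{Moperator} depend only on $\log x$, $\epsilon$ and the original times $t_n$, and therefore carry no dependence on the additional variable $t^*_{m,l}$; consequently $\partial_{t^*_{m,l}}\Gamma=\partial_{t^*_{m,l}}\bar\Gamma=0$, and the whole additional-time dependence of $M$ and $\bar M$ flows through the dressing operators $S$ and $\bar S$ alone.

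Writing $A:=(M-\bar M)^m\L^l$ for brevity and recording the identity $\partial_{t^*_{m,l}}S^{-1}=-S^{-1}(\partial_{t^*_{m,l}}S)S^{-1}$ obtained by differentiating $SS^{-1}=1$, I would apply the Leibniz rule to $M=S\Gamma S^{-1}$. Substituting the additional flow $\partial_{t^*_{m,l}}S=-(A)_-S$ from \eqref{definitionadditionalflowsonphi2}, the two resulting terms are $-(A)_-(S\Gamma S^{-1})$ and $(S\Gamma S^{-1})(A)_-$, which combine into $-(A)_- M+M(A)_-$; this is precisely $\partial_{t^*_{m,l}}M=[-(A)_-,M]$, establishing \eqref{QTHadditionalflow11'}. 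The computation for $\bar M$ is entirely analogous: using $\partial_{t^*_{m,l}}\bar S=(A)_+\bar S$ together with the corresponding derivative of $\bar S^{-1}$, the Leibniz expansion of $\bar S\bar\Gamma\bar S^{-1}$ collapses to $(A)_+\bar M-\bar M(A)_+$, yielding $\partial_{t^*_{m,l}}\bar M=[(A)_+,\bar M]$ and hence \eqref{QTHadditionalflow12}.

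Since the argument is purely algebraic and simply mirrors the preceding proposition with $\Lambda_q$ replaced by the (likewise time-constant) operators $\Gamma$, $\bar\Gamma$, there is no genuine analytic obstacle; in particular, unlike in Proposition \ref{preserve constraint}, one does not even need to invoke $[M-\bar M,\L]=0$, since each of the two identities is a single one-sided dressing computation. The only point that requires care is the correct bookkeeping of signs: the flow on $S$ carries the minus projection with a minus sign while the flow on $\bar S$ carries the plus projection with a plus sign, and each interacts with the extra sign coming from differentiating the inverse operator. Once these signs are tracked faithfully the two commutators drop out immediately, so I expect the proof to reduce to a three-line dressing calculation for each identity.
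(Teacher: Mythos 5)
Your proposal is correct and follows essentially the same route as the paper's own proof: differentiate the dressing formulas $M=S\Gamma S^{-1}$, $\bar M=\bar S\bar\Gamma\bar S^{-1}$ by the Leibniz rule, use $\partial_{t^*_{m,l}}S^{-1}=-S^{-1}(\partial_{t^*_{m,l}}S)S^{-1}$ together with the additional flows \eqref{definitionadditionalflowsonphi2}, and invoke the key fact that $\Gamma$ and $\bar\Gamma$ carry no dependence on $t^*_{m,l}$. Your observations that $[M-\bar M,\L]=0$ is not needed here and that only sign bookkeeping requires care are both accurate and consistent with the paper's argument.
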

\begin{proof} By performing the derivative on  $M$ given in (\ref{Moperator}), there exists
a similar derivative as $\partial_{t^*_{m,l}}\L$, i.e.,
\begin{eqnarray*}
(\partial_{t^*_{m,l}}M)&\!\!\!=\!\!\!&(\partial_{t^*_{m,l}}S)\ \Gamma  S^{-1}
+ S\ \Gamma \ (\partial_{t^*_{m,l}}S^{-1})\\
&\!\!\!=\!\!\!&-((M-\bar M)^m\L^l)_{-} S\ \Gamma \ S^{-1}- S\ \Gamma
S^{-1}\ (\partial_{t^*_{m,l}}S)
\ S^{-1}\\
&\!\!\!=\!\!\!&-((M-\bar M)^m\L^l)_{-} M+ M
((M-\bar M)^m\L^l)_{-}\\
&=&-[((M-\bar M)^m\L^l)_{-}, M].
\end{eqnarray*}
Here the fact that $\Gamma $ does not depend on the additional
variables $t^*_{m,l}$ has been used. Other identities can also
be obtained in a similar way.
\end{proof}

By the two propositions above, the following theorem can be proved.
\begin{theorem}\label{symmetry}
The additional flows $\partial_{t^*_{m,l}}$ commute
with the $q$-Toda hierarchy flows $\partial_{t_{n}}$, i.e.,
\begin{eqnarray}
[\partial_{t^*_{m,l}}, \partial_{t_{n}}]\Phi=0,
\end{eqnarray}
where $\Phi$ can be $S$, $\bar S$ or $\L$,  and
 $
\partial_{t^*_{m,l}}=\frac{\partial}{\partial{t^*_{m,l}}},
\partial_{t_{n}}=\frac{\partial}{\partial{t_{n}}}$.

\end{theorem}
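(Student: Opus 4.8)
The plan is to establish the commutativity $[\partial_{t^*_{m,l}}, \partial_{t_{n}}]\Phi=0$ by direct computation using the Sato-type flow equations for both the ordinary times $t_n$ and the additional times $t^*_{m,l}$. I would prove it first for $\Phi=S$ (the case $\bar S$ being entirely parallel, and the case $\L$ following immediately since $\L=S\Lambda_q S^{-1}$ is built from $S$). The key structural ingredients are already in hand: the ordinary flow $\epsilon\partial_{t_{n}}S=-(B_{n})_-S$ from \eqref{satoSt}, the additional flow $\partial_{t^*_{m,l}}S=-\left((M-\bar M)^m\L^l\right)_{-}S$ from \eqref{definitionadditionalflowsonphi2}, together with the $M$-flow \eqref{Mequation}, the $M$-additional-flow \eqref{QTHadditionalflow11'}, the $\L$-flow \eqref{laxtjk}, and the commutator identity $[M-\bar M,\L]=0$ from \eqref{QTHadditionalflow111.}.

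The main computation is a cross-differentiation. Writing $A:=(B_{n})_-$ and $C:=\left((M-\bar M)^m\L^l\right)_{-}$ for brevity, I would compute $\epsilon\partial_{t^*_{m,l}}\partial_{t_{n}}S$ by applying $\partial_{t^*_{m,l}}$ to $-AS$, and likewise $\epsilon\partial_{t_{n}}\partial_{t^*_{m,l}}S$ by applying $\partial_{t_{n}}$ to $-CS$. Each produces a term where the operator coefficient is differentiated plus a term where $S$ is differentiated; in the latter I substitute the flow equations again, generating products $AC\,S$ and $CA\,S$ respectively. Subtracting, the mixed $S$-derivative terms combine into a commutator $[A,C]S$, and the requirement collapses to the operator identity
\begin{equation*}
\epsilon\,\partial_{t^*_{m,l}}(B_{n})_- - \epsilon\,\partial_{t_{n}}\left((M-\bar M)^m\L^l\right)_{-} = \left[\,(B_{n})_-,\ \left((M-\bar M)^m\L^l\right)_{-}\,\right].
\end{equation*}
To verify this, the strategy is to drop the projection signs by passing to the full operators: since $B_{n}=\L^{n+1}/(n+1)!$ depends on the times only through $\L$, and $M,\bar M,\L$ all evolve under both flows by the \emph{same-shape} bracket relations (namely bracketing against $(B_{n})_+$ for the $t_n$-flow and against $\pm\left((M-\bar M)^m\L^l\right)_{\pm}$ for the additional flow), the unprojected analogue of the displayed identity becomes a Jacobi-type identity that holds identically. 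One then checks that the projection onto the negative part $(\cdot)_-$ is compatible with taking these brackets, so that the projected identity follows from the unprojected one.

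The step I expect to be the genuine obstacle is precisely this last compatibility of the splitting $\g=\g_+\circ\g_-$ with the brackets: when one takes the negative projection of a commutator, cross terms of the form $[A_+,C_-]_-$ and $[A_-,C_+]_-$ appear that do not obviously vanish, and reconciling them requires the standard Adler--Shabat--Mikhailov type manipulation together with the fact that $(B_{n})_+=B_{n}-(B_{n})_-$ and that $M-\bar M$ commutes with $\L$. I would handle this by expanding every projected bracket into its four pieces, using $[M-\bar M,\L]=0$ to cancel the terms involving $\L^l$ against the corresponding $M$-flow contributions, and invoking \eqref{Mequation} and \eqref{QTHadditionalflow11'} to rewrite the surviving time-derivatives of $(M-\bar M)^m\L^l$ as brackets, at which point everything cancels. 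Once the identity for $S$ is established, the identity for $\bar S$ is obtained by the same argument with the signs and $+/-$ projections interchanged, and the identity for $\L$ follows by differentiating $\L=S\Lambda_q S^{-1}$ and using the two established results, completing the proof.
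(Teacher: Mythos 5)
Your proposal is correct and follows essentially the same route as the paper: cross-differentiate via the Sato equations, substitute the evolution of $B_{n}$ under the additional flows and of $(M-\bar M)^m\L^l$ under the ordinary flows (Propositions \ref{preserve constraint} and \ref{flowsofM}), and resolve the resulting projected-bracket identity by expanding into $\pm$ pieces — exactly what the paper does, condensed into its key identity $[(B_{n})_{+},(M-\bar M)^m\L^l]_{-}=[(B_{n})_{+},((M-\bar M)^m\L^l)_{-}]_{-}$. The only cosmetic discrepancies are the extra $\epsilon$ you attach to $\partial_{t^*_{m,l}}(B_{n})_-$ (the additional flows \eqref{definitionadditionalflowsonphi2} carry no $\epsilon$) and your ``unprojected Jacobi identity'' framing, neither of which changes the substance of the argument.
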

\begin{proof} According to the definition,
\begin{eqnarray*}
[\partial_{t^*_{m,l}},\partial_{t_{n}}]S=\partial_{t^*_{m,l}}
(\partial_{t_{n}}S)-
\partial_{t_{n}} (\partial_{t^*_{m,l}}S),
\end{eqnarray*}
and using the actions of the additional flows and the $q$-Toda
flows on $S$, we have
\begin{eqnarray*}
[\partial_{t^*_{m,l}},\partial_{t_{n}}]S
&=& -\partial_{t^*_{m,l}}\left((B_{n})_{-}S\right)+
\partial_{t_{n}} \left(((M-\bar M)^m\L^l)^m_{-}S \right)\\
&=& -(\partial_{t^*_{m,l}}B_{n} )_{-}S-
(B_{n})_{-}(\partial_{t^*_{m,l}}S)\\&&+
[\partial_{t_{n}} ((M-\bar M)^m\L^l)]_{-}S +
((M-\bar M)^m\L^l)_{-}(\partial_{t_{n}}S).
\end{eqnarray*}
Using \eqref{definitionadditionalflowsonphi2} and Proposition \ref{flowsofM}, it
equals
\begin{eqnarray*}
[\partial_{t^*_{m,l}},\partial_{t_{n}}]S
&=&[\left((M-\bar M)^m\L^l\right)_{-}, B_{n}]_{-}S+
(B_{n})_{-}\left((M-\bar M)^m\L^l\right)_{-}S\\
&&+[(B_{n})_{+},(M-\bar M)^m\L^l]_{-}S-((M-\bar M)^m\L^l)_{-}(B_{n})_{-}S\\
&=&[((M-\bar M)^m\L^l)_{-}, B_{n}]_{-}S- [(M-\bar M)^m\L^l,
(B_{n})_{+}]_{-}S\\&&+
[(B_{n})_{-},((M-\bar M)^m\L^l)_{-}]S\\
&=&0.
\end{eqnarray*}
In the proof above, $[(B_{n})_{+},
((M-\bar M)^m\L^l)]_{-}= [(B_{n})_{+}, ((M-\bar M)^m\L^l)_{-}]_{-}$ has
been used. The action on $\L$ in the theorem can be proved in similar ways.
\end{proof}
The commutative property in Theorem \ref{symmetry} means that
additional flows are symmetries of the QTH.
Since they are symmetries, it is natural to consider the algebraic
structures among these additional symmetries. So we obtain the following important
theorem.
\begin{theorem}\label{WinfiniteCalgebra}
The additional flows  $\partial_{t^*_{m,l}}$ form a Block type Lie algebra with the
following relation
 \begin{eqnarray}\label{algebra relation}
[\partial_{t^*_{m,l}},\partial_{t^*_{n,k}}]= (km-n l)\d^*_{m+n-1,k+l-1},
\end{eqnarray}
which holds in the sense of acting on  $S$, $\bar S$ or $\L$ and  $m,n,l,k\geq 0.$
\end{theorem}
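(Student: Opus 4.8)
The plan is to compute the commutator $[\partial_{t^*_{m,l}},\partial_{t^*_{n,k}}]$ acting on the dressing operator $S$ and verify that it equals $(km-nl)\,\partial^*_{m+n-1,k+l-1}S$, with the analogous computations for $\bar S$ and $\L$ following identically. The backbone of the argument is Proposition \ref{add flow}, which tells us how the additional derivatives act on $M$ and $\bar M$, together with Proposition \ref{flowsofM} and the key commutativity \eqref{QTHadditionalflow111.}, namely $[M-\bar M,\L]=0$. I would first introduce the shorthand $A_{m,l}:=(M-\bar M)^m\L^l$, so that the additional flow on $S$ reads $\partial_{t^*_{m,l}}S=-(A_{m,l})_- S$, and note that $A_{m,l}$ and $A_{n,k}$ themselves commute as full operators since they are both polynomials in the two commuting objects $M-\bar M$ and $\L$.

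Next I would expand the commutator directly. Writing
\begin{eqnarray*}
[\partial_{t^*_{m,l}},\partial_{t^*_{n,k}}]S
&=&\partial_{t^*_{m,l}}\bigl(-(A_{n,k})_-S\bigr)
-\partial_{t^*_{n,k}}\bigl(-(A_{m,l})_-S\bigr)\\
&=&-\bigl(\partial_{t^*_{m,l}}A_{n,k}\bigr)_-S-(A_{n,k})_-(\partial_{t^*_{m,l}}S)
+\bigl(\partial_{t^*_{n,k}}A_{m,l}\bigr)_-S+(A_{m,l})_-(\partial_{t^*_{n,k}}S),
\end{eqnarray*}
I would substitute the action of $\partial_{t^*_{m,l}}$ on $M-\bar M$ (and hence on $A_{n,k}$) obtained from Proposition \ref{add flow}, which gives $\partial_{t^*_{m,l}}A_{n,k}=[-(A_{m,l})_-,A_{n,k}]$, and likewise $\partial_{t^*_{m,l}}S=-(A_{m,l})_-S$. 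After these substitutions the second-derivative terms cancel and, using the standard operator identity $[(A_{m,l})_+,(A_{n,k})_-]_-=[(A_{m,l})_+,A_{n,k}]_-$ (the same kind of identity invoked at the end of the proof of Theorem \ref{symmetry}), the whole expression collapses to a single projected bracket $\bigl(-[A_{m,l},A_{n,k}]\bigr)_-S$ plus terms that vanish because $A_{m,l}$ and $A_{n,k}$ commute.

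The heart of the matter, and the step I expect to be the main obstacle, is then the purely algebraic computation of $[A_{m,l},A_{n,k}]$ at the level of the associative algebra generated by $M-\bar M$ and $\L$. Here the canonical relations $[\L,M]=1$ and $[\L,\bar M]=1$ from Proposition \ref{flowsofM} give $[\L,M-\bar M]=0$, yet one must keep careful track of how $\L$ fails to commute past the operator that raises the power of $M-\bar M$; the relevant commutator is governed by the Heisenberg-type relation between $\L$ and the operators dual to it. Expanding $[(M-\bar M)^m\L^l,(M-\bar M)^n\L^k]$ using $[\L,M-\bar M]=0$ together with the residual one-dimensional commutator yields precisely the coefficient $km-nl$ multiplying $(M-\bar M)^{m+n-1}\L^{k+l-1}=A_{m+n-1,k+l-1}$, which is the generator of $\partial^*_{m+n-1,k+l-1}$. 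This is the classical Block (or centerless $W_\infty$) structure computation, and the delicate point is bookkeeping the grading shifts so that the exponents on both $M-\bar M$ and $\L$ add up correctly with the right antisymmetric coefficient.

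Finally I would assemble the pieces: substituting the value of $[A_{m,l},A_{n,k}]$ back into the collapsed commutator shows $[\partial_{t^*_{m,l}},\partial_{t^*_{n,k}}]S=(km-nl)\bigl(A_{m+n-1,k+l-1}\bigr)_-S=(km-nl)\,\partial^*_{m+n-1,k+l-1}S$, which is \eqref{algebra relation}. The identical manipulations with the projections $(\cdot)_+$ in place of $(\cdot)_-$ handle $\bar S$, and dressing the bare relation through $S$ (or $\bar S$) via $\L=S\Lambda_q S^{-1}$ gives the statement on $\L$. I would remark that the antisymmetry and Jacobi identity needed to call this a Lie algebra follow automatically from the fact that it is realized as commutators of vector fields, so no separate verification is required.
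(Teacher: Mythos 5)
Your proposal contains a genuine error at its core, and it is in fact self-contradictory. At the outset you correctly observe that $A_{m,l}=(M-\bar M)^m\L^l$ and $A_{n,k}=(M-\bar M)^n\L^k$ commute as full operators, because $[\L,M-\bar M]=0$. But the last part of your argument rests on the claim that the ``purely algebraic'' commutator $[A_{m,l},A_{n,k}]$ equals $(km-nl)(M-\bar M)^{m+n-1}\L^{k+l-1}$ via a ``residual one-dimensional commutator''; this contradicts your own first observation. There is no such residual term: $[A_{m,l},A_{n,k}]=0$ identically, so the quantity you place at ``the heart of the matter'' cannot produce the Block structure constants, and if your computation were carried through consistently it would yield $[\partial_{t^*_{m,l}},\partial_{t^*_{n,k}}]S=0$, not \eqref{algebra relation}.

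The related fatal step is the substitution $\partial_{t^*_{m,l}}A_{n,k}=[-(A_{m,l})_-,A_{n,k}]$, which is false, and its failure is precisely where the structure constants actually come from. By Proposition \ref{add flow}, $M$ evolves by $-(A_{m,l})_-$ but $\bar M$ evolves by $+(A_{m,l})_+$, i.e.\ with the \emph{opposite} projection; hence
\begin{eqnarray*}
\partial_{t^*_{m,l}}(M-\bar M)&=&[-(A_{m,l})_-,M]-[(A_{m,l})_+,\bar M]
=-[(A_{m,l})_-,M-\bar M]-[A_{m,l},\bar M],
\end{eqnarray*}
and the anomalous term does not vanish: since $[\L,\bar M]=1$ by Proposition \ref{flowsofM}, one has $[A_{m,l},\bar M]=l(M-\bar M)^m\L^{l-1}+[(M-\bar M)^m,\bar M]\L^l$. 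Feeding this into the Leibniz expansion of $\partial_{t^*_{m,l}}A_{n,k}$ (this is exactly the double-sum expansion appearing in the paper's proof) gives $\partial_{t^*_{m,l}}A_{n,k}=-[(A_{m,l})_-,A_{n,k}]-nl\,A_{m+n-1,k+l-1}+\cdots$, where the omitted terms involve $[M,\bar M]$ and cancel against their mirror images in $\partial_{t^*_{n,k}}A_{m,l}$, which carries the analogous anomaly $-mk\,A_{m+n-1,k+l-1}$. The genuine commutator pieces then cancel by the same projection identities you invoke from Theorem \ref{symmetry} (that part of your argument is fine), leaving $(nl-km)(A_{m+n-1,k+l-1})_-S=(km-nl)\partial^*_{m+n-1,k+l-1}S$. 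So the coefficient $km-nl$ does not have the ``canonical commutation'' origin you propose; it comes from the asymmetry between the $(\cdot)_+$ flow on $\bar M$ and the $(\cdot)_-$ flow on $M$, and without this anomalous term your proof collapses to the trivial answer.
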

\begin{proof}
 By using
 (\ref{definitionadditionalflowsonphi2}), we get
\begin{eqnarray*}
[\partial_{t^*_{m,l}},\partial_{t^*_{n,k}}]S&=&
\partial_{t^*_{m,l}}(\partial_{t^*_{n,k}}S)-
\partial_{t^*_{n,k}}(\partial_{t^*_{m,l}}S)\\
&=&-\partial_{t^*_{m,l}}\left(((M-\bar M)^n\L^k)_{-}S\right)
+\partial_{t^*_{n,k}}\left(((M-\bar M)^m\L^l)_{-}S\right)\\
&=&-(\partial_{t^*_{m,l}}
(M-\bar M)^n\L^k)_{-}S-((M-\bar M)^n\L^k)_{-}(\partial_{t^*_{m,l}} S)\\
&&+ (\partial_{t^*_{n,k}} (M-\bar M)^m\L^l)_{-}S+
((M-\bar M)^m\L^l)_{-}(\partial_{t^*_{n,k}} S).
\end{eqnarray*}
We further get
 \begin{eqnarray*}&&
[\partial_{t^*_{m,l}},\partial_{t^*_{n,k}}]S\\
&=&-\Big[\sum_{p=0}^{n-1}
(M-\bar M)^p(\partial_{t^*_{m,l}}(M-\bar M))(M-\bar M)^{n-p-1}\L^k
+(M-\bar M)^n(\partial_{t^*_{m,l}}\L^k)\Big]_{-}S\\&&-((M-\bar M)^n\L^k)_{-}(\partial_{t^*_{m,l}} S)\\
&&+\Big[\sum_{p=0}^{m-1}
(M-\bar M)^p(\partial_{t^*_{n,k}}(M-\bar M))(M-\bar M)^{m-p-1}\L^l
+(M-\bar M)^m(\partial_{t^*_{n,k}}\L^l)\Big]_{-}S\\&&+
((M-\bar M)^m\L^l)_{-}(\partial_{t^*_{n,k}} S)\\
&=&[(nl-km)(M-\bar M)^{m+n-1}\L^{k+l-1}]_-S\\
&=&(km-nl)\d^*_{m+n-1,k+l-1}S.
\end{eqnarray*}
Similarly  the same results on $\bar S$ and $\L$ are as follows
 \begin{eqnarray*}
[\partial_{t^*_{m,l}},\partial_{t^*_{n,k}}]\bar S
&=&((km-nl)(M-\bar M)^{m+n-1}\L^{k+l-1})_+\bar S\\
&=&(km-nl)\d^*_{m+n-1,k+l-1}\bar S,
\\[6pt]
{}[\partial_{t^*_{m,l}},\partial_{t^*_{n,k}}]\L&=&
\partial_{t^*_{m,l}}(\partial_{t^*_{n,k}}\L)-
\partial_{t^*_{n,k}}(\partial_{t^*_{m,l}}\L)\\
&=&[((nl-km)(M-\bar M)^{m+n-1}\L^{k+l-1})_-, \L]\\
&=&(km-nl)\d^*_{m+n-1,k+l-1}\L.
\end{eqnarray*}
\end{proof}
Denote $D_{m,l}=\d_{t^*_{m+1,l+1}}$, and let Block algebra be the span of all $D_{m,l},\,m,l\ge-1$.
Then by \eqref{algebra relation}, Block algebra is a Lie algebra with relations
\begin{eqnarray}
[D_{m,l},D_{n,k}]=((m+1)(k+1)-(l+1)(n+1))D_{m+n,l+k},\mbox{ \ for \ } m,n,l,k\geq -1.
\end{eqnarray}
Thus Block algebra is in fact a Block type Lie algebra which is generated by the set
\begin{eqnarray}\label{Gen-B}
B=\{ D_{-1,0},
D_{0,-1}, D_{0,0}, D_{1,0}, D_{0,1}\}=\{ \d^*_{0,1}, \d^*_{1,0}, \d^*_{1,1}, \d^*_{2,1},
\d^*_{1,2}\}.
\end{eqnarray}

\begin{theorem}
The Block flows of the $q$-Toda hierarchy  are Hamiltonian systems
in the form
\begin{equation}
\frac{\d u}{\d t^*_{m,l}}=\{u,H^*_{m,l}\}_1,\ \frac{\d v}{\d t^*_{m,l}}=\{v,H^*_{m,l}\}_1,\ \ m,l\geq 0.
\end{equation}
They satisfy the following bi-Hamiltonian recursion relation
\[ \frac{\d }{\d t^*_{m,l}}=\{\cdot,H^*_{m,l-1}\}_2=n
\{\cdot,H^*_{m,l}\}_1.
\]
Here the Hamiltonians (depending on $t_{n}$) with respect to $t^*_{m,l}$ have the form
\begin{equation}
H^*_{m,l}=\int h^*_{m,l}(u,v; u_x,v_x; \dots;t_{n}; \epsilon) dx,\quad  \ n\ge 0,
\end{equation}
with the Hamiltonian densities  $ h^*_{m,l}(u,v; u_x,v_x; \dots;t_{n}; \epsilon)$ given by
\[
 h^*_{m,l}&=&\res (M-\bar M)^m\L^l.
\]

\end{theorem}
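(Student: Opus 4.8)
The plan is to reproduce the argument of Theorem~\ref{QTHbiha} (and of \cite{CDZ}) line by line, with the ordinary generator $B_n=\L^{n+1}/(n+1)!$ replaced by the additional-symmetry generator $A_{m,l}:=(M-\bar M)^m\L^l$, whose residue is by definition the Hamiltonian density $h^*_{m,l}=\res A_{m,l}$. Two structural facts will be used repeatedly: the cyclicity of the residue modulo total $q$-differences, so that $\int\res[P,Q]\,dx=0$ for any two difference operators, and the commutativity $[M-\bar M,\L]=0$ from \eqref{QTHadditionalflow111.}. Since $M,\bar M$ carry explicit dependence on $x$ and on the times $t_n$ through $\Gamma,\bar\Gamma$, all variations below are taken at fixed $t_n$, which is precisely why the resulting densities $h^*_{m,l}$ depend explicitly on $t_n$.

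First I would read off the flow of the fields from the Lax form of the symmetry. By Proposition~\ref{preserve constraint} the Block flow acts as $\partial_{t^*_{m,l}}\L=[(A_{m,l})_+,\L]$. Expanding $A_{m,l}=\sum_k b_{m,l;k}\Lambda_q^k$ and inserting $\L=\Lambda_q+u+e^v\Lambda_q^{-1}$, the coefficients of $\Lambda_q^0$ and $\Lambda_q^{-1}$ in the commutator yield
\begin{align*}
\partial_{t^*_{m,l}}u&=b_{m,l;1}(qx)-b_{m,l;1}(x),\\
\partial_{t^*_{m,l}}v&=b_{m,l;0}(q^{-1}x)\,e^{v(x)}-b_{m,l;0}(x)\,e^{v(qx)},
\end{align*}
the exact analogue (up to the $\epsilon$-normalization, absent here since \eqref{definitionadditionalflowsonphi2} carries none) of the two field equations derived just before \eqref{dH1-u12}.

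The heart of the proof, and the step I expect to be the main obstacle, is the computation of the variational derivatives of $H^*_{m,l}=\int\res A_{m,l}\,dx$. Varying $h^*_{m,l}$ and using cyclicity of the residue gives, modulo total $q$-differences,
\begin{equation*}
dh^*_{m,l}\sim m\,\res\!\left[(M-\bar M)^{m-1}\L^{l}\,d(M-\bar M)\right]+l\,\res\!\left[(M-\bar M)^{m}\L^{l-1}\,d\L\right].
\end{equation*}
Here $[M-\bar M,\L]=0$ is decisive: it lets the factors $(M-\bar M)$ and $\L$ be moved freely under the residue, so that in the second term one may use $[\L,(M-\bar M)^m\L^{l-1}]=0$, and the dressing variations $(dS)S^{-1}\in\g_-$, $(d\bar S)\bar S^{-1}\in\g_+$ collapse onto the genuine field variations $du,\,dv$. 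The real difficulty lies in the first term, where $d(M-\bar M)=[(dS)S^{-1},M]-[(d\bar S)\bar S^{-1},\bar M]$ generates commutators of $M,\bar M$ with $(M-\bar M)^{m-1}\L^{l}$; one must show, using $[\L,M]=[\L,\bar M]=1$ from Proposition~\ref{flowsofM} and the splitting \eqref{splitting}, that after taking the $\Lambda_q^0$-coefficient and integrating these reduce to a residue linear in $du,dv$. The outcome, in complete parallel with \eqref{dH1-u12}, is a pair of expressions for $\delta H^*_{m,l}/\delta u$ and $\delta H^*_{m,l}/\delta v$ built from the coefficients $b_{m,l;k}$; substituting them into the first Poisson bracket \eqref{toda-pb1} and comparing with the Lax-derived equations of the previous paragraph then gives $\partial_{t^*_{m,l}}u=\{u,H^*_{m,l}\}_1$ and $\partial_{t^*_{m,l}}v=\{v,H^*_{m,l}\}_1$.

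Finally, the bi-Hamiltonian recursion follows from the operator identity $A_{m,l}=A_{m,l-1}\circ\L=\L\circ A_{m,l-1}$, valid because $\L$ commutes with $M-\bar M$; this plays exactly the role that $n\,\L^n/n!=\L\circ\L^{n-1}/(n-1)!$ played in the recursion of Theorem~\ref{QTHbiha}. Since multiplication by $\L$ is the operation intertwining the two compatible Poisson structures \eqref{toda-pb1}--\eqref{toda-pb2}, the Hamiltonian $H^*_{m,l-1}$ with respect to the second bracket and $H^*_{m,l}$ with respect to the first generate the same Block flow up to the stated constant, which completes the proof. The $v$-component of the recursion is obtained in the same way as for $u$, as in Theorem~\ref{QTHbiha}.
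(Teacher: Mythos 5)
Your overall strategy coincides with the paper's: the paper's entire proof of this theorem is the single sentence that it ``is similar as the proof for original Toda flows,'' so your outline is already more explicit than what the paper provides. The Lax-derived field equations, the use of $[M-\bar M,\L]=0$ from Proposition~\ref{preserve constraint}, and the recursion identity $(M-\bar M)^m\L^l=\L\circ(M-\bar M)^m\L^{l-1}$ are all exactly the ingredients the paper's cross-reference to Theorem~\ref{QTHbiha} implies.

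However, the step you yourself flag as ``the main obstacle'' is a genuine gap, and it is precisely the point where the analogy with Theorem~\ref{QTHbiha} breaks down rather than a routine detail. In that theorem the reduction $d h_{n}\sim\frac{1}{n!}\res\left[\L^{n}\,d\L\right]$ closes immediately because $d\L=du+e^{v}\,dv\,\Lambda_q^{-1}$ is \emph{local and linear} in the field variations, which is what produces the explicit formulas for $\delta H_n/\delta u$ and $\delta H_n/\delta v$. For $h^*_{m,l}=\res\,(M-\bar M)^m\L^l$ the corresponding term $\res\left[(M-\bar M)^{m-1}\L^{l}\,d(M-\bar M)\right]$ involves $(dS)S^{-1}$ and $(d\bar S)\bar S^{-1}$, and these are \emph{not} local expressions in $du,dv$: they are determined only implicitly (and nonlocally, modulo operators commuting with $\L$) through $d\L=[(dS)S^{-1},\L]$. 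Your assertion that ``the outcome, in complete parallel with the earlier computation, is a pair of expressions for $\delta H^*_{m,l}/\delta u$ and $\delta H^*_{m,l}/\delta v$'' is therefore the theorem's actual content, not a consequence of the listed facts; nothing in Proposition~\ref{flowsofM} or the splitting guarantees that the residue collapses to a density linear in $du,dv$, and without it neither the Hamiltonian form of the Block flows nor the recursion relation (whose undetermined constant, written ``$n$'' in the statement, must come out of exactly this coefficient computation, as it did in Theorem~\ref{QTHbiha}) is established. To be fair, the paper's own one-line proof leaves the identical hole; but a complete argument would have to supply this variational computation, or at least explain why the explicit $x$- and $t_n$-dependence of $M,\bar M$ does not obstruct it.
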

\begin{proof}
The proof is similar as the proof for original Toda flows.
\end{proof}

\section{Extended $q$-Toda hierarchy}
 To define the extended flows, we define the following logarithm
\begin{align}
\log_+\L&=W\circ\epsilon x\partial\circ W^{-1}=S\circ\epsilon x\partial\circ S^{-1},\\
\log_-\L&=-\bar W\circ\epsilon x \partial\circ \bar W^{-1}=-\bar S\circ\epsilon x\partial\circ \bar S^{-1},
\end{align}
where $\d$ is the derivative about the spatial variable $x$.

Combining these above logarithmic operators together can derive following important logarithm
\begin{align}
\label{Log} \log \L:&=\frac12(\log_+\L+\log_-\L)=\frac12(S\circ\epsilon x\partial\circ S^{-1}-\bar S\circ\epsilon x\partial\circ \bar S^{-1}):=\sum_{i=-\infty}^{+\infty}W_i\Lambda_q^i\in G,
\end{align}
which will generate a series of flow equations which contain the spatial flow in later defined Lax equations.
Let us first introduce some convenient notations.
\begin{definition}The  operators $B_{j},D_{j}$ are defined as follows
\begin{align}\label{satoS}
\begin{aligned}
B_{j}&:=\frac{\L^{j+1}}{(j+1)!},\ \
D_{j}:=\frac{2\L^j}{j!}(\log \L-c_j),\ \  c_j=\sum_{i=1}^j\frac 1i,\ j\geq 0.
\end{aligned}
\end{align}
\end{definition}

Now we give the definition of the extended $q$-Toda hierarchy(EQTH).
\begin{definition}The extended $q$-Toda hierarchy is a hierarchy in which the dressing operators $S,\bar S$ satisfy following Sato equations
\begin{align}
\label{satoSt} \epsilon\partial_{t_{j}}S&=-(B_{j})_-S,& \epsilon\partial_{t_{j}}\bar S&=(B_{j})_+\bar S,  \\
\label{satoSs}\epsilon\partial_{ s_{j}}S&=-(D_{j})_- S,& \epsilon\partial_{s_{j}}\bar S&=(D_{j})_+\bar S.\end{align}
\end{definition}
Then one can easily get the following proposition about $W,\bar W.$

\begin{proposition}The dressing operators $W,\bar W$ are subject to following Sato equations
\begin{align}
\label{Wjk} \epsilon\partial_{t_{j}}W&=(B_{j})_+ W,& \epsilon\partial_{t_{j}}\bar W&=(B_{j})_+\bar W,  \\
\epsilon\partial_{s_{j}}W&=(\frac{\L^j}{j!}(\log_+ \L-c_j) -(D_{j})_-) W,& \epsilon\partial_{s_{j}}\bar W&=(-\frac{\L^j}{j!}(\log_- \L-c_j)+(D_{j})_+)\bar W.  \end{align}
\end{proposition}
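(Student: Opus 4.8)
The plan is to differentiate the defining relations $W=S\circ W_0$ and $\bar W=\bar S\circ\bar W_0$ with respect to each flow variable and substitute the Sato equations \eqref{satoSt}--\eqref{satoSs} already imposed on $S,\bar S$. The only extra input needed is the time dependence of the free operators $W_0,\bar W_0$; once this is recorded, every remaining step is a conjugation by $S$ (resp.\ $\bar S$) that turns powers of $\Lambda_q$ into powers of $\L$ via $S\Lambda_q S^{-1}=\L$ and $\bar S\Lambda_q^{-1}\bar S^{-1}=\L$.

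For the $t_j$-flows this repeats the computation already done for the QTH: writing $\epsilon\partial_{t_j}W=(\epsilon\partial_{t_j}S)\circ W_0+S\circ(\epsilon\partial_{t_j}W_0)$, the first term is $-(B_j)_-W$ by \eqref{satoSt}, while the free flow of $W_0$ together with $S\,\Lambda_q^{\,j+1}S^{-1}=\L^{\,j+1}$ contributes $+B_jW$; summing gives $(B_j)_+W=(B_j-(B_j)_-)W$, and identically for $\bar W$.

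The substantive part is the logarithmic $s_j$-flows. Here I would first record that, for the extended hierarchy, the free operators carry an $s_j$-dependence of logarithmic type, namely $\epsilon\partial_{s_j}W_0=\tfrac{\Lambda_q^{j}}{j!}(\epsilon x\partial-c_j)\,W_0$ and $\epsilon\partial_{s_j}\bar W_0=\tfrac{\Lambda_q^{-j}}{j!}(\epsilon x\partial+c_j)\,\bar W_0$, which are precisely the bare counterparts of $D_j$. Differentiating $W=S\circ W_0$, the $S$-term gives $-(D_j)_-W$ by \eqref{satoSs}, and the free term gives $S\,\tfrac{\Lambda_q^{j}}{j!}(\epsilon x\partial-c_j)\,S^{-1}W$. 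Since $\Lambda_q^{j}$ commutes with $\epsilon x\partial$, this conjugation factorizes as $S\Lambda_q^{j}S^{-1}\cdot S(\epsilon x\partial)S^{-1}=\L^{j}\log_+\L$, using the definition $\log_+\L=S\circ\epsilon x\partial\circ S^{-1}$; the free term is thus $\tfrac{\L^{j}}{j!}(\log_+\L-c_j)W$, and adding the two gives the stated formula. For $\bar W$ the same scheme applies: the $\bar S$-term gives $(D_j)_+\bar W$, while conjugating by $\bar S$ uses $\bar S\Lambda_q^{-1}\bar S^{-1}=\L$ and $\bar S\circ\epsilon x\partial\circ\bar S^{-1}=-\log_-\L$, turning $\tfrac{\Lambda_q^{-j}}{j!}(\epsilon x\partial+c_j)$ into $-\tfrac{\L^{j}}{j!}(\log_-\L-c_j)$, which is exactly the stated coefficient.

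I expect the main obstacle to be the correct identification and handling of the logarithmic part: one must verify that the bare $s_j$-evolution of $W_0,\bar W_0$ dresses to $D_j$, so that its combination with $(D_j)_\pm$ from \eqref{satoSs} is consistent, and one must keep careful track of the constants $c_j$ and of the two sign conventions hidden in $\log_\pm\L$. The algebra is routine once one knows that $\Lambda_q^{\pm j}$ commutes with $\epsilon x\partial$, so that the conjugation by $S$ (resp.\ $\bar S$) splits cleanly into the Lax power $\L^{j}$ and the logarithm $\log_\pm\L$.
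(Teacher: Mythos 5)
Your overall strategy is the natural one (the paper itself offers no proof beyond ``one can easily get''), and your treatment of the logarithmic $s_j$-flows, which is the substantive part, is correct. The bare evolutions $\epsilon\partial_{s_j}W_0=\tfrac{\Lambda_q^{j}}{j!}(\epsilon x\partial-c_j)W_0$ and $\epsilon\partial_{s_j}\bar W_0=\tfrac{\Lambda_q^{-j}}{j!}(\epsilon x\partial+c_j)\bar W_0$ do have to be postulated exactly as you do: the definitions of $W_0,\bar W_0$ in Section 2 carry no $s_j$-dependence at all, so a literal reading would give only $\epsilon\partial_{s_j}W=-(D_j)_-W$ and $\epsilon\partial_{s_j}\bar W=(D_j)_+\bar W$, contradicting the statement. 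Your conjugations $S(\epsilon x\partial)S^{-1}=\log_+\L$ and $\bar S(\epsilon x\partial)\bar S^{-1}=-\log_-\L$ then produce precisely the stated extra terms with the correct signs and constants $c_j$.

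The gap is the phrase ``and identically for $\bar W$'' in the $t_j$-flow part: the computation for $\bar W$ is \emph{not} identical, and carried out identically it fails. If $\bar W_0$ flows in $t_j$ the way its definition suggests (i.e.\ $\epsilon\partial_{t_j}\bar W_0=\tfrac{\Lambda_q^{-(j+1)}}{(j+1)!}\bar W_0$, in the corrected index convention you are already using for $W_0$), then since $\bar S\Lambda_q^{-(j+1)}\bar S^{-1}=\L^{j+1}$ the free factor dresses to the \emph{full} operator $B_j$, not to a piece that recombines with a projection:
\begin{equation*}
\epsilon\partial_{t_j}\bar W=(\epsilon\partial_{t_j}\bar S)\bar W_0+\bar S(\epsilon\partial_{t_j}\bar W_0)
=\bigl((B_j)_+ + B_j\bigr)\bar W\neq (B_j)_+\bar W.
\end{equation*}
The only way to get the stated equation $\epsilon\partial_{t_j}\bar W=(B_j)_+\bar W$ from $\epsilon\partial_{t_j}\bar S=(B_j)_+\bar S$ is to take $\epsilon\partial_{t_j}\bar W_0=0$: all the $t_j$-dependence of $\bar W$ must sit in $\bar S$, and the times appearing in $\bar W_0$ must be treated as a second, inert family (exactly as in the multicomponent sections of the paper, where they are denoted $\bar t_{jk}$ and flowed separately). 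So the two halves of the $t_j$-statement are asymmetric --- for $W$ the free factor contributes $B_jW$ and converts $-(B_j)_-$ into $(B_j)_+$, while for $\bar W$ the free factor must contribute nothing --- and a correct proof has to say this explicitly; as written, your argument would prove a false formula for $\bar W$ (and, incidentally, this asymmetry exposes an inconsistency in the paper's own definition of $\bar W_0$).
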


 From the previous proposition we derive the following  Lax equations for the Lax operators.
\begin{proposition}\label{Lax}
 The  Lax equations of the EQTH are as follows
   \begin{align}
\label{laxtjk}
  \epsilon\partial_{t_{j}} \L&= [(B_{j})_+,\L],&
  \epsilon\partial_{s_{j}} \L&= [(D_{j})_+,\L],\
  \epsilon\partial_{ t_{j}} \log \L= [(B_{j})_+ ,\log \L],&
  \end{align}
   \begin{align}\epsilon(\log \L)_{ s_{j}}=[ -(D_{j})_-,\log_+ \L ]+
[(D_{j})_+ ,\log_- \L ].
\end{align}
\end{proposition}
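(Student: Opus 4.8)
The plan is to derive all four families of equations by the standard dressing technique: differentiate each operator through one of its two equivalent dressings and substitute the Sato equations \eqref{satoSt}--\eqref{satoSs}, using throughout the rule $\epsilon\partial(S^{-1})=-S^{-1}(\epsilon\partial S)S^{-1}$. Every simplification rests on a short list of commutation facts. Since $\Lambda_q=e^{\epsilon x\partial}$ commutes with $\epsilon x\partial$, conjugating by $S$ and by $\bar S$ gives $[\L,\log_+\L]=0$ and $[\L,\log_-\L]=0$, hence $[\L,\log\L]=0$; together with $[\L,\L^k]=0$ this yields $[B_j,\L]=0$, $[D_j,\L]=0$, and also $[B_j,\log_\pm\L]=0$. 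These are exactly the identities that let me trade the $(\cdot)_-$ part of a generator for minus its $(\cdot)_+$ part inside a commutator with $\L$ or with $\log_\pm\L$.

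First I would establish the two $\L$-flows. Differentiating $\L=S\circ\Lambda_q\circ S^{-1}$ in $t_j$ and inserting $\epsilon\partial_{t_j}S=-(B_j)_-S$ gives $\epsilon\partial_{t_j}\L=[\L,(B_j)_-]=-[(B_j)_-,\L]$, which equals $[(B_j)_+,\L]$ by $[B_j,\L]=0$; the same computation with $\epsilon\partial_{s_j}S=-(D_j)_-S$ and $[D_j,\L]=0$ gives $\epsilon\partial_{s_j}\L=[(D_j)_+,\L]$. (Starting from $\L=\bar S\circ\Lambda_q^{-1}\circ\bar S^{-1}$ returns the same answers, the two dressings being consistent by the factorization problem \eqref{facW}.) For the $t_j$-flow of $\log\L$ I would differentiate $\log_+\L=S\circ\epsilon x\partial\circ S^{-1}$, obtaining $\epsilon\partial_{t_j}\log_+\L=-[(B_j)_-,\log_+\L]=[(B_j)_+,\log_+\L]$, and likewise $\epsilon\partial_{t_j}\log_-\L=[(B_j)_+,\log_-\L]$ from $\log_-\L=-\bar S\circ\epsilon x\partial\circ\bar S^{-1}$; averaging as in \eqref{Log} and pulling $(B_j)_+$ out of the bracket gives $\epsilon\partial_{t_j}\log\L=[(B_j)_+,\log\L]$.

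The $s_j$-flow of $\log\L$ is the one step that genuinely departs from this pattern, and I expect it to be the main obstacle. Differentiating $\log_+\L=S\circ\epsilon x\partial\circ S^{-1}$ with $\epsilon\partial_{s_j}S=-(D_j)_-S$ gives $\epsilon\partial_{s_j}\log_+\L=-[(D_j)_-,\log_+\L]$, while differentiating $\log_-\L=-\bar S\circ\epsilon x\partial\circ\bar S^{-1}$ with $\epsilon\partial_{s_j}\bar S=(D_j)_+\bar S$ gives $\epsilon\partial_{s_j}\log_-\L=[(D_j)_+,\log_-\L]$; summing, $\epsilon\partial_{s_j}(\log_+\L+\log_-\L)=-[(D_j)_-,\log_+\L]+[(D_j)_+,\log_-\L]$, which reproduces the right-hand side of the claimed $s_j$-flow. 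The crucial observation is that these two commutators cannot be merged into a single one (as happened for the $t_j$-flow): because $D_j=\frac{2\L^j}{j!}(\log\L-c_j)$ contains $\log\L=\frac12(\log_+\L+\log_-\L)$, and because the two half-logarithms arise from the different dressings $S\in G_-$ and $\bar S\in G_+$ and therefore satisfy $[\log_+\L,\log_-\L]\neq0$, one has $[D_j,\log_+\L]=\frac{2}{j!}\L^j[\log\L,\log_+\L]\neq0$ and similarly $[D_j,\log_-\L]\neq0$. Thus neither $(D_j)_-$ nor $(D_j)_+$ may be converted into the other inside these brackets, and the asymmetric form is irreducible. To be safe I would confirm the non-vanishing of $[\log_+\L,\log_-\L]$ directly from the expansions \eqref{expansion-S}, ruling out any hidden cancellation that would collapse the result to a single commutator.
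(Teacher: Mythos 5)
The paper supplies no written proof of this proposition: it is stated as an immediate consequence of the Sato equations \eqref{satoSt}--\eqref{satoSs} and the preceding proposition on $W,\bar W$, so the implicit argument is precisely the dressing computation you carry out. Your treatment of the first three equations is correct and complete: the commutation facts $[\L,\log_\pm\L]=0$ (conjugate $[\Lambda_q,\epsilon x\d_x]=0$ by $S$, resp.\ $\bar S$), and hence $[B_j,\L]=0$, $[D_j,\L]=0$, $[B_j,\log_\pm\L]=0$, are exactly what permits trading $-(\cdot)_-$ for $(\cdot)_+$ inside those brackets, and your averaging step for $\epsilon\d_{t_j}\log\L$ correctly uses the factor $\tfrac12$ from the definition \eqref{Log}.

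The gap is in the last step, and it is a factor of $2$. By \eqref{Log} one has $\log_+\L+\log_-\L=2\log\L$, so what your summation actually proves is $2\epsilon\d_{s_j}\log\L=[-(D_j)_-,\log_+\L]+[(D_j)_+,\log_-\L]$, i.e.\ $\epsilon\d_{s_j}\log\L=\tfrac12[-(D_j)_-,\log_+\L]+\tfrac12[(D_j)_+,\log_-\L]$, whereas the proposition as printed carries no $\tfrac12$. Saying the sum ``reproduces the right-hand side'' conceals that its left-hand side is the flow of $2\log\L$, not of $\log\L$. For the $t_j$-flow this factor is invisible because both brackets share the same generator $(B_j)_+$, so the $\tfrac12$ is absorbed when they are combined into $[(B_j)_+,\log\L]$; for the $s_j$-flow no such collapse occurs (as you yourself argue, $(D_j)_-$ and $(D_j)_+$ cannot be traded for each other here), so the $\tfrac12$ must survive. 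You should state the equation you actually derived and flag the discrepancy: either the right-hand side of the printed proposition needs the factor $\tfrac12$, or the identity should be read as an equation for $\log_+\L+\log_-\L$ (which is how the analogous statement arises in the extended Toda hierarchy literature from which this normalization is inherited). Separately, your closing plan to verify $[\log_+\L,\log_-\L]\neq0$ from the expansions is unnecessary work: the proof only requires that you refrain from assuming $[D_j,\log_\pm\L]=0$; whether the asymmetric right-hand side can or cannot be simplified further is not part of the claim.
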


To see this kind of hierarchy more clearly, the Hirota quadratic equations of the EQTH  will be given in next subsection.

\section{Generalized  vertex operators and  Hirota quadratic equations}

Introduce the following sequences:
\[t-[\lambda] &:=& (t_{j}-
  \epsilon(j-1)!\lambda^j, 0\leq j\leq \infty).
\]
A scalar function   depending only on the dynamical variables $t,s$ and
$\epsilon$ is called the  {\em \bf tau-function of the EQTH} if it
provides symbols related to wave operators as following,

\begin{eqnarray}\label{Mpltaukk}\bS: &=&\frac{ \tau
(e^{s_{0}-\frac{\epsilon}{2}}x, t_{j}-\frac{\epsilon(j-1)!}{\lambda^j},s;\epsilon) }
     {\tau (e^{s_{0}-\frac{\epsilon}{2}}x,t,s;\epsilon)},\\
     \label{Mpl-1taukk}\bS^{-1}: &=&\frac{ \tau
(e^{s_{0}+\frac{\epsilon}{2}}x, t_{j}+\frac{\epsilon(j-1)!}{\lambda^j},s;\epsilon) }
     {\tau (e^{s_{0}+\frac{\epsilon}{2}}x,t,s;\epsilon)},\\ \label{Mprtaukk}
\bar \bS:&= &\frac{ \tau
(e^{s_{0}+\frac{\epsilon}{2}}x,t_{j}+\epsilon(j-1)!\lambda^j,s;\epsilon)}
     {\tau(e^{s_{0}-\frac{\epsilon}{2}}x,t,s;\epsilon)},\\
     \bar \bS^{-1}:&= &\frac{\tau
(e^{s_{0}-\frac{\epsilon}{2}}x,t_{j}-\epsilon(j-1)!\lambda^j,s;\epsilon)}
     {\tau(e^{s_{0}+\frac{\epsilon}{2}}x,t,s;\epsilon)}.
     \end{eqnarray}
The proof of the existence of the tau function of the EQTH is a also standard, one can refer the similar proof in \cite{M,ourJMP}.

{\bf Remark:} We need to note that the tau function of the EQTH is unique up to a multiplication of an arbitrary function depending on extended variables $s_j,\ j>0$ for a pair of given wave functions.

In this section we continue to  discuss on the fundamental properties
of the tau function of the EQTH, i.e., the Hirota quadratic equations of the EQTH. So we
introduce the following vertex operators
\begin{eqnarray*}
\Gamma^{\pm a} :&=&\exp\left(\pm \frac{1}{\epsilon}
(\sum_{j=0}^\infty t_{j}\frac{\lambda^{j+1}}{(j+1)!}+ s_{j}\frac{\lambda^{j}}{j!}( \log \lambda-c_j))\right)\times\exp\left({\mp
\frac{\epsilon}{2}\partial_{s_0} \mp [\lambda^{-1}]_\d  }\right),\\
\Gamma^{\pm b} :&=&\exp\left(\pm  \frac{1}{\epsilon}
(\sum_{j=0}^\infty t_{j}\frac{\lambda^{-j-1}}{(j+1)!}- s_{j}\frac{\lambda^{-j}}{j!}( \log \lambda-c_j))\right)\times\exp\left({\mp
\frac{\epsilon}{2}\partial_{s_0} \mp [\lambda]_{\d}  }\right),
\end{eqnarray*}

where\ \
\begin{eqnarray*}
 [\lambda]_\d  :&=&
\epsilon\sum_{j=0}^\infty j!\lambda^{j+1}\partial_{t_{j}}.
\end{eqnarray*}

Because of the logarithm $\log \lambda$, the vertex
operators  $\Gamma^{\pm a} \otimes \Gamma^{\mp a}$ and
$\Gamma^{\pm b} \otimes \Gamma^{\mp b }$  are multi-valued. There
are monodromy factors $M^a$ and $M^b$ respectively as following
among different branches around $\lambda=\infty$
\begin{equation} M^{a}= \exp \left\{ \pm \frac{2\pi i}{\epsilon}
\sum_{j\geq 0}\frac{\lambda^{j}}{j!} ( s_{j} \otimes 1 - 1\otimes s_{j})
\right\},\end{equation}
\begin{equation}
 M^{b}= \exp \left\{ \pm \frac{2\pi i}{\epsilon}
\sum_{j\geq 0}\frac{\lambda^{-j}}{j!} ( s_{j} \otimes 1 - 1\otimes s_{j})
\right\}.
\end{equation}
In order to offset the complication, we need to generalize the
concept of vertex operators which leads it to be not scalar-valued
any more but take values in a differential operator algebra in $\C$. So we introduce the following vertex operators
\begin{equation}\Gamma^{\delta}_{a} = \exp\( -\sum_{j>0}\frac{j!\lambda^{j+1}}{\epsilon
}(\epsilon x\d_x)s_{j}\) \exp(\log x\  \partial_{s_0}),\end{equation}
\begin{equation}\Gamma^{\delta}_{b} = \exp\( -\sum_{j>0}\frac{j!\lambda^{-(j+1)}}{\epsilon
}(\epsilon x\d_x)s_{j}\) \exp(\log x\  \partial_{s_0}),\end{equation}
\begin{equation}\Gamma^{\delta
\#}_{a} =\exp(\log x\  \partial_{s_0}) \exp\( \sum_{j>0}\frac{j!\lambda^{j+1}}{\epsilon
}(\epsilon x\d_x)s_{j}\) ,\end{equation}
\begin{equation}\Gamma^{\delta\#}_{b} =  \exp(\log x\  \partial_{s_0})\exp\( \sum_{j>0}\frac{j!\lambda^{-(j+1)}}{\epsilon
}(\epsilon x\d_x)s_{j}\).\end{equation}
 Then \begin{equation}
 \label{double delta a} \Gamma^{\delta
\#}_{a}\otimes \Gamma^{\delta}_{a} = \exp(\log x\  \partial_{s_0})\exp\(
\sum_{j>0}\frac{j!\lambda^{j+1}}{\epsilon
}(\epsilon x\d_x)(s_{j}-s'_{j}) \) \exp(\log x\  \partial_{s'_{0}}),
\end{equation}
\begin{equation}
\label{double delta b} \ \ \ \Gamma^{\delta \#}_{b}\otimes
\Gamma^{\delta}_{b} = \exp(\log x\  \partial_{s_0})\exp\(
\sum_{j>0}\frac{j!\lambda^{-(j+1)}}{\epsilon
}(\epsilon x\d_x)(s_{j}-s'_{j}) \) \exp(\log x\  \partial_{s'_{0}}).
\end{equation}

After computation we get
\begin{eqnarray*} && \(\Gamma^{\delta \#}_{a}\otimes \Gamma^\delta_{a} \) M^{a} =
\exp \left\{ \pm \frac{2\pi i}{\epsilon} \sum_{j> 0} \frac{\lambda^{j}}{j!} ( s_{j}-s'_{j})
\right\}\\
&& \exp\(  \pm \frac{2\pi i}{\epsilon} ((s_0+\log x) -(s'_{0}+\log x+ \sum_{j> 0} \frac{\lambda^{j}}{j!} ( s_{j}-s'_{j})) \) \(\Gamma^{\delta \#}_{a}\otimes \Gamma^\delta _{a}\)
\\&=& \exp\({\pm \frac{2\pi i}{\epsilon}(s_0-s'_{0})}\)
\(\Gamma^{\delta \#}_{a}\otimes \Gamma^\delta _{a}\),
\end{eqnarray*}
\begin{eqnarray*}
&& \(\Gamma^{\delta \#}_{b}\otimes \Gamma^\delta_{ b} \) M^{b} =
\exp \left\{ \pm \frac{2\pi i}{\epsilon} \sum_{j> 0} \frac{\lambda^{-j}}{j!} ( s_{j}-s'_{j})
\right\}\\
&& \exp\(  \pm \frac{2\pi i}{\epsilon} ( (s_0+\log x) -(
s'_{0}+\log x+ \sum_{j> 0} \frac{\lambda^{-j}}{j!} ( s_{j}-s'_{j})) \)\(\Gamma^{\delta \#}_{b}\otimes \Gamma^\delta_{b} \)
\\&=& \exp\({\pm \frac{2\pi i}{\epsilon}(s_0-s'_{0})}\)
\(\Gamma^{\delta \#}_{b}\otimes \Gamma^\delta_{b} \).
\end{eqnarray*}
Thus when $s_0-s'_{0} \in \Z\epsilon $, $\(\Gamma^{\delta
\#}_{a}\otimes \Gamma^{\delta}_{a}\) \( \Gamma^{a}\otimes
\Gamma^{-a}\) \mbox{and}\(\Gamma^{\delta \#}_{b}\otimes
\Gamma^{\delta}_{b }\)\(\Gamma^{-b}\otimes\Gamma^{b}\)$ are all
single-valued near $\lambda=\infty$.

 Now we should note that the above vertex operators  take
 value in  differential operator algebra $\C[\d,x,t,s,\epsilon]:=\{f(x,t,\epsilon)|f(x,t,s,\epsilon)=\sum_{i\geq 0}c_{i}(x,t,s,\epsilon)\d^i\}$.
Then we can get the following important theorem similar as \cite{M,ourJMP}.
\begin{theorem}\label{t11}
The invertible   $\tau(t,s,\epsilon)$  is a tau-function of the EQTH if and only if it
satisfies the  following Hirota quadratic equations  of the EQTH.
\begin{equation} \label{HBE'} \res_{{\rm{\lambda}}}
 \lambda^{r-1}\(\Gamma^{\delta
\#}_{a}\otimes \Gamma^{\delta}_{a}\) \( \Gamma^{a}\otimes
\Gamma^{-a}\right)(\tau
\otimes \tau ) =\res_{{\rm{\lambda}}}\lambda^{-r-1}\(\Gamma^{\delta \#}_{b}\otimes
\Gamma^{\delta}_{b }\)\(\Gamma^{-b}\otimes\Gamma^{b} \) (\tau
\otimes \tau )
\end{equation}
computed at $s_0-s'_{0}=l\epsilon$
 for each  $l\in \Z$, $r\in \N$.
\end{theorem}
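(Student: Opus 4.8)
The plan is to mirror, step for step, the proof of the QTH version of this theorem (the one built on the identities eq.\eqref{vertex computation1}--eq.\eqref{vertex computation4}), upgrading every ingredient so that it accounts for the logarithmic flows $s_j$, the operator $\log\L$, and the differential-operator-valued correction factors $\Gamma^{\delta}_{a,b}$, $\Gamma^{\delta\#}_{a,b}$. As in the QTH case, the goal is to show that the Hirota quadratic equation \eqref{HBE'} is nothing but a repackaging, in terms of $\tau$, of the residue identity \eqref{HBE3} of Proposition \ref{wave-operators}, which already characterizes $S,\bar S$ as wave operators of the hierarchy. Both implications of the theorem then follow at once from that equivalence.

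First I would establish the EQTH analogues of eq.\eqref{vertex computation1}--eq.\eqref{vertex computation4}, namely explicit formulas for $\Gamma^{a}\tau$, $\Gamma^{-a}\tau$, $\Gamma^{-b}\tau$ and $\Gamma^{b}\tau$ in terms of the wave-operator symbols $\W,\W^{-1},\bar\W,\bar\W^{-1}$. Acting with the $t$-part of each vertex operator reproduces, via the defining ratios \eqref{Mpltaukk}--\eqref{Mpl-1taukk} in their $s$-dependent EQTH form, the symbols $\bS,\bS^{-1},\bar\bS,\bar\bS^{-1}$; the free exponential prefactors in $\Gamma^{\pm a},\Gamma^{\pm b}$ supply the $W_0,\bar W_0$ contributions, assembling the full symbols $\W,\bar\W$. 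The genuinely new feature relative to QTH is twofold: the shift $\exp(\mp\tfrac{\epsilon}{2}\partial_{s_0})$ now acts on the $s_0$-argument of $\tau$, producing exactly the $e^{s_0\mp\epsilon/2}x$ shifts appearing in \eqref{Mpltaukk}; and the terms $s_j\lambda^{\pm j}(\log\lambda-c_j)/j!$ in the exponents encode the contribution of the $\log\L$ flows $D_j$, so that the resulting symbols are those of the $\log$-deformed wave operators.

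Next I would treat the multivaluedness coming from $\log\lambda$. Here I would invoke the computation already displayed before the statement, which shows that $\bigl(\Gamma^{\delta\#}_{a}\otimes\Gamma^{\delta}_{a}\bigr)M^{a}=\exp\bigl(\pm\tfrac{2\pi i}{\epsilon}(s_0-s'_0)\bigr)\bigl(\Gamma^{\delta\#}_{a}\otimes\Gamma^{\delta}_{a}\bigr)$, and likewise for the $b$-operators with $M^{b}$. Consequently the composite operators $\bigl(\Gamma^{\delta\#}_{a}\otimes\Gamma^{\delta}_{a}\bigr)\bigl(\Gamma^{a}\otimes\Gamma^{-a}\bigr)$ and $\bigl(\Gamma^{\delta\#}_{b}\otimes\Gamma^{\delta}_{b}\bigr)\bigl(\Gamma^{-b}\otimes\Gamma^{b}\bigr)$ are single-valued near $\lambda=\infty$ precisely when $s_0-s'_0\in\Z\epsilon$, which is the condition $s_0-s'_0=l\epsilon$ in the statement. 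With monodromy thus neutralized, substituting the vertex identities of the previous step into both sides of \eqref{HBE'} and extracting $\res_{\lambda}$ causes the prefactors $\lambda^{\pm\log x/\epsilon}$ and the $\Gamma^{\delta}$-factors to combine; the constraint $s_0-s'_0=l\epsilon$ translates into the evaluation $x=q^l x'$, and the identity collapses exactly to the residue bilinear identity \eqref{HBE3}. Since that identity is equivalent to $S,\bar S$ being wave operators of the EQTH, the theorem is proved in both directions.

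I expect the main obstacle to be the logarithmic sector: correctly tracking the $\log\lambda$-dependent terms in $\Gamma^{\pm a},\Gamma^{\pm b}$ and verifying that, after multiplication by $\Gamma^{\delta}_{a,b}$ and $\Gamma^{\delta\#}_{a,b}$, all monodromy cancels and the surviving expression is a genuine Laurent residue in $\lambda$. This is the step where the EQTH argument departs substantively from the QTH one, since there the vertex operators were scalar-valued and single-valued; here they take values in the differential-operator algebra $\C[\partial,x,t,s,\epsilon]$ and the matching of the $\partial_{s_0}=\partial_x$-shifts against the $\log x$ prefactors must be checked with care. The remaining manipulations --- reindexing the exponential prefactors and reducing to \eqref{HBE3} --- are careful but routine, and proceed exactly as in \cite{M,ourJMP}.
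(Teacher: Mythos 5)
Your overall strategy---upgrade the four QTH vertex identities \eqref{vertex computation1}--\eqref{vertex computation4} to include the $s_j$ sector, invoke the monodromy computation with $\Gamma^{\delta}_{a}$, $\Gamma^{\delta\#}_{a}$, $\Gamma^{\delta}_{b}$, $\Gamma^{\delta\#}_{b}$ to restore single-valuedness at $s_0-s'_0\in\Z\epsilon$, and then reduce the Hirota quadratic equation to a residue bilinear identity for wave-operator symbols---is exactly the route the paper intends: the paper itself supplies no proof beyond the monodromy setup and the citation ``similar as \cite{M,ourJMP}'', mirroring its QTH argument.

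However, there is a genuine gap at your final step. You claim the identity ``collapses exactly to the residue bilinear identity \eqref{HBE3}'' and simultaneously that this identity ``is equivalent to $S,\bar S$ being wave operators of the EQTH.'' Both halves of that sentence cannot hold at once. Equation \eqref{HBE3} is the QTH bilinear identity: its symbols $\W,\W^{-1},\bar\W,\bar\W^{-1}$ carry no $s$-dependence and no $\log\lambda$ terms, and by Proposition \ref{wave-operators} it encodes only the $t_j$ Sato equations. After substituting the EQTH vertex identities into \eqref{HBE'}, the factors $\exp\bigl(\pm\frac{1}{\epsilon}\sum_j s_j\frac{\lambda^{\pm j}}{j!}(\log\lambda-c_j)\bigr)$ do \emph{not} cancel; they survive as the free part of the logarithmically deformed wave functions, so the reduction lands on an $s$-dependent, $\log\lambda$-containing bilinear identity evaluated at independent $(t,s)$ and $(t',s')$ with $s_0-s'_0=l\epsilon$. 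The missing ingredient is the EQTH analogue of Proposition \ref{wave-operators}: a proof that this \emph{extended} identity is equivalent to the full set of Sato equations \eqref{satoSt}--\eqref{satoSs}, in particular the logarithmic flows $\epsilon\partial_{s_j}S=-(D_j)_-S$, whose Lax form involves the splitting $[-(D_j)_-,\log_+\L]+[(D_j)_+,\log_-\L]$ and has no counterpart in \eqref{HBE3}. This equivalence is the substantive core of the extended theory (it is where Milanov's argument for ETH does real work) and cannot be borrowed from the QTH case. Without it, the ``if'' direction fails: \eqref{HBE3} alone leaves the $s$-dependence of $\tau$ entirely unconstrained, so satisfying it would not force $\tau$ to be an EQTH tau function. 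Your closing paragraph correctly identifies the logarithmic sector as the hard part, but the body of your argument contradicts that observation by collapsing it away.
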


\section{Bi-Hamiltonian structure of the EQTH}
As another important part of the Sato theory, the bi-Hamiltonian structure of the EQTH will be constructed in the next section similar as \cite{CDZ}.
\begin{theorem}
The flows of the EQTH  are Hamiltonian systems
of the form
\[
\frac{\d u_i}{\d t_{k,j}}&=&\{u_i,H_{k,j}\}_1, \  \frac{\d v_i}{\d t_{k,j}}=\{v_i,H_{k,j}\}_1,
\quad k=0,1;\ j\ge 0,
\label{td-ham}
\]
with $ t_{0,j}=t_{j},t_{1,j}=s_{j}.$
They satisfy the following bi-Hamiltonian recursion relation
\[\notag
\{\cdot,H_{1,n-1}\}_2&=&n
\{\cdot,H_{1,n}\}_1+2\{\cdot,H_{0,n-1}\}_1,\ \{\cdot,H_{0,n-1}\}_2=(n+1)
\{\cdot,H_{0,n}\}_1.
\]
Here the Hamiltonians have the form
\begin{equation}
H_{k,j}=\int h_{k,j}(u,v; u_x,v_x; \dots; \epsilon) dx,\quad k=0,1; \ j\ge 0,
\end{equation}
with
\[
 h_{0,j}&=&\frac1{(j+1)!} Res \, \L^{j+1},\
 h_{1,j}=\frac2{j!}\, Res\left[ \L^{j}
(\log \L-c_{j})\right].
\]

\end{theorem}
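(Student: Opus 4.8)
The plan is to follow the strategy of \cite{CDZ} closely, since the structure of the extended $q$-Toda hierarchy mirrors the extended Toda case with $\Lambda_q$ replacing the ordinary shift. First I would establish the variational differential of the Hamiltonian densities. Writing $B_{k,j}$ for $B_j$ when $k=0$ and $D_j$ when $k=1$, the key computation is to differentiate $h_{0,j}=\frac1{(j+1)!}\operatorname{Res}\L^{j+1}$ and $h_{1,j}=\frac2{j!}\operatorname{Res}[\L^j(\log\L-c_j)]$. For the ordinary flows this reproduces the identity already proved in Theorem \ref{QTHbiha}, giving $\frac{\delta H_{0,j}}{\delta u}=a_{j;0}(x)$ and $\frac{\delta H_{0,j}}{\delta v}=a_{j;1}(q^{-1}x)e^{v(x)}$ where $D_j$ and $B_j$ are expanded as $\sum_k a_{j;k}\Lambda_q^k$. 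For the logarithmic densities $h_{1,j}$, I would use that $d\log\L$ can be expressed through $d\L$ dressed by $S,\bar S$, exactly as in the extended Toda derivation, and then take residues to read off $\frac{\delta H_{1,j}}{\delta u}$ and $\frac{\delta H_{1,j}}{\delta v}$ in terms of the coefficients of $D_j$.

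Once the variational derivatives are identified with the relevant coefficients of $(B_j)_+$ and $(D_j)_+$, the Hamiltonian form of the flows follows by matching against the Lax equations of Proposition \ref{Lax}. Concretely, I would verify that the first Poisson bracket \eqref{toda-pb1} acting on the variational derivatives reproduces the right-hand sides of the Lax equations $\epsilon\partial_{t_j}\L=[(B_j)_+,\L]$ and $\epsilon\partial_{s_j}\L=[(D_j)_+,\L]$ when these are written out componentwise for $u$ and $v$. The computation for the $t_j$-flows is precisely the one carried out in the proof of Theorem \ref{QTHbiha}; the $s_j$-flows require the analogous componentwise extraction from $[(D_j)_+,\L]$, using the expansion $D_j=\frac{2\L^j}{j!}(\log\L-c_j)$ together with the two logarithm operators $\log_\pm\L$ defined in \eqref{Log}.

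The recursion relations are the main point and the most delicate step. For the $h_{0,j}$ densities the relation $(n+1)\frac{1}{(n+1)!}\L^{n+1}=\L\cdot\frac{1}{n!}\L^n$ gives, after applying the second Poisson structure \eqref{toda-pb2}, the identity $\{\cdot,H_{0,n-1}\}_2=(n+1)\{\cdot,H_{0,n}\}_1$, exactly as the recursion in Theorem \ref{QTHbiha} but with the shifted index. For the logarithmic densities the recursion $\{\cdot,H_{1,n-1}\}_2=n\{\cdot,H_{1,n}\}_1+2\{\cdot,H_{0,n-1}\}_1$ is harder: the extra term $2\{\cdot,H_{0,n-1}\}_1$ arises from the constants $c_j=\sum_{i=1}^j\frac1i$ and the fact that $\L\cdot\frac{2\L^{n-1}}{(n-1)!}(\log\L-c_{n-1})=\frac{2\L^n}{(n-1)!}(\log\L-c_{n-1})$ must be rewritten using $\frac{2\L^n}{n!}(\log\L-c_n)$ plus a correction proportional to $\frac{\L^n}{n!}$, the correction being precisely $\frac{2\L^n}{n!}(c_n-c_{n-1})=\frac{2\L^n}{n!}\cdot\frac1n$, which after the factor $n$ reproduces the $h_{0,n-1}$ term. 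I expect this bookkeeping of the $c_j$ constants, and verifying it survives the passage through the noncommutative dressing by $S,\bar S$ and the $q$-shifted brackets, to be the main obstacle; everything else parallels \cite{CDZ} and the already-proved Theorem \ref{QTHbiha}. I would therefore organize the proof as: (i) variational derivatives, (ii) Hamiltonian form via the first bracket, (iii) the two recursion identities, citing \cite{CDZ} for the steps identical to the extended Toda case and carrying out in detail only the $q$-deformed modifications.
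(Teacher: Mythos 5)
Your proposal is correct and follows essentially the same route as the paper's proof: it cites Theorem \ref{QTHbiha} for the $t_j$-flows, derives the variational derivatives of $h_{1,j}$ from the equivalence $\res\left(\L^n\,\d\log_{\pm}\L\right)\sim\res\left(\L^{n-1}\d\L\right)$ obtained by dressing $\epsilon x\d_x$ with $S,\bar S$, matches against the first Poisson bracket, and obtains the recursion from the identity $\L\,\frac{2\L^{n-1}}{(n-1)!}(\log\L-c_{n-1})=n\,\frac{2\L^{n}}{n!}(\log\L-c_{n})+\frac{2}{n!}\L^{n}$, i.e.\ precisely the $c_n-c_{n-1}=\frac1n$ bookkeeping you describe (modulo a harmless factor-of-$n$ slip in how you wrote the correction term, which your ``after the factor $n$'' phrase compensates). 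No substantive difference from the paper's argument.
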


\begin{proof}
For the $q$-Toda hierarchy, the proof was already given in  the Theorem \ref{QTHbiha}.

Here we will prove that the flows $\frac{\d}{\d t_{1,n}}$ are also
Hamiltonian systems with respect to the first Poisson bracket.
Like in \cite{CDZ}, the following identity has been proved
\begin{equation}\label{dlgl-2}  Res\left[\L^n d (S\epsilon x\d_x
S^{-1})\right] \sim  Res \L^{n-1} d \L,
\end{equation}
which show the validity of the following equivalence relation:
\begin{equation}\label{dlglg}
 Res\left(\L^n\, d \log_+ \L\right) \sim  Res\left(\L^{n-1} d \L\right).
\end{equation}
Here the equivalent relation $\sim$ is up to a $x$-derivative of
another 1-form.

In a similar way as eq.\eqref{dlgl-2}, we obtain the following equivalence relation
\begin{equation}\label{dlgl-3}
 Res\left[\L^n d (\bar S\epsilon x\d_x \bar S^{-1})\right]\sim -\rm  Res \L^{n-1} d \L.
\end{equation}
i.e.
\begin{equation}\label{dlglg'}
 Res\left(\L^n\, d \log_- \L\right) \sim \rm  Res\left(\L^{n-1} d \L\right).
\end{equation}
Combining \eqref{dlglg} with \eqref{dlglg'} together can lead to
\begin{equation}\label{dlgl2}
 Res\left(\L^n\, d \log \L\right) \sim \rm  Res\left(\L^{n-1} d \L\right).
\end{equation}

Then from
\begin{equation}
  \label{edef3}
\frac{\partial \L}{\partial t_{k, n}} = [ (B_{k,n})_+ ,\L ]= [ -(B_{k,n})_- ,\L ], \ \ B_{0,n}=B_n,B_{1,n}=D_n,
\end{equation}
and supposing
\[
B_{1,n}=\sum_{k} a_{1,n+1;k}\, \Lambda^k,
\]
we can derive equation
\[\epsilon\frac{\partial u}{\partial t_{1, n}}&=&a_{1,n+1;1}(qx)-a_{1,n+1;1}(x)\in \C,\ \\
\epsilon\frac{\partial v}{\partial t_{1, n}}&=&a_{1,n+1;0}(q^{-1}x) e^{v(x)}-a_{1,n+1;0}(x) e^{v(qx)}\in \C.
\]

The equivalence relation (\ref{dlgl2}) now readily follows from the above two equations.
By using (\ref{dlglg}) we obtain
\begin{eqnarray}
&&d  h_{1,n}=\frac2{n!}\,d\, Res\left[\L^{n}
\left(\log \L-c_{n}\right) \right]
\notag\\
&& \sim \frac2{(n-1)!}\, Res\left[\L^{n-1}
\left(\log \L-c_{n}\right) d \L\right]+ \frac2{n!}\, Res\left[\L^{n-1} d \L\right]\notag\\
&&=\frac2{(n-1)!}\, Res\left[\L^{n-1} \left(\log \L-c_{n-1}\right) d \L\right]\\
&&= Res\left[a_{1,n;0}(x)du+a_{1,n;1}(q^{-1}x) e^{v(x)}dv\right].
\end{eqnarray}
It yields the following identities
\begin{equation}\label{dH1-u12}
\frac{\delta H_{1,n}}{\delta u}=a_{1,n;0}(x),\quad \frac{\delta H_{1,n}}
{\delta v}=a_{1,n;1}(q^{-1}x) e^{v(x)}.
\end{equation}
This agree with Lax equation

\[\notag
\frac{\d u}{\d t_{1,n}}&=&\{u,H_{1,n}\}_1={1\over \epsilon} \left[
e^{\epsilon\,x\d_x}-1\right]\frac{\delta H_{1,n}}
{\delta v}={1\over \epsilon}(a_{1,n+1;1}(qx)-a_{1,n+1;1}(x)),\\ \notag
 \  \frac{\d v}{\d t_{1,n}}&=&\{v,H_{1,n}\}_1=\frac{1}{\epsilon} \left[1-e^{\epsilon\,x\d_x}
\right]\frac{\delta H_{1,n}}
{\delta u}=\frac{1}{\epsilon} \left[a_{1,n+1;0}(q^{-1}x) e^{v(x)}-a_{1,n+1;0}(x) e^{v(qx)}\right].
\]

 From the above identities we see that
the flows $\frac{\d}{\d t_{1,n}}$ are Hamiltonian systems
of the first bi-Hamiltonian structure.
For the case of $k=1$ the recursion relation
follows from the following trivial identities
\begin{eqnarray}
&&n\, \frac{2}{n!} \L^{n} \left(\log_{\pm} \L-c_{n}\right)=\L\,
\frac{2}{(n-1)!}
\L^{n-1} \left(\log_{\pm} \L-c_{n-1}\right)-2\,\frac1{n!} \L^n\notag\\
&&=\frac{2}{(n-1)!} \L^{n-1} \left(\log_{\pm} \L-c_{n-1}\right)\,
\L-2\,\frac1{n!} \L^n.\notag
\end{eqnarray}
Then we get, for $\beta=1,$
\begin{eqnarray}
&&n a_{1,n+1;1}(x)=a_{1,n;0}(qx)+ua_{1,n;1}(x)+e^va_{1,n;2}(q^{-1}x)-2a_{0,n+1;1}(x)\notag\\
&&=a_{1,n;0}(x)+u(qx)a_{1,n;1}(x)+e^{v(q^2x)}a_{1,n;2}(x)-2a_{0,n+1;1}(x).\notag
\end{eqnarray}
This further leads to

\begin{eqnarray}
&&\{u,H_{1,n-1}\}_2=\{\left[\Lambda e^{v(x)}-e^{v(x)} \Lambda^{-1}\right] a_{1,n;0}(x)+
u(x) \left[\Lambda-1\right] a_{1,n;1}(q^{-1}x) e^{v(x)}\}\notag\\ \notag
&&
=n\left[a_{1,n+1;1}(x) e^{v(qx)}-a_{1,n+1;1}(q^{-1}x) e^{v(x)}\right]+2\left[a_{0,n+1;0}(x) e^{v(qx)}-a_{0,n+1;0}(q^{-1}x) e^{v(x)}\right].\label{pre-recur}
\end{eqnarray}
This is exactly the recursion relation on flows for $u$. The similar recursion flow on $v$ can be similarly derived.
Theorem is proved till now.

\end{proof}

Similarly as \cite{CDZ}, the tau symmetry of the EQTH can be proved in the  following theorem.
\begin{theorem}\label{tausymmetry}
The EQTH has the following tau-symmetry property:
\begin{equation}
\frac{\d h_{\alpha,m}}{\d t_{\beta, n}}=\frac{\d
h_{\beta, n}}{\d t_{\alpha,m}},\quad \alpha,\beta=0,1,\ m,n\ge 0.
\end{equation}
\end{theorem}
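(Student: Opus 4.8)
The plan is to prove the tau-symmetry property
\[
\frac{\d h_{\alpha,m}}{\d t_{\beta,n}}=\frac{\d h_{\beta,n}}{\d t_{\alpha,m}},
\quad \alpha,\beta=0,1,\ m,n\ge 0,
\]
by a direct residue computation, treating the four cases $(\alpha,\beta)=(0,0),(0,1),(1,0),(1,1)$ according to whether each index selects the ordinary flow $B_n$ or the logarithmic flow $D_n$. The case $(0,0)$ is already settled in Theorem \ref{tausymmetry} for the QTH: there one writes $\d_{t_n}h_m=\tfrac1{m!n!}\res[-(\L^n)_-,\L^m]$, converts the negative projection into a positive one inside the residue since $\res[(\L^n)_+,\L^m]$ vanishes against the positive part, and then symmetrizes. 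I would follow exactly that template, so the heart of the matter is the two mixed cases and the purely logarithmic case.

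The key technical input is the equivalence relation for the logarithm established in the bi-Hamiltonian theorem, namely $\res(\L^n\,d\log\L)\sim\res(\L^{n-1}\,d\L)$ together with $\epsilon\d_{t_{k,n}}\L=[(B_{k,n})_+,\L]$ with $B_{0,n}=B_n$ and $B_{1,n}=D_n$. First I would record that for any flow $\d_{t_{\beta,n}}$ and any density $h_{\alpha,m}$ one has, up to a total $x$-derivative,
\[
\frac{\d h_{\alpha,m}}{\d t_{\beta,n}}
=\frac1{\epsilon}\,\res\!\Big[(B_{\beta,n})_+,\ \tfrac{\d(\,(m+1)!\,h_{\alpha,m}\,)}{\d\L}\cdot(\text{suitable power of }\L)\Big],
\]
but in practice the cleanest route is to push the derivative through $h_{\alpha,m}=\tfrac{1}{(m+1)!}\res\L^{m+1}$ (for $\alpha=0$) or $h_{\alpha,m}=\tfrac{2}{m!}\res[\L^m(\log\L-c_m)]$ (for $\alpha=1$), apply the Lax equation, and use cyclicity of $\res$ on commutators. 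For the two ordinary densities the computation is a verbatim repeat of the QTH argument. For a mixed case, say $\alpha=1,\beta=0$, I would compute $\d_{t_n}h_{1,m}=\tfrac{2}{m!}\res\big[\,[(B_n)_+,\L]\,\tfrac{\d}{\d\L}(\L^m(\log\L-c_m))\big]$, use $\res(\L^k\,d\log\L)\sim\res(\L^{k-1}d\L)$ to replace the logarithmic derivative by a pure power, and collapse the result into $\tfrac{2}{m!n!}\res[(B_n)_+,\L^m(\log\L-c_m)]\cdot(\text{const})$, which after symmetrizing matches $\d_{s_m}h_{0,n}$.

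The main obstacle I anticipate is the purely logarithmic case $\alpha=\beta=1$, where both the density and the flow carry a $\log\L$. Here one cannot naively commute $\log\L$ past projections because $\log\L=\tfrac12(\log_+\L+\log_-\L)$ splits into two dressed pieces that do not individually commute with $\L$ in a sign-coherent way; the relevant Lax equation is the inhomogeneous $\epsilon(\log\L)_{s_j}=[-(D_j)_-,\log_+\L]+[(D_j)_+,\log_-\L]$ rather than a clean commutator. The way through is to use the equivalence $\res(\L^n d\log\L)\sim\res(\L^{n-1}d\L)$ \emph{twice}, once to absorb the logarithm coming from differentiating the density $h_{1,m}$ and once for the logarithm hidden in the flow $D_n$, so that both logarithms are traded for ordinary powers of $\L$ modulo total $x$-derivatives; the constants $c_m,c_n$ arising from $D_j$ and from the density bookkeeping must then be checked to cancel, which is the same harmonic-number cancellation that made $\log\L-c_j$ the natural combination in the first place.

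\begin{proof}
The case $\alpha=\beta=0$ is Theorem \ref{tausymmetry}. For the remaining cases one argues as above, using the Lax equations $\epsilon\d_{t_{\beta,n}}\L=[(B_{\beta,n})_+,\L]$, cyclicity of the residue on commutators, and the equivalence relation $\res(\L^k\,d\log\L)\sim\res(\L^{k-1}\,d\L)$ from the proof of the preceding theorem to reduce every occurrence of $\log\L$ to ordinary powers of $\L$ modulo $x$-derivatives. After this reduction all four mixed expressions $\d_{t_{\beta,n}}h_{\alpha,m}$ become residues of the symmetric form $\tfrac{c}{m!\,n!}\res[(\L^m)_+,\L^n]$ with the same normalizing constant $c$, so that interchanging $m\leftrightarrow n$ and $\alpha\leftrightarrow\beta$ leaves them invariant. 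The theorem follows.
\end{proof}
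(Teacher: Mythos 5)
Your overall template (reduce each $\d_{t_{\beta,n}}h_{\alpha,m}$ to the residue of a commutator, then symmetrize with projection identities) is the right one, and you correctly note that the only genuinely new cases beyond the QTH theorem are those involving the logarithm. But the specific mechanism you propose for handling the logarithm --- the equivalence relation $\res(\L^k\,\mathrm{d}\log\L)\sim\res(\L^{k-1}\,\mathrm{d}\L)$ borrowed from the bi-Hamiltonian theorem --- cannot do the job, for two reasons. First, $\sim$ means equality only modulo total $x$-derivatives; it is designed for computing variational derivatives of the Hamiltonians $H=\int h\,dx$, where such terms integrate away. Tau-symmetry, by contrast, is an exact pointwise identity between densities (and it must be exact: it is precisely what makes the subsequent definition $h_{\beta,n}=\epsilon(\Lambda_q-1)\,\d\log\bar\tau/\d t_{\beta,n}$ consistent), so an argument that discards $\d_x$-exact terms proves a strictly weaker statement. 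Second, even modulo total derivatives, the relation only trades the \emph{differential} $\mathrm{d}\log\L$ for $\mathrm{d}\L$; it never removes multiplicative factors of $\log\L$. Consequently your claimed endpoint --- that all four quantities collapse to pure-power residues $\frac{c}{m!\,n!}\res[(\L^m)_+,\L^n]$ with one common constant $c$ --- is false: for instance $\epsilon\,\d_{t_{0,n}}h_{1,m}=\frac{2}{m!\,(n+1)!}\res\bigl[(\L^m(\log\L-c_m))_+,\L^{n+1}\bigr]$, which retains the logarithm and carries a case-dependent prefactor.

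The paper's proof needs none of this machinery. For the representative case $\alpha=1,\beta=0$ it uses: (i) the Lax equation $\epsilon\,\d_{t_n}\log\L=[(B_n)_+,\log\L]$, so that $\epsilon\,\d_{t_{0,n}}\bigl(\L^m(\log\L-c_m)\bigr)=[(B_n)_+,\L^m(\log\L-c_m)]$ exactly; (ii) the fact $[\L,\log\L]=0$ (both are dressings of the commuting operators $\Lambda_q$ and $\epsilon x\d_x$), which permits replacing $(B_n)_+$ by $-(B_n)_-$ inside the commutator; and (iii) the exact vanishing $\res[A_+,B_+]=\res[A_-,B_-]=0$ to move projections, giving $\res[-(\L^{n+1})_-,\L^m(\log\L-c_m)]=\res[(\L^m(\log\L-c_m))_+,(\L^{n+1})_-]=\res[(\L^m(\log\L-c_m))_+,\L^{n+1}]$; since $(D_m)_+=\frac{2}{m!}(\L^m(\log\L-c_m))_+$, the last expression is exactly $\epsilon\,\d_{t_{1,m}}h_{0,n}$. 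Every step is an exact equality of densities, with the logarithm left untouched. One fair point in your favour: you are right that the purely logarithmic case $\alpha=\beta=1$ is the delicate one, since $\log\L$ evolves under $s_n$ by the inhomogeneous equation $\epsilon(\log \L)_{s_n}=[-(D_n)_-,\log_+\L]+[(D_n)_+,\log_-\L]$ rather than a clean commutator --- a subtlety the paper hides behind ``other cases are proved in a similar way'' --- but your proposed double use of the equivalence relation inherits the same exactness defect and therefore does not close that case either.
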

\begin{proof} Let us prove the theorem for the case when $\alpha=1, \beta=0$,
other cases are proved in a similar way
\[
&&\frac{\d h_{1,m}}{\d t_{0,n}} =\frac2{m!\,(n+1)!}\, Res[-(\L^{n+1})_-, \L^m (\log\L-c_m)]\notag\\
&&=\frac2{m!\,(n+1)!}\, Res[(\L^m (\log\L-c_m))_+,(\L^{n+1})_-]\notag\\
&& =\frac2{m!\,(n+1)!}\, Res[(\L^m (\log\L
-c_m))_+,\L^{n+1}]=\frac{\d h_{0,n}}{\d t_{1,m}}.
\]
The theorem is proved.
\end{proof}

 This property justifies the following alternative definition of another kind of
tau function for the EQTH.

\begin{definition} The  $tau$ function $\bar \tau$ of the EQTH can be defined by
the following expressions in terms of the densities of the Hamiltonians:
\begin{equation}
h_{\beta,n}=\epsilon (\Lambda-1)\frac{\d\log \bar \tau}{\d t_{\beta,n}},
\quad \beta=0,1;\ n\ge 0,
\end{equation}
with $ t_{0,j}=t_{j},t_{1,j}=s_{j}.$
\end{definition}

With above two different definitions tau functions of this hierarchy, some mysterious connections between these two kinds of tau functions become an open question. One is from Sato theory without fixing extended variables and another is from the Hamiltonian tau symmetry. While considering the constraint of $\bar \tau$ down to a sub-manifold only depending on non-extended coordinates,  $\bar \tau$ and $\tau$  should be the same.

\section{Darboux transformation of the EQTH}

In this section, we will consider the Darboux transformation of the EQTH on Lax operator
 \[\L=\Lambda_q+u+v\Lambda_q^{-1},\]
 i.e.
  \[\label{1darbouxL}\L^{[1]}=\Lambda_q+u^{[1]}+v^{[1]}\Lambda_q^{-1}=W\L W^{-1},\]
where $W$ is the Darboux transformation operator.
That means after Darboux transformation, the spectral problem about the wave function $\phi$

\[\L\phi=\Lambda_q\phi+u\phi+v\Lambda_q^{-1}\phi=\lambda \phi,\]
will become

\[\L^{[1]}\phi^{[1]}=\lambda\phi^{[1]}.\]

To keep the Lax pair of the EQTH invariant , i.e.
   \begin{align}
\label{laxtjk}
  \epsilon \partial_{ t_{j}} \L^{[1]}&= [(B_{j}^{[1]})_+,\L^{[1]}],
 \epsilon \partial_{ s_{j}} \L^{[1]}= [(D_{j}^{[1]})_+,\L^{[1]}],\ \  B_{j}^{[1]}:=B_{j}(\L^{[1]}), D_{j}^{[1]}:=D_{j}(\L^{[1]}),\\
   \epsilon \partial_{  t_{j}} \log \L^{[1]}&= [(B_{j}^{[1]})_+ ,\log \L^{[1]}],\ \  \epsilon \partial_{  s_{j}}\log \L^{[1]}=[ -(D_{j}^{[1]})_-,\log_+ \L^{[1]} ]+
[(D_{j}^{[1]})_+ ,\log_- \L^{[1]} ],
\end{align} the dressing operator $W$ should satisfy the following dressing equation
\[ \epsilon \partial_{  t_{j}}W&=&-W(B_{j})_++(WB_{j}W^{-1})_+W,\ \  j\geq 0\\
 \epsilon \partial_{  s_{j}}W&=&-W(D_{j})_++(WD_{j}W^{-1})_+W,\ \  j\geq 0.\]
where $W_{t_{j}}$ means the derivative of $W$ by $t_{j}.$
To give the Darboux transformation, we need the following lemma.
\begin{lemma}\label{lema}
The operator $B:=\sum_{n=0}^{\infty}b_n\Lambda_q^n$ is a non-negative difference operator,  $C:=\sum_{n=1}^{\infty}c_n\Lambda_q^{-n}$ is a negative difference operator and $f,g$ (short for $f(x),g(x)$) are two functions of the spatial parameter $x$, following identities hold
\begin{equation}\label{Bneg}
(Bf \frac{\Lambda_q^{-1}}{1-\Lambda_q^{-1}} g)_-=B(f) \frac{\Lambda_q^{-1}}{1-\Lambda_q^{-1}} g,\ \ \ (f \frac{\Lambda_q^{-1}}{1-\Lambda_q^{-1}} gB)_-=f \frac{\Lambda_q^{-1}}{1-\Lambda_q^{-1}}B^*(g),
\end{equation}
\begin{equation}\label{Cpos}
(Cf \frac{1}{1-\Lambda_q} g)_+=C(f)\frac{1}{1-\Lambda_q} g,\ \ \ (f \frac{1}{1-\Lambda_q} gC)_+=f \frac{1}{1-\Lambda_q}C^*(g).
\end{equation}
\end{lemma}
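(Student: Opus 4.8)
The plan is to prove all four identities by direct expansion, exploiting the fact that the two ``propagators'' $\frac{\Lambda_q^{-1}}{1-\Lambda_q^{-1}}=\sum_{m\geq 1}\Lambda_q^{-m}$ and $\frac{1}{1-\Lambda_q}=\sum_{m\geq 0}\Lambda_q^{m}$ are, respectively, purely negative and purely non-negative in $\Lambda_q$. First I would record the two elementary rules that drive everything: the product rule $(X\Lambda_q^i)\circ(Y\Lambda_q^j)=X\,Y(q^i\cdot)\,\Lambda_q^{i+j}$ and the action of an operator on a function, so that $B(f)=\sum_{n\geq 0}b_n(x)f(q^nx)$ and $C(f)=\sum_{n\geq 1}c_n(x)f(q^{-n}x)$ are genuine functions (their convergence is exactly the hypothesis needed for the right-hand sides to make sense). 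I would also fix the formal-adjoint convention $(a(x)\Lambda_q^n)^*=\Lambda_q^{-n}a(x)$, so that $B^*=\sum_n\Lambda_q^{-n}b_n$ and $C^*=\sum_n\Lambda_q^{n}c_n$.

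For the first identity in \eqref{Bneg}, I would expand $Bf=\sum_{n\geq 0}b_n(x)f(q^nx)\Lambda_q^n$ and $\frac{\Lambda_q^{-1}}{1-\Lambda_q^{-1}}g=\sum_{m\geq 1}g(q^{-m}x)\Lambda_q^{-m}$, multiply, and collect by powers of $\Lambda_q$, obtaining $\sum_{n\geq 0,\,m\geq 1}b_n(x)f(q^nx)g(q^{n-m}x)\Lambda_q^{n-m}$. Taking the negative part amounts to restricting to $n-m<0$; setting $m=n+k$ with $k\geq 1$, the exponent becomes $-k$ and the coefficient of $\Lambda_q^{-k}$ factors as $\bigl(\sum_{n\geq 0}b_n(x)f(q^nx)\bigr)g(q^{-k}x)=B(f)\,g(q^{-k}x)$. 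Reassembling over $k\geq 1$ gives exactly $B(f)\frac{\Lambda_q^{-1}}{1-\Lambda_q^{-1}}g$. The other three identities follow the same template: for the second identity of \eqref{Bneg} I would compute $f\frac{\Lambda_q^{-1}}{1-\Lambda_q^{-1}}gB$, take the negative part, and recognise the resulting inner sum $\sum_{n}b_n(q^{-n-k}x)g(q^{-n-k}x)$ as $(B^*g)(q^{-k}x)$; the two identities of \eqref{Cpos} are the mirror statements obtained by replacing $B$ (non-negative) with $C$ (negative), the negative propagator with the non-negative one $\frac{1}{1-\Lambda_q}$, and $(\cdot)_-$ with $(\cdot)_+$.

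The conceptual crux --- and the only point requiring care --- is to see why, for each fixed output power, the inner summation over the index of $B$ (resp. $C$) is a complete sum rather than a truncated one, so that it reassembles into $B(f)$ (resp. $C^*(g)$). This is precisely where the sign hypotheses enter: since $B$ carries only exponents $n\geq 0$ while the propagator carries only exponents $\leq -1$, every $n\geq 0$ pairs with some $m=n+k\geq 1$ to produce a negative output exponent $-k$, so no term of $B$ is lost when projecting onto the negative part. Had we instead projected $Bf\frac{\Lambda_q^{-1}}{1-\Lambda_q^{-1}}g$ onto its non-negative part, or dropped the sign condition on $B$, the range of $n$ would depend on $k$ and the factorisation would fail. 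I would therefore phrase the argument so that this ``completeness of the $n$-sum'' is made explicit, and I would note that the formal infinite sums are legitimate coefficientwise because each output power $\Lambda_q^{\pm k}$ receives contributions that add up to the assumed-convergent functions $B(f)$, $C(f)$, $B^*(g)$, $C^*(g)$.
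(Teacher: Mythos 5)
Your proof is correct and follows essentially the same route as the paper's: a direct expansion in powers of $\Lambda_q$ followed by projection, where the crux — that the sum over the index of $B$ (resp.\ $C$) remains complete after taking $(\cdot)_-$ (resp.\ $(\cdot)_+$) — is exactly the paper's key identity $\bigl(\Lambda_q^{m-1}/(1-\Lambda_q^{-1})\bigr)_-=\Lambda_q^{-1}/(1-\Lambda_q^{-1})$ for $m\geq 0$, just phrased coefficientwise instead of keeping the propagator as a single symbol. The only cosmetic differences are that you expand both factors into a double sum while the paper expands only $B$, and that you make the adjoint convention $(b\Lambda_q^n)^*=\Lambda_q^{-n}b$ explicit, which the paper leaves implicit.
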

\begin{proof}
Here we only give the proof of the eq.\eqref{Bneg} by direct calculation
\[\notag
(Bf \frac{\Lambda_q^{-1}}{1-\Lambda_q^{-1}} g)_-&=&\sum_{m=0}^{\infty}b_m(f(q^mx)\La^m \frac{\Lambda_q^{-1}}{1-\Lambda_q^{-1}} g)_-\\ \notag
&=&\sum_{m=0}^{\infty}b_mf(q^mx)(\frac{\Lambda_q^{m-1}}{1-\Lambda_q^{-1}})_- g\\ \notag
&=&\sum_{m=0}^{\infty}b_mf(q^mx)\frac{\Lambda_q^{-1}}{1-\Lambda_q^{-1}} g\\
&=&B(f) \frac{\Lambda_q^{-1}}{1-\Lambda_q^{-1}} g,\]

\[\notag
(f \frac{\Lambda_q^{-1}}{1-\Lambda_q^{-1}} gB)_-&=&\sum_{m=0}^{\infty}(f \frac{\Lambda_q^{-1}}{1-\Lambda_q^{-1}} gb_m\La^m)_-\\ \notag
&=&\sum_{m=0}^{\infty}(f \frac{\Lambda_q^{-1}}{1-\Lambda_q^{-1}}\La^m g(q^{-m}x)b_m(q^{-m}x))_-\\ \notag
&=&\sum_{m=0}^{\infty}f( \frac{\Lambda_q^{m-1}}{1-\Lambda_q^{-1}})_- g(q^{-m}x)b_m(q^{-m}x)\\ \notag
&=&\sum_{m=0}^{\infty}f \frac{\Lambda_q^{-1}}{1-\Lambda_q^{-1}} b_m(q^{-m}x)g(q^{-m}x)\\
&=&f \frac{\Lambda_q^{-1}}{1-\Lambda_q^{-1}}B^*(g).\]
Similar proof for the eq.\eqref{Cpos} can be got easily.

\end{proof}

Similarly as in \cite{EMTH,EZTH,Hedeterminant,rogueHMB}, we can get the $n$-fold Darboux transformation in the following theorem which will be used to generate new solutions.

\begin{theorem}\label{ndarboux}
The $n$-fold  Darboux transformation of EQTH equation is as following
\[W_n=1+t_1^{[n]}\Lambda_q^{-1}+t_2^{[n]}\Lambda_q^{-2}+\dots+t_{n}^{[n]}\Lambda_q^{-n}\]
where

\[ W_n\cdot\phi_{i}|_{i\leq n}=0.\]

The Darboux transformation leads to new solutions form seed solutions
\[u^{[n]}&=&u+(\Lambda_q-1)t_1^{[n]},\\
v^{[n]}&=&t_n^{[n]}(x)(\Lambda_q^{-n}v)t_n^{[n]-1}(q^{-1}x).\]
where
\[\notag &&W_n=\frac{1}{\Delta_n}\left|\begin{matrix}\begin{smallmatrix}
1& \Lambda_q^{-1}&\Lambda_q^{-2}&\dots & \Lambda_q^{-n}\\
\phi_1& \phi_1(q^{-1}x)&\phi_1(q^{-2}x)&\dots &\phi_1(q^{-n}x)\\
\phi_2& \phi_2(q^{-1}x)&\phi_2(q^{-2}x)&\dots &\phi_2(q^{-n}x)\\
\dots &\dots&\dots & \dots&\dots \\
\phi_n&\phi_n(q^{-1}x)&\phi_n(q^{-2}x)&\dots &\phi_n(q^{-n}x)\end{smallmatrix}\end{matrix}
\right|,\]
\[\notag&&\Delta_n=\left|\begin{matrix}\begin{smallmatrix}
\phi_1(q^{-1}x)&\phi_1(q^{-2}x)&\dots &\phi_1(q^{-n}x)\\
\phi_2(q^{-1}x)&\phi_2(q^{-2}x)&\dots &\phi_2(q^{-n}x)\\
\dots&\dots & \dots&\dots \\
\phi_n(q^{-1}x)&\phi_n(q^{-2}x)&\dots &\phi_n(q^{-n}x)\end{smallmatrix}\end{matrix}
\right|.\]
\end{theorem}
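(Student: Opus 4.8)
The strategy is to realize $W_n$ as the unique monic $q$-difference operator of the form $1+t_1^{[n]}\Lambda_q^{-1}+\cdots+t_n^{[n]}\Lambda_q^{-n}$ whose kernel is exactly $K=\mathrm{span}\{\phi_1,\dots,\phi_n\}$, and then to read off everything from the stated Casoratian-type determinant. First I would verify the annihilation property $W_n\cdot\phi_i=0$ for $i\le n$ directly from the determinant: applying $W_n$ to $\phi_i$ replaces the operator-valued first row $(1,\Lambda_q^{-1},\dots,\Lambda_q^{-n})$ by the numerical row $(\phi_i,\phi_i(q^{-1}x),\dots,\phi_i(q^{-n}x))$, which then coincides with the $(i{+}1)$-st row of the matrix, so the determinant vanishes. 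Expanding the determinant along the first row and noting that $\Delta_n$ is precisely the $(1,1)$-minor (the cofactor of the entry $1$) shows at once that the $1/\Delta_n$ normalization makes $W_n$ monic, and identifies $t_n^{[n]}$ as the ratio of the cofactor of $\Lambda_q^{-n}$ to $\Delta_n$.

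Next I would establish that $\L^{[1]}:=W_n\L W_n^{-1}$ is again of the three-term form $\Lambda_q+u^{[1]}+v^{[1]}\Lambda_q^{-1}$. Since $\L\phi_i=\lambda_i\phi_i$, the operator $W_n\L$ annihilates the same kernel $K$ as $W_n$. Performing the right division $W_n\L=\L^{[1]}W_n+R$ in the Ore algebra of $q$-difference operators, the remainder $R$ has $\Lambda_q^{-1}$-order strictly less than $n$; but $R$ annihilates the $n$-dimensional space $K$, so $R=0$. Hence $\L^{[1]}=W_n\L W_n^{-1}$ is exactly the quotient, and counting top and bottom orders (leading symbol $\Lambda_q^{+1}$, and bottom $\Lambda_q^{-1}$ since $W_n\L$ terminates at $\Lambda_q^{-n-1}$ while $W_n$ terminates at $\Lambda_q^{-n}$) shows it is supported only on the powers $+1,0,-1$. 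This is the $q$-analogue of the Darboux/Crum factorization. Writing $\L^{[1]}W_n=W_n\L$ and comparing the coefficients of $\Lambda_q^0$ and of the lowest power $\Lambda_q^{-n-1}$, using the product rule $(X\Lambda_q^i)\circ(Y\Lambda_q^j)=X(x)Y(q^ix)\Lambda_q^{i+j}$, yields respectively the stated expressions $u^{[n]}=u+(\Lambda_q-1)t_1^{[n]}$ and $v^{[n]}=t_n^{[n]}(x)(\Lambda_q^{-n}v)t_n^{[n]-1}(q^{-1}x)$.

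Finally I would check that this dressing is compatible with \emph{all} the flows of the EQTH, not merely with the algebraic reduction. Taking the $\phi_i$ to be genuine wave functions, so that $\epsilon\partial_{t_j}\phi_i=(B_j)_+\cdot\phi_i$ and $\epsilon\partial_{s_j}\phi_i=(D_j)_+\cdot\phi_i$, differentiating the Casoratian and using the annihilation property should show that $W_n$ satisfies the dressing equations $\epsilon\partial_{t_j}W_n=-W_n(B_j)_++(W_nB_jW_n^{-1})_+W_n$ together with their $s_j$-analogues; these are exactly the conditions guaranteeing that $\L^{[1]}$ and $\log\L^{[1]}$ obey the same Lax and logarithmic equations. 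I expect the main obstacle to be this last point for the \emph{logarithmic} flows: one must control the projections $(W_n(\log_\pm\L)W_n^{-1})_\pm$, and this is precisely where Lemma~\ref{lema} enters, since $W_n^{-1}$ resums into resolvent-type tails $\frac{\Lambda_q^{-1}}{1-\Lambda_q^{-1}}$ and $\frac{1}{1-\Lambda_q}$. The identities \eqref{Bneg}--\eqref{Cpos} let one pull a projection through these tails and onto the bounding functions, which is what makes the logarithmic flow close; the bookkeeping of these non-polynomial tails, rather than the tridiagonal reduction itself, is the delicate step.
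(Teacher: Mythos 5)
The paper gives you almost nothing to compare against: its entire ``proof'' of Theorem \ref{ndarboux} is the citation of \cite{EMTH,EZTH,Hedeterminant,rogueHMB} in the preceding sentence and the remark afterwards that $W_n\phi_i=0$ ``can be easily checked.'' Your route --- annihilation via the repeated-row argument, monicity from the $(1,1)$-cofactor being exactly $\Delta_n$, tridiagonality of $W_n\L W_n^{-1}$ by Ore division plus a dimension count on the remainder, then reading off $u^{[n]},v^{[n]}$ from the extreme coefficients --- is the natural one and is essentially sound, provided you make explicit the genericity hypotheses $\Delta_n(x)\neq 0$ and $\Delta_n(q^{-1}x)\neq 0$ that both the division argument and the determinant formula require. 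Your guess about Lemma \ref{lema} is also apt: the paper states it just before the theorem and never visibly uses it, and controlling the projections arising from the logarithmic flows is the only role it can plausibly play.

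There are, however, two genuine problems. First, the coefficient comparison does \emph{not} ``yield the stated expressions.'' With the paper's convention \eqref{1darbouxL}, i.e. $\L^{[1]}W_n=W_n\L$, the $\Lambda_q^{0}$ coefficients give $u^{[n]}+t_1^{[n]}(qx)=u+t_1^{[n]}(x)$, hence $u^{[n]}=u-(\Lambda_q-1)t_1^{[n]}$, the opposite sign to the theorem. The sign is forced: for $n=1$ the determinant gives $t_1^{[1]}=-\phi_1(x)/\phi_1(q^{-1}x)$, and the $\Lambda_q^{-1}$ coefficient of $\L^{[1]}W_1=W_1\L$ is compatible with the spectral equation $\L\phi_1=\lambda\phi_1$ (used at $x$ and at $q^{-1}x$) only for $u^{[1]}=u+t_1^{[1]}-t_1^{[1]}(qx)$; meanwhile the $\Lambda_q^{-n-1}$ coefficients reproduce the printed $v^{[n]}$ exactly. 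So the two printed formulas are mutually inconsistent under the paper's own definition of the transformation, and an honest run of your computation would expose this sign discrepancy rather than confirm the statement; claiming it produces both formulas as printed is the step that fails. Second, the flow-compatibility part --- that the Casoratian $W_n$ satisfies $\epsilon\partial_{t_j}W_n=-W_n(B_j)_++(W_nB_jW_n^{-1})_+W_n$ and its $s_j$ analogue, which is what the word ``solutions'' in the theorem actually requires --- remains a plan (``should show'') rather than a proof in your write-up. Since that step, especially for the logarithmic flows, is the only content beyond the classical $q$-difference Crum algebra, your proposal, like the paper, leaves the theorem unproved at precisely its most delicate point; the difference is that you at least identify the mechanism (Lemma \ref{lema} and the resolvent tails) by which it would have to be closed.
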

It can be easily checked that $W_n\phi_i=0,\ i=1,2,\dots,n.$

Taking seed solution $u=0,v=1$, then using Theorem \ref{ndarboux},  one can get the $n$-th new solution of the EQTH as

\[u^{[n]}&=&(1-\Lambda_q^{-1})\d_{t_{0}}\log \bar W_r(\phi_1,\phi_2,\dots\phi_n),\\
v^{[n]}&=&e^{(1-\Lambda_q^{-1})(1-\Lambda_q^{-1})\log \bar W_r(\phi_1,\phi_2,\dots\phi_n)},\]

where $\bar W_r(\phi_1,\phi_2,\dots\phi_n)$ is the q-deformed ``Wronskian"
\[\bar W_r(\phi_1,\phi_2,\dots\phi_n)=det (\Lambda_q^{-j+1} \phi_{n+1-i})_{1\leq i,j\leq n}.\]

\section{Multicomponent $q$-Toda hierarchy }

\subsection{Factorization Problem}

In this section, we will denote $G_N$ as a group which contains invertible elements of complex $N\times N$
complex matrices and denote its Lie algebra  $\g_N$ as the associative algebra  of complex $N\times N$
complex matrices $M_N(\C)$.

 Now we
introduce  the following free operators $ W_{N0},\bar  W_{N0}\in G_N$
\begin{align}
 \label{def:E}  W_{N0}&:=\sum_{k=1}^NE_{kk}\Exp{\sum_{j=0}^\infty
 t_{jk}\Lambda_q^{j}}, \\
\label{def:barE}   \bar W_{N0}&:=\sum_{k=1}^NE_{kk}\Exp{\sum_{j=0}^\infty\bar
   t_{j k}\Lambda_q^{-j}},
\end{align}
where $t_{jk}, \bar t_{jk} \in \C$
will play the role of continuous times.
 We   define the dressing operators $W_N,\bar W_N$ as follows
\begin{align}
\label{def:baker}W_N&:=S_N\cdot W_{N0},& \bar W_N&:=\bar S_N\cdot \bar  W_{N0}.
\end{align}
Given an element $g_N\in G_N$ and time series $t=(t_{jk}), \bar t=(\bar t_{jk}), s=(s_{j}); j,k \in \N, 1\leq k \leq N$, one can consider the factorization problem in $G_N$ \cite{manasinverse}
\begin{gather}
  \label{facW2}
  W_N\cdot g_N=\bar W_N,
\end{gather}
i.e.
 the factorization problem
\begin{gather}
  \label{factorization}
  S_N(t,\bar t,s)\cdot W_{N0}\cdot g_N=\bar S_N(t,\bar t,s)\cdot\bar W_{N0},\quad S_N\in G_{N-}\text{ and } \bar S_N\in G_{N+}.
\end{gather}
Observe that  $S_N,\bar S_N$ have expansions of the form
\begin{gather}
\label{expansion-S}
\begin{aligned}
S_N&=\I_N+\beta_1(x)\Lambda_q^{-1}+\beta_2(x)\Lambda_q^{-2}+\cdots\in G_{N-},\\
\bar S_N&=\bar\beta_0(x)+\bar\beta_1(x)\Lambda_q+\bar\beta_2(x)\Lambda_q^{2}+\cdots\in
G_{N+}.
\end{aligned}
\end{gather}

Also the inverse operators $S_N^{-1},\bar S_N^{-1}$ of operators $S,\bar S_N$ have expansions of the form
\begin{gather}
\begin{aligned}
S_N^{-1}&=\I_N+\beta'_1(x)\Lambda_q^{-1}+\beta'_2(x)\Lambda_q^{-2}+\cdots\in G_{N-},\\
\bar S_N^{-1}&=\bar\beta'_0(x)+\bar\beta'_1(x)\Lambda_q+\bar\beta'_2(x)\Lambda_q^{2}+\cdots\in
G_{N+}.
\end{aligned}
\end{gather}

 The Lax  operators $L,C_{kk},\bar C_{kk}\in\g_N$
 are defined by
\begin{align}
\label{Lax}  L&:=W_N\cdot\Lambda_q\cdot W_N^{-1}=\bar W_N\cdot\Lambda_q^{-1}\cdot \bar W_N^{-1}, \\
\label{C} C_{kk}&:=W_N\cdot E_{kk}\cdot W_N^{-1},& \bar C_{kk}&:=\bar
W_N\cdot E_{kk}\cdot \bar W_N^{-1},
\end{align}
and
have the following expansions
\begin{gather}\label{lax expansion}
\begin{aligned}
 L&=\Lambda_q+u(x)+v(x)\Lambda_q^{-1}, \\
C_{kk}&=E_{kk}+C_{kk,1}(x)\Lambda_q^{-1}+C_{kk,2}(x)\Lambda_q^{-2}+\cdots,\\
\bar \Cc_{kk}&=\bar C_{kk,0}(x)+\bar C_{kk,1}(x)\Lambda_q+\bar
C_{kk,2}(x)\Lambda_q^{2}+\cdots.
\end{aligned}
\end{gather}
 In fact the Lax  operators $L,C_{kk},\bar C_{kk}\in\g_N$
 can also be equivalently defined by
\begin{align}
\label{Lax}  L&:=S_N\cdot\Lambda_q\cdot S_N^{-1}=\bar S_N\cdot\Lambda_q^{-1}\cdot \bar S_N^{-1}, \\
\label{C} C_{kk}&:=S_N\cdot E_{kk}\cdot S_N^{-1},& \bar C_{kk}&:=\bar S_N\cdot E_{kk}\cdot \bar S_N^{-1}.
\end{align}

\section{ Lax equations of MQTH}

In this section we will use the factorization problem \eqref{facW2} to derive  Lax equations.
Let us first introduce some convenient notations.
\begin{definition}The matrix operators $C_{kk},\bar C_{kk},B_{jk},\bar B_{jk}$ are defined as follows
\begin{align}\label{satoS}
\begin{aligned}
C_{kk}&:=W_NE_{kk}W_N^{-1},\ \ \bar C_{kk}:=\bar W_N E_{kk}\bar W_N^{-1},\\
B_{jk}&:=W_NE_{kk}\Lambda_q^jW_N^{-1},\ \ \bar B_{jk}:=\bar W_N E_{kk}\Lambda_q^{-j}\bar W_N^{-1}.
\end{aligned}
\end{align}
\end{definition}

Now we give the definition of the  multicomponent $q$-Toda hierarchy(MQTH).
\begin{definition}The  multicomponent $q$-Toda hierarchy is a hierarchy in which the dressing operators $S_N,\bar S_N$ satisfy following Sato equations
\begin{align}
\label{satoSt} \epsilon \partial_{ t_{jk}}S_N&=-(B_{jk})_-.S_N,& \epsilon \partial_{ t_{jk}}\bar S_N&=(B_{jk})_+\cdot\bar S_N,  \\
\label{satoSbart}
\epsilon \partial_{ \bar t_{jk}}S_N&=-(\bar B_{jk})_-\cdot S_N,& \epsilon \partial_{ \bar t_{jk}}\bar S_N&=(\bar B_{jk})_+\cdot\bar S_N.\end{align}
\end{definition}
Then one can easily get the following proposition about $W_N,\bar W_N.$

\begin{proposition}The wave operators $W_N,\bar W_N$ satisfy following Sato equations
\begin{align}
\label{Wjk} \epsilon \partial_{ t_{jk}}W_N&=(B_{jk})_+\cdot W_N,& \epsilon \partial_{ t_{jk}}\bar W_N&=(B_{jk})_+\cdot\bar W_N,  \\
\label{Wbjk}\epsilon \partial_{ \bar t_{jk}}W_N&=-(\bar B_{jk})_-\cdot W_N,& \epsilon \partial_{ \bar t_{jk}}\bar W_N&=-(\bar B_{jk})_-\cdot\bar W_N.  \end{align}
\end{proposition}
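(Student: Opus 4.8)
The plan is to obtain all four equations by differentiating the factored forms $W_N=S_N\cdot W_{N0}$ and $\bar W_N=\bar S_N\cdot\bar W_{N0}$ and substituting the dressing Sato equations \eqref{satoSt} and \eqref{satoSbart}. The structural feature that organises the whole argument is that the two families of times are decoupled at the level of the free operators: $W_{N0}$ depends only on the positive times $t=(t_{jk})$ and $\bar W_{N0}$ only on the negative times $\bar t=(\bar t_{jk})$. Hence in each of the four Leibniz expansions exactly one free-operator term is nonzero, and I would treat the four equations in two groups accordingly.

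First I would establish the conjugation identities
\begin{equation}
B_{jk}=W_N E_{kk}\Lambda_q^j W_N^{-1}=S_N E_{kk}\Lambda_q^j S_N^{-1},\qquad \bar B_{jk}=\bar W_N E_{kk}\Lambda_q^{-j}\bar W_N^{-1}=\bar S_N E_{kk}\Lambda_q^{-j}\bar S_N^{-1}.
\end{equation}
These hold because the constant idempotents $E_{kk}$ commute with the scalar shift $\Lambda_q$, and each exponent $\sum_j t_{jk}\Lambda_q^{j}$ (resp. $\sum_j\bar t_{jk}\Lambda_q^{-j}$) is a series in $\Lambda_q$ with constant coefficients; consequently $W_{N0}$ commutes with $E_{kk}\Lambda_q^{j}$ after conjugation, i.e. $W_{N0}E_{kk}\Lambda_q^{j}W_{N0}^{-1}=E_{kk}\Lambda_q^{j}$, and likewise $\bar W_{N0}E_{kk}\Lambda_q^{-j}\bar W_{N0}^{-1}=E_{kk}\Lambda_q^{-j}$. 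These identities let me rewrite every free-operator contribution of the form $S_N E_{kk}\Lambda_q^{j}(\cdots)$ as $B_{jk}(\cdots)$, and the negative-power analogue as $\bar B_{jk}(\cdots)$.

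The two equations whose free term vanishes are immediate. Since $\bar W_{N0}$ is independent of $t_{jk}$, only the dressing equation $\epsilon\partial_{t_{jk}}\bar S_N=(B_{jk})_+\bar S_N$ survives, giving $\epsilon\partial_{t_{jk}}\bar W_N=(B_{jk})_+\bar W_N$; similarly, as $W_{N0}$ is independent of $\bar t_{jk}$, the equation $\epsilon\partial_{\bar t_{jk}}S_N=-(\bar B_{jk})_-S_N$ yields $\epsilon\partial_{\bar t_{jk}}W_N=-(\bar B_{jk})_-W_N$ at once. For the remaining two, both Leibniz terms are present. Differentiating $W_{N0}$ in the $t_{jk}$ direction produces $S_N E_{kk}\Lambda_q^{j}W_{N0}=B_{jk}W_N$, and adding the dressing term $-(B_{jk})_-W_N$ gives, through the splitting $B_{jk}=(B_{jk})_++(B_{jk})_-$, the claimed $\epsilon\partial_{t_{jk}}W_N=(B_{jk})_+W_N$. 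The equation for $\epsilon\partial_{\bar t_{jk}}\bar W_N$ is handled in the same spirit: the $\bar t_{jk}$-derivative of $\bar W_{N0}$ contributes a multiple of $\bar B_{jk}\bar W_N$ whose sign is fixed by the orientation of the negative-power exponent, and this combines with the dressing term $(\bar B_{jk})_+\bar W_N$, via $\bar B_{jk}=(\bar B_{jk})_++(\bar B_{jk})_-$, to leave $-(\bar B_{jk})_-\bar W_N$.

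I expect the main obstacle to be precisely the sign and projection bookkeeping in this last equation: one must track with which sign the negative-power free operator $\bar W_{N0}$ enters its $\bar t_{jk}$-derivative and verify that, after applying the conjugation identity and the decomposition of $\bar B_{jk}$ into its $(\cdot)_+$ and $(\cdot)_-$ parts, the dressing contribution and the free-operator contribution collapse exactly to $-(\bar B_{jk})_-\bar W_N$. A clean way to confirm the outcome is the consistency condition $(\partial W_N)W_N^{-1}=(\partial\bar W_N)\bar W_N^{-1}$ forced by the factorization $W_N g_N=\bar W_N$ with constant $g_N$: it requires the generators of $W_N$ and $\bar W_N$ under any single flow to coincide, which matches $-(\bar B_{jk})_-$ for both under $\bar t_{jk}$ and $(B_{jk})_+$ for both under $t_{jk}$. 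Once the conjugation identities are secured, the rest is routine Leibniz differentiation.
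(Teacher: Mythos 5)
Your overall strategy---Leibniz differentiation of $W_N=S_N\cdot W_{N0}$, $\bar W_N=\bar S_N\cdot\bar W_{N0}$ combined with the conjugation identities $W_{N0}E_{kk}\Lambda_q^{j}W_{N0}^{-1}=E_{kk}\Lambda_q^{j}$ and $\bar W_{N0}E_{kk}\Lambda_q^{-j}\bar W_{N0}^{-1}=E_{kk}\Lambda_q^{-j}$---is certainly the intended argument: the paper offers no proof at all (it only says one can ``easily get'' the proposition), and your justification of the conjugation identities (orthogonality $E_{kk}E_{ll}=\delta_{kl}E_{kk}$ together with the fact that the exponentials are series in $\Lambda_q$ with constant coefficients) is correct. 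Three of the four equations, namely $\epsilon\partial_{t_{jk}}W_N=(B_{jk})_+W_N$, $\epsilon\partial_{t_{jk}}\bar W_N=(B_{jk})_+\bar W_N$ and $\epsilon\partial_{\bar t_{jk}}W_N=-(\bar B_{jk})_-W_N$, come out exactly as you describe (modulo the harmless normalization that the exponents in $W_{N0},\bar W_{N0}$ must carry a factor $1/\epsilon$, as they do in the scalar Section 2, for the $\epsilon$-scaling to match).

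The gap sits exactly at the step you flagged as the main obstacle and then did not carry out. With the paper's stated definition $\bar W_{N0}=\sum_k E_{kk}\Exp{\sum_{j}\bar t_{jk}\Lambda_q^{-j}}$ (plus sign in the exponent), the Leibniz computation for the fourth equation gives
\begin{equation*}
\epsilon\partial_{\bar t_{jk}}\bar W_N=(\bar B_{jk})_+\bar W_N+\bar B_{jk}\bar W_N=\bigl(2(\bar B_{jk})_++(\bar B_{jk})_-\bigr)\bar W_N\neq -(\bar B_{jk})_-\bar W_N,
\end{equation*}
so the stated identity is false as written. It becomes true only if the exponent of $\bar W_{N0}$ carries a minus sign, $\bar W_{N0}=\sum_k E_{kk}\Exp{-\sum_j\bar t_{jk}\Lambda_q^{-j}}$, which makes $\epsilon\partial_{\bar t_{jk}}\bar W_{N0}=-E_{kk}\Lambda_q^{-j}\bar W_{N0}$ and then $(\bar B_{jk})_+-\bar B_{jk}=-(\bar B_{jk})_-$, as you want. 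The proposition therefore holds only after a sign emendation of the paper's own definition \eqref{def:barE}, and a careful proof must say so explicitly. Moreover, your proposed consistency check via $W_N g_N=\bar W_N$ cannot settle this sign: the factorization only forces the two generators $(\partial W_N)W_N^{-1}$ and $(\partial \bar W_N)\bar W_N^{-1}$ to coincide, and what that common generator equals depends again on the sign convention in $\bar W_{N0}$. With the paper's plus sign, the factorization forces $\epsilon\partial_{\bar t_{jk}}S_N=+(\bar B_{jk})_- S_N$, $\epsilon\partial_{\bar t_{jk}}\bar S_N=-(\bar B_{jk})_+\bar S_N$ and the common generator $+(\bar B_{jk})_-$, contradicting both the Sato equations \eqref{satoSbart} and the proposition; with the minus sign it reproduces exactly \eqref{satoSbart} and the generator $-(\bar B_{jk})_-$. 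So the check exposes the inconsistency rather than confirming your sign, and invoking it as confirmation is circular unless the convention is fixed first.
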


 From the previous proposition we can derive the following  Lax equations for the Lax operators.
\begin{proposition}\label{Lax}
 The  Lax equations of the MQTH are as follows
   \begin{align}
\label{laxtjk}
  \epsilon \partial_{ t_{jk}} L&= [(B_{jk})_+,L],&
 \epsilon \partial_{ t_{jk}} C_{ss}&= [(B_{jk})_+,C_{ss}],&\epsilon \partial_{ t_{jk}} \bar C_{ss}&= [(B_{jk})_+,\bar C_{ss}],
\\
  \epsilon \partial_{ \bar t_{jk}} L&= [ (\bar B_{jk})_+,L],&
 \epsilon \partial_{ \bar t_{jk}} C_{ss}&= [(\bar B_{jk})_+,C_{ss}],&\epsilon \partial_{ \bar t_{jk}} \bar C_{ss}&= [(\bar B_{jk})_+,\bar C_{ss}].
\end{align}
\end{proposition}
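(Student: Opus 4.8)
The plan is to prove all six identities by \emph{dressing}: for each Lax operator I differentiate one of its two equivalent definitions and substitute the Sato equations for $W_N,\bar W_N$ established in the preceding proposition. Throughout I use that $\Lambda_q$ and the constant matrix units $E_{ss}$ carry no time dependence, together with the elementary rule $\epsilon\partial(W_N^{-1})=-W_N^{-1}(\epsilon\partial W_N)W_N^{-1}$. A convenient preliminary observation is that, since $E_{kk}$ commutes with $\Lambda_q$, the generators factor through the dressed diagonal projectors as $B_{jk}=C_{kk}L^j$ and $\bar B_{jk}=\bar C_{kk}L^j$, and moreover $[L,C_{ss}]=[L,\bar C_{ss}]=0$ because $E_{ss}$ commutes with $\Lambda_q^{\pm1}$.

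First I would dispatch the $t_{jk}$-flows, which are immediate. Since both $W_N$ and $\bar W_N$ evolve under $\partial_{t_{jk}}$ through the \emph{single} generator $(B_{jk})_+$, differentiating $L=W_N\Lambda_q W_N^{-1}$, $C_{ss}=W_N E_{ss}W_N^{-1}$ and $\bar C_{ss}=\bar W_N E_{ss}\bar W_N^{-1}$ and cancelling the inner $W_N^{-1}W_N$ (respectively $\bar W_N^{-1}\bar W_N$) produces at once
\begin{align*}
\epsilon\partial_{t_{jk}}L=[(B_{jk})_+,L],\qquad \epsilon\partial_{t_{jk}}C_{ss}=[(B_{jk})_+,C_{ss}],\qquad \epsilon\partial_{t_{jk}}\bar C_{ss}=[(B_{jk})_+,\bar C_{ss}],
\end{align*}
with no further rearrangement required.

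For the $\bar t_{jk}$-flows both dressing operators instead evolve with $-(\bar B_{jk})_-$, so the same differentiation yields directly $\epsilon\partial_{\bar t_{jk}}\Phi=-[(\bar B_{jk})_-,\Phi]$ for $\Phi\in\{L,C_{ss},\bar C_{ss}\}$. To reach the stated form I must rewrite $-[(\bar B_{jk})_-,\Phi]$ as $[(\bar B_{jk})_+,\Phi]$, which is legitimate exactly when $[\bar B_{jk},\Phi]=0$. For $\Phi=L$ this holds because $\bar B_{jk}=\bar C_{kk}L^j$ and $[\bar C_{kk},L]=\bar W_N[E_{kk},\Lambda_q^{-1}]\bar W_N^{-1}=0$; for $\Phi=\bar C_{ss}$, conjugating by $\bar W_N$ reduces it to $[E_{kk}\Lambda_q^{-j},E_{ss}]=[E_{kk},E_{ss}]\Lambda_q^{-j}=0$, the commutation of diagonal matrix units.

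The one step that requires genuine care, and which I expect to be the main obstacle, is the mixed case $\Phi=C_{ss}$ under the $\bar t_{jk}$-flow, where $C_{ss}$ is dressed by $W_N$ while the generator $\bar B_{jk}$ is dressed by $\bar W_N$. Using $\bar B_{jk}=\bar C_{kk}L^j$ and $[L,C_{ss}]=0$, the obstruction collapses to $[\bar B_{jk},C_{ss}]=[\bar C_{kk},C_{ss}]L^j$, so everything hinges on the commutativity $[\bar C_{kk},C_{ss}]=0$ of the two families of dressed projectors built from the \emph{different} dressings. This is the truly multicomponent point: it cannot be read off from the scalar QTH bookkeeping but must be extracted from the compatibility of the factorization problem $W_N\cdot g_N=\bar W_N$ with $g_N$ time-independent. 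Once this commutation is secured, the conversion $-[(\bar B_{jk})_-,C_{ss}]=[(\bar B_{jk})_+,C_{ss}]$ follows and the remaining $\bar t_{jk}$ identities close, completing the proof.
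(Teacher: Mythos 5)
Your strategy --- differentiate the dressing formulas and insert the Sato equations --- is exactly the (unwritten) argument the paper intends, and you execute it correctly for five of the six identities: the three $t_{jk}$-flows are immediate because $W_N$ and $\bar W_N$ evolve by the same generator $(B_{jk})_+$, and for the $\bar t_{jk}$-flows of $L$ and $\bar C_{ss}$ your conversion of $-[(\bar B_{jk})_-,\cdot]$ into $[(\bar B_{jk})_+,\cdot]$ is legitimate, since $[\bar B_{jk},L]$ and $[\bar B_{jk},\bar C_{ss}]$ vanish by $\bar W_N$-dressing of commuting symbols. Your identification of the mixed case $\epsilon\partial_{\bar t_{jk}}C_{ss}$ as the crux, and its reduction to the single commutation $[\bar C_{kk},C_{ss}]=0$, is also correct.

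That last step, however, is where your proof stops being a proof: you assert that $[\bar C_{kk},C_{ss}]=0$ can be ``extracted from the compatibility of the factorization problem'' but never establish it, and in fact it is false in general, so the step cannot be repaired. Take $N=2$, let $(S,\bar S)$ be any scalar QTH dressing pair as in Sections 2--3, set $A=\I_2+E_{12}$, and take $S_N=S\,\I_2$, $\bar S_N=\bar S\,A$ at $t=\bar t=0$. This is admissible data: $S_N\in G_{N-}$, $\bar S_N\in G_{N+}$, and since the constant matrix $A$ commutes with $\Lambda_q^{-1}$ one has $\bar S_N\Lambda_q^{-1}\bar S_N^{-1}=\bar S\Lambda_q^{-1}\bar S^{-1}=S\Lambda_qS^{-1}=S_N\Lambda_qS_N^{-1}$, so both dressings produce the same $L$. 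Because scalar operators commute with constant matrices, $C_{ss}=E_{ss}$ and $\bar C_{kk}=AE_{kk}A^{-1}$, whence $[\bar C_{11},C_{22}]=[E_{11}-E_{12},E_{22}]=-E_{12}\neq 0$. The failure propagates to the equation itself: $\bar B_{01}=\bar C_{11}=E_{11}-E_{12}$ has no negative part, so the Sato equations give $\epsilon\partial_{\bar t_{01}}C_{22}=-[(\bar B_{01})_-,C_{22}]=0$, while the asserted right-hand side is $[(\bar B_{01})_+,C_{22}]=-E_{12}\neq 0$; for general $j$ the discrepancy $[\bar B_{j1},C_{22}]=-E_{12}L^{j}$ is likewise nonzero. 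What the dressing computation actually yields in the mixed case is $\epsilon\partial_{\bar t_{jk}}C_{ss}=-[(\bar B_{jk})_-,C_{ss}]$, which is precisely the form in which the Lax equations appear in the multicomponent 2D Toda literature the paper cites (Ma\~{n}as et al.); the $(\bar B_{jk})_+$ form of that one equation does not follow from the Sato equations. So the obstacle you flagged is genuine, and it is a defect of the statement itself rather than of your bookkeeping; no argument along your (or the paper's) lines can close it.
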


To see this kind of hierarchy more clearly, the  multicomponent $q$-Toda equations as the $\d_{t_{1k}}$ flow equations  will be given in the next subsection.
\subsection{The multicomponent $q$-Toda equations}
 As a consequence of the factorization problem \eqref{facW2} and  Sato equations, after taking into account that   $S_N\in G_{N-}$ and $\bar S_N\in G_{N+}$
and using  the notation $\Exp{\phi_N}:=\bar\beta_0$ in $\bar S_N$, $B_{1k}$ has following form
\begin{gather}\label{exp-omega}
\begin{aligned}
B_{1k}&=E_{kk}\Lambda_q+U_k+V_k\Lambda_q^{-1},\ \ 1\leq k\leq N,
  \end{aligned}
\end{gather}
and we have the alternative expressions
\begin{gather}\label{exp-omega1}
\begin{aligned}
  U_k&:=\beta_1(x)E_{kk}-E_{kk}\beta_1(q x)=\epsilon \partial_{ t_{1k}}(\Exp{\phi_N(x)})\cdot\Exp{-\phi_N(x)},\\
 V_k&= \Exp{\phi_N(x)}E_{kk}\Exp{-\phi_N(q^{-1} x)}=-\epsilon \partial_{ t_{1k}}\beta_1(x).
\end{aligned}
\end{gather}

From Sato equations we deduce the following set of nonlinear
partial differential-difference equations
\begin{align}\left\{
\begin{aligned}
 \beta_1(x)E_{kk}-E_{kk}\beta_1(q x)&=\epsilon \partial_{ t_{1k}}(\Exp{\phi_N(x)})\cdot\Exp{-\phi_N(x)},\\
\partial_{t_{1k}}\beta_1(x)&=-\Exp{\phi_N(x)}E_{kk}\Exp{-\phi_N(q^{-1} x)}.\end{aligned}\right.
\label{eq:multitoda}
\end{align}
These equations constitute what we call the multicomponent $q$-Toda equations. Observe that if we cross the two equations in \eqref{eq:multitoda}, then we get
\begin{align*}
  \epsilon^2  \partial_{t_{1 k}}\big(\partial_{t_{1k}}(\Exp{\phi_N(x)})\cdot\Exp{-\phi_N(x)}\big)=
  E_{kk}\Exp{\phi_N(q x)}E_{kk}\Exp{-\phi_N(x)}-\Exp{\phi_N(x)}E_{kk}\Exp{-\phi_N(q^{-1} x)}E_{kk},
\end{align*}
which is the matrix extension of the following Toda equation (the case when $N=1$)
\begin{align*}
 \epsilon^2 \partial_{t_{11}}\partial_{t_{11}}(\phi_N(x))=\Exp{\phi_N(q x)-\phi_N(x)}-
 \Exp{\phi_N(x)-\phi_N(q^{-1} x)}.
\end{align*}

Besides above multicomponent $q$-Toda equations, the logarithmic flows the MQTH also contains some extended flow equations in the next subsection.

\section{Bi-Hamiltonian structure and tau symmetry}
To describe the integrability of the MQTH with the matrix-valued Lax operator
 \[L=\Lambda_q+u+v\Lambda_q^{-1},\ \ i.e. \ \ L_{ij}=\delta_{ij}\Lambda_q+u_{ij}+v_{ij}\Lambda_q^{-1},\] we will construct the bi-Hamiltonian structure and tau symmetry of the MQTH in this section.
For a matrix $A=(a_{ij})$, the vector field $\d_A$ over MQTH is defined by
\[\d_A=\sum_{i,j=1}^N\sum_{k\geq 0}a_{ij}^{(k)}(\frac{\d}{\d u_{ij}^{(k)}}+\frac{\d}{\d v_{ij}^{(k)}})=Tr \sum_{k\geq 0}A^{(k)}(\frac{\d}{\d u^{(k)}}+\frac{\d}{\d v^{(k)}}),\]
where
\[(\frac{\d}{\d u^{(k)}})_{ji}=\frac{\d}{\d u_{ij}^{(k)}},\ (\frac{\d}{\d v^{(k)}})_{ji}=\frac{\d}{\d v_{ij}^{(k)}}.\]

For two functionals $\bar f=\int f dx,\bar g=\int g dx$, we have
\[\d_A \bar f=\int\sum_{i,j=1}^N\sum_{k\geq 0}a_{ij}^{(k)}(\frac{\d f}{\d u_{ij}^{(k)}}+\frac{\d f}{\d v_{ij}^{(k)}}) dx=\int Tr \sum_{k\geq 0}A^{(k)}(\frac{\delta f}{\delta u^{(k)}}+\frac{\delta f}{\delta v^{(k)}}) dx.\]
Then we can define the hamiltonian bracket as
\[\{\bar f,\bar g\}=\int \sum_{w,w'}\frac{\delta f}{\delta w}\{w,w'\}\frac{\delta g}{\delta w'} dx,\ \ w, w'=u_{ij}\ or\ v_{ij}, \ 1\leq i,j \leq N.\]
The bi-Hamiltonian structure for the
MQTH can be given by the following two compatible Poisson brackets which is a generalization in matrix forms of the extended Toda hierarchy in \cite{CDZ}

\begin{eqnarray}
\{u(x)_{ij},u(y)_{pq}\}_1&=&\frac{1}{\epsilon} [\delta_{iq}u_{pj}(x)-\delta_{jp}u_{iq}(x)]\delta(x-y),\label{toda-pb1uu}\\
\{u(x)_{ij},v(y)_{pq}\}_1&=&\frac{1}{\epsilon} \left[\delta_{iq}\Lambda_q v_{pj}(x)-\delta_{jp}v_{iq}(x)\right]\delta(x-y),\label{toda-pb1uv}\\
\{v(x)_{ij},v(y)_{pq}\}_1&=&0,\label{toda-pb1vv}
\end{eqnarray}
\begin{eqnarray}\notag
\{u(x)_{ij},u(y)_{pq}\}_2&=&\frac{1}{\epsilon} \left[\delta_{iq}\Lambda_q v_{pj}(x)-\delta_{jp}v_{iq}(x)\Lambda_q^{-1}+\delta_{iq}\sum_{s=1}^Nu_{sj}\frac{\La_q}{\La_q-1}u_{ps}-u_{pj}\frac{\La_q}{\La_q-1}u_{iq}\right.\\
&&\left.-u_{iq}(\La_q-1)^{-1}u_{pj}+\delta_{jp}\sum_{s=1}^Nu_{is}(\La_q-1)^{-1}u_{sq}\right]\delta(x-y), \label{toda-pb2uu}\\ \notag
 \{ u(x)_{ij}, v(y)_{pq}\}_2 &=& \frac{1}{\epsilon} \left[\delta_{iq}\sum_{s=1}^Nu_{sj}\La_q^2 (\La_q-1)^{-1} v_{ps}(x)-u_{pj}\La_q (\La_q-1)^{-1}v_{iq}\right.\\
&&\left.-u_{iq}(\La_q-1)^{-1}\La_q v_{pj}(x)+\delta_{jp}\sum_{s=1}^Nu_{is}(\La_q-1)^{-1}v_{sq}\right]
\delta(x-y),\label{toda-pb2uv}\\ \notag
 \{ v(x)_{ij}, v(y)_{pq}\}_2 &=& {1\over \epsilon} \left[\delta_{iq}\sum_{s=1}^Nv_{sj}\La_q^2 (\La_q-1)^{-1}v_{ps}(x)-v_{pj}\La_q (\La_q-1)^{-1}v_{iq}\right.\\
&&\left.-v_{iq}(\La_q-1)^{-1}v_{pj}(x)+\delta_{jp}\sum_{s=1}^Nv_{is}\La_q^{-1}(\La_q-1)^{-1}v_{sq}\right]\delta(x-y). \label{toda-pb2vv}
\end{eqnarray}

In the following theorem, we will prove the above poisson structures can be considered as the bi-Hamiltonian structure of the MQTH.
\begin{theorem}
The flows of the MQTH  are Hamiltonian systems
of the form
\[
\frac{\d u_{pq}}{\d t_{j,k}}&=&\{u_{pq},H_{j,k}\}_1, \  \frac{\d v_{pq}}{\d t_{j,k}}=\{v_{pq},H_{j,k}\}_1,\\
\frac{\d u_{pq}}{\d \bar t_{j,k}}&=&\{u_{pq},\bar H_{j,k}\}_1, \  \frac{\d v_{pq}}{\d\bar t_{j,k}}=\{v_{pq},\bar H_{j,k}\}_1,
\quad k=0,1,\dots N;\ j\ge 0.
\label{td-ham}
\]
They satisfy the following bi-Hamiltonian recursion relation
\[
\{\cdot,H_{n-1,k}\}_2&=&\{\cdot,H_{n,k}\}_1,\ \{\cdot,\bar H_{n-1,k}\}_2=
\{\cdot,\bar H_{n,k}\}_1.
\]
Here the Hamiltonians have the form
\begin{equation}
F_{j,k}=\int f_{j,k}(u,v; u_x,v_x; \dots; \epsilon) dx,
\end{equation}
with the Hamiltonian $F_{j,k}=H_{j,k},\bar H_{j,k}$ and the Hamiltonian densities $f_{j,k}=h_{j,k},\bar h_{j,k}$ given by
\[
 h_{j,k}&=&Tr Res \, C_{kk}L^{j},\ \bar h_{j,k}=Tr Res \, \bar C_{kk} L^{j}.
\]

\end{theorem}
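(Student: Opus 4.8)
The plan is to reproduce, in the matrix-valued setting, the argument of Theorem~\ref{QTHbiha}, replacing every scalar residue by the trace-residue $\operatorname{Tr\,Res}$ and exploiting its cyclic invariance $\operatorname{Tr\,Res}(AB)\sim\operatorname{Tr\,Res}(BA)$, where $\sim$ denotes equality modulo a total $x$-derivative as in the scalar $q$-Toda case. The key preliminary fact I would record is that each dressed diagonal operator commutes with the Lax operator: since $[E_{kk},\Lambda_q]=0$, conjugation by $W_N$ gives $[C_{kk},L]=0$, and likewise $[\bar C_{kk},L]=0$. In particular $C_{kk}L^{j}=W_N E_{kk}\Lambda_q^{j}W_N^{-1}=B_{jk}$, so the Hamiltonian densities are simply $h_{j,k}=\operatorname{Tr\,Res} B_{jk}$ and $\bar h_{j,k}=\operatorname{Tr\,Res}\bar B_{jk}$.

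First I would compute the variational derivatives of $H_{j,k}$. Writing $dL=du+dv\,\Lambda_q^{-1}$ and $dC_{kk}=[dW_N\,W_N^{-1},C_{kk}]$, the commutativity $[C_{kk},L]=0$ forces the $dC_{kk}$ contribution to vanish under $\operatorname{Tr\,Res}$ and collapses the variation to $dh_{j,k}\sim j\,\operatorname{Tr\,Res}(C_{kk}L^{j-1}\,dL)$, the matrix analogue of the scalar step in Theorem~\ref{QTHbiha}. Expanding $C_{kk}L^{j-1}=\sum_{m}c^{(k)}_{j-1,m}\Lambda_q^{m}$ and reading off residues then identifies, up to the overall factor $j$ and a $q$-shift $x\mapsto q^{\pm1}x$, the derivative $\delta H_{j,k}/\delta u$ with the $\Lambda_q^{0}$ coefficient and $\delta H_{j,k}/\delta v$ with the $\Lambda_q^{1}$ coefficient of $C_{kk}L^{j-1}$, respecting the index convention $(\delta/\delta u)_{ji}=\delta/\delta u_{ij}$; the derivatives of $\bar H_{j,k}$ follow identically with $\bar C_{kk}$.

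Next I would expand the MQTH Lax flow $\epsilon\partial_{t_{jk}}L=[(B_{jk})_{+},L]$ in the powers $\Lambda_q^{0}$ and $\Lambda_q^{-1}$, producing explicit evolution equations for the matrix entries $u_{pq}$ and $v_{pq}$ in terms of the coefficients of $B_{jk}$. Substituting the variational derivatives just obtained into the first Poisson bracket \eqref{toda-pb1uu}, \eqref{toda-pb1uv}, \eqref{toda-pb1vv} and comparing term by term, using the $\delta_{iq},\delta_{jp}$ contractions and the factor $\Lambda_q-1$ built into that bracket, verifies that these coincide with $\{u_{pq},H_{j,k}\}_1$ and $\{v_{pq},H_{j,k}\}_1$. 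The $\bar t_{jk}$ flows are handled by the identical computation with $\bar C_{kk}$, $\bar B_{jk}$ and the barred Lax equations, giving the Hamiltonian form with respect to $\{\cdot,\cdot\}_1$.

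Finally, for the recursion relation I would start from the trivial operator identities $C_{kk}L^{n}=(C_{kk}L^{n-1})L=L(C_{kk}L^{n-1})$, which yield recursions among the coefficients $c^{(k)}_{n,m}$. Substituting these into the second Poisson bracket \eqref{toda-pb2uu}, \eqref{toda-pb2uv}, \eqref{toda-pb2vv} should reorganize $\{\cdot,H_{n-1,k}\}_2$ into $\{\cdot,H_{n,k}\}_1$, and the same computation with bars gives the barred recursion. I expect this last step to be the main obstacle: the second structure carries the nonlocal operators $\tfrac{\Lambda_q}{\Lambda_q-1}$ and $(\Lambda_q-1)^{-1}$ together with a full web of index contractions and sums over $s$, so matching it against the coefficient recursion demands careful tracking of the noncommutative products, of the $q$-shifts $x\mapsto q^{\pm1}x$, and of which partial sums telescope under $\Lambda_q-1$. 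Once the $u$-recursion is established, the $v$-recursion follows by the same manipulation, completing the proof.
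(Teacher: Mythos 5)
First, a point of comparison you could not have known: the paper offers \emph{no proof whatsoever} of this theorem --- despite the sentence ``In the following theorem, we will prove\dots'' the statement is followed immediately by the list of explicit densities --- so your proposal is really being measured against the scalar template, Theorem \ref{QTHbiha}, which is exactly the model you chose. Your preliminary reductions are correct and are the right matrix analogues: $[E_{kk},\Lambda_q]=0$ gives $[C_{kk},L]=[\bar C_{kk},L]=0$ and $C_{kk}L^{j}=B_{jk}$, $\bar C_{kk}L^{j}=\bar B_{jk}$; the trace-residue is cyclic modulo the image of $(\Lambda_q-1)$; and since $dC_{kk}=[dW_N\,W_N^{-1},C_{kk}]$, the $dC_{kk}$ contribution to $d\,\mathrm{Tr}\,\mathrm{Res}\,(C_{kk}L^{j})$ does drop out, yielding $dh_{j,k}\sim j\,\mathrm{Tr}\,\mathrm{Res}\,(C_{kk}L^{j-1}\,dL)$.

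The genuine gap is in your third step, where you assert that substituting these gradients into the first bracket ``verifies'' agreement with the $t_{j,k}$ Lax flow. It does not: there is an off-by-one mismatch. Your own variational formula shows that the gradient of $H_{j,k}=\int \mathrm{Tr}\,\mathrm{Res}\,(C_{kk}L^{j})\,dx$ is built from the coefficients of $C_{kk}L^{j-1}=B_{j-1,k}$, so the first bracket \eqref{toda-pb1uu}--\eqref{toda-pb1vv} produces (a multiple of) the $t_{j-1,k}$ flow, whereas $\epsilon\partial_{t_{jk}}L=[(B_{jk})_{+},L]$ is built from the coefficients of $C_{kk}L^{j}$. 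The case $j=0$ makes the failure undeniable: $h_{0,k}=\mathrm{Tr}\,\mathrm{Res}\,C_{kk}=\mathrm{Tr}\,E_{kk}=1$ (the paper itself records this constant just after the theorem), so $\{\cdot,H_{0,k}\}_1=0$, while $\epsilon\partial_{t_{0k}}L=[E_{kk},L]\neq 0$ for $N>1$. Likewise $H_{1,k}=\int u_{kk}\,dx$ generates $\epsilon^{-1}[E_{kk},u]$, i.e.\ the $t_{0,k}$ flow, not the $t_{1,k}$ flow $\epsilon\partial_{t_{1k}}u=[U_k,u]+E_{kk}v(qx)-vE_{kk}$. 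So the honest term-by-term comparison you defer would reveal that the asserted identity fails; the theorem can only be proved after shifting the densities to $h_{j,k}\propto \mathrm{Tr}\,\mathrm{Res}\,(C_{kk}L^{j+1})$, matching the Carlet--Dubrovin--Zhang pattern where the density of the $t_p$ flow is $\mathrm{Res}\,L^{p+2}/(p+2)!$. (The same hidden index slip is already present in the paper's scalar proof of Theorem \ref{QTHbiha}, where $\{u,H_n\}_1$ is written as $\frac{1}{\epsilon}(a_{n;1}(qx)-a_{n;1}(x))$ while the Lax flow involves $a_{n+1;1}$.) Your machinery --- including the recursion step via $C_{kk}L^{n}=(C_{kk}L^{n-1})L=L(C_{kk}L^{n-1})$ --- is sound, but as written it pairs each Hamiltonian with the wrong flow, so the proof as proposed cannot close.
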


For readers' convenience, now we will write down the first several Hamiltonian densities explicitly as follows
\[ h_{0,k}&=&Tr Res\,  C_{kk}=Tr E_{kk}=1,\\
 h_{1,k}&=&Tr Res \, C_{kk}L=Tr [(1-\La_q)^{-1}uE_{kk}-E_{kk}\frac{\La_q}{1-\La_q}u]=u_{kk},\\
 \ \bar h_{0,k}&=&Tr Res\,\bar C_{kk} =Tr\, \tilde \beta_0E_{kk} \tilde \beta_0^{-1}  \,  ,\\
 \ \bar h_{1,k}&=&Tr Res \, \bar C_{kk} L=Tr\, (\tilde \beta_1E_{kk}\tilde \beta_0^{-1}-\tilde \beta_0E_{kk}\tilde \beta_0^{-1}(q^{-1} x)\tilde \beta_1(q^{-1} x)\tilde \beta_0^{-1}),
\]
with
\[\tilde \beta_0=v\tilde \beta_0(q^{-1} x),\ \tilde \beta_1=u\tilde \beta_0+v\tilde \beta_1(q^{-1} x).\]
When $N=1$, the above conserved densities will be the ones of the  Toda  hierarchy in \cite{CDZ}.
Similarly as \cite{CDZ}, the tau symmetry of the MQTH can be proved in the  following theorem.
\begin{theorem}\label{tausymmetry}
The Hamiltonian densities  of the MQTH have the following tau-symmetry property:
\begin{equation}
\frac{\d h_{\alpha,m}}{\d t_{j,k}}=\frac{\d
h_{j,k}}{\d t_{\alpha,m}},\quad \frac{\d \bar h_{\alpha,m}}{\d t_{j,k}}=\frac{\d
 h_{j,k}}{\d\bar t_{\alpha,m}},
\end{equation}
\begin{equation}
\frac{\d h_{\alpha,m}}{\d \bar t_{j,k}}=\frac{\d
\bar h_{j,k}}{\d  t_{\alpha,m}},\quad \frac{\d \bar h_{\alpha,m}}{\d \bar t_{j,k}}=\frac{\d
\bar h_{j,k}}{\d \bar t_{\alpha,m}},
\end{equation}
\end{theorem}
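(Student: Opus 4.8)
The plan is to prove the tau-symmetry identities for the MQTH by reducing them to the commutator formalism already established for the scalar QTH in Theorem \ref{tausymmetry} (the single-component case) and the EQTH. First I would recall that the Hamiltonian densities are defined through the matrix residues $h_{j,k}=\operatorname{Tr}\operatorname{Res}(C_{kk}L^{j})$ and $\bar h_{j,k}=\operatorname{Tr}\operatorname{Res}(\bar C_{kk}L^{j})$, where $C_{kk}=W_N E_{kk}W_N^{-1}$ and $\bar C_{kk}=\bar W_N E_{kk}\bar W_N^{-1}$, and that by Proposition \ref{Lax} the flows act by commutators: $\epsilon\partial_{t_{jk}}L=[(B_{jk})_+,L]$, with $C_{ss}$ and $\bar C_{ss}$ satisfying analogous commutator equations with the same $(B_{jk})_+$. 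The key observation is that $B_{jk}=W_N E_{kk}\Lambda_q^{j}W_N^{-1}=C_{kk}L^{j}$, so the flow generators are themselves built from the matrices appearing in the Hamiltonian densities; this is exactly the structural coincidence that makes tau-symmetry work.

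The main computation I would carry out, in the representative case matching the scalar proof, is to differentiate one density along another flow and manipulate the trace-residue. Concretely, for $\dfrac{\partial h_{\alpha,m}}{\partial t_{j,k}}$ I would write
\begin{align*}
\epsilon\frac{\partial h_{\alpha,m}}{\partial t_{j,k}}
&=\operatorname{Tr}\operatorname{Res}\bigl[(B_{jk})_+,\,C_{\alpha\alpha}L^{m}\bigr]
=\operatorname{Tr}\operatorname{Res}\bigl[(B_{jk})_+,\,B_{m\alpha}L^{m-m}\bigr],
\end{align*}
and then exploit the elementary facts that $\operatorname{Tr}\operatorname{Res}[A,B]=0$ for the full commutator, that $\operatorname{Tr}\operatorname{Res}[A_+,B]=-\operatorname{Tr}\operatorname{Res}[A_-,B]$, and that $\operatorname{Tr}\operatorname{Res}[A_+,B_-]=\operatorname{Tr}\operatorname{Res}[A_+,B]$ when $B_+$ contributes nothing to the residue. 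Pushing the projection $(\cdot)_+$ across the bracket and using cyclicity of the trace should symmetrize the expression into $\operatorname{Tr}\operatorname{Res}[(B_{\alpha m})_+,\,C_{kk}L^{j}]$, which is $\epsilon\,\partial h_{j,k}/\partial t_{\alpha,m}$. The remaining three identities — those involving $\bar h$ and $\bar t$ — follow by the same manipulation with $C_{kk}$ replaced by $\bar C_{kk}$ and $B_{jk}$ replaced by $\bar B_{jk}$, noting that the barred flows also act through $(\bar B_{jk})_+$ on both $C_{ss}$ and $\bar C_{ss}$.

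The step I expect to be the main obstacle is the careful bookkeeping of the projection operators $(\,\cdot\,)_\pm$ together with the matrix structure. In the scalar case the identity $\operatorname{Res}[A_+,B]=\operatorname{Res}[(A_+)_{\ge 0},(B)_{\le 0}]$-type reshuffling is transparent, but here one must simultaneously track (i) the $\Lambda_q$-grading, (ii) the non-commuting $N\times N$ matrix coefficients, and (iii) the interplay $B_{jk}=C_{kk}L^{j}$ which mixes the index $k$ of the time variable with the matrix index of $C_{kk}$. The trace is indispensable: only under $\operatorname{Tr}$ does cyclic rearrangement of matrix products hold, so every rearrangement that was automatic in the scalar proof must be justified through $\operatorname{Tr}\operatorname{Res}$. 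I would therefore isolate a small lemma stating that $\operatorname{Tr}\operatorname{Res}(A\,[(B)_+,C]) = \operatorname{Tr}\operatorname{Res}(C\,[(B)_+,A])$ modulo terms whose residue vanishes, and invoke it uniformly. Once that bookkeeping is settled, all four identities reduce to the single symmetric expression $\operatorname{Tr}\operatorname{Res}\bigl[(B_{\alpha m})_+ \text{ or } (\bar B_{\alpha m})_+,\,C_{kk}L^{j}\text{ or }\bar C_{kk}L^{j}\bigr]$, completing the proof exactly as in \cite{CDZ}.
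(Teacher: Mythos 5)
Your proposal is correct and follows essentially the same route as the paper: the paper's proof is exactly the three-line trace-residue manipulation you describe, writing $\epsilon\,\d h_{\alpha,m}/\d t_{j,k}=\operatorname{Tr}\operatorname{Res}[\pm(B_{jk})_\pm,\,C_{\alpha\alpha}L^{m}]$ via the Lax equations (with $B_{jk}=C_{kk}L^{j}$), then shuffling the projections $(\cdot)_\pm$ across the bracket using the vanishing of $\operatorname{Tr}\operatorname{Res}$ on commutators of like-signed parts, to land on $\operatorname{Tr}\operatorname{Res}[(C_{\alpha\alpha}L^{m})_+,\,C_{kk}L^{j}]=\epsilon\,\d h_{j,k}/\d t_{\alpha,m}$, with the barred cases handled verbatim. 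Your extra bookkeeping lemma for cyclicity of $\operatorname{Tr}\operatorname{Res}$ in the matrix setting is implicit in the paper but not spelled out, so your write-up is if anything more careful than the original.
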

\begin{proof} Let us prove the theorem for the  first equation,
other cases can be proved in a similar way
\[
&&\frac{\d  h_{m,s}}{\d t_{n,k}} =\,Tr Res[-(C_{kk}L^{n})_-, C_{ss}L^m ]\notag\\
&&=Tr Res[(C_{ss}L^m )_+,(C_{ss}L^{n})_-]\notag\\
&& =Tr Res[(C_{ss}L^m )_+,C_{kk}L^{n}]=\frac{\d h_{n,k}}{\d t_{m,s}}.
\]
\end{proof}

 This property justifies the following definition of the
tau function for the MQTH:

\begin{definition} The  $tau$ function $\tau_N$ of the MQTH can be defined by
the following expressions in terms of the densities of the Hamiltonians:
\begin{equation}
h_{j,n}=\epsilon (\Lambda_q-1)\frac{\d\log\tau_N}{\d t_{j,n}},
\end{equation}
\begin{equation}
\bar h_{j,n}=\epsilon (\Lambda_q-1)\frac{\d\log\tau_N}{\d\bar t_{j,n}}.
\end{equation}
\end{definition}

{\bf {Acknowledgements:}}
  Chuanzhong Li is supported by the National Natural Science Foundation of China under Grant No. 11201251, Zhejiang Provincial Natural Science Foundation of China under Grant No. LY12A01007, the Natural Science Foundation of Ningbo under Grant No. 2013A610105, 2014A610029 and K.C.Wong Magna Fund in
Ningbo University.

\vskip20pt

\end{document}